\tikzset{LMC style/.style={>=stealth',every edge/.append style={thick},every state/.style={minimum size=18,inner sep=0}}}
\def\eqdef{\stackrel{\text{def}}{=}}
\newcommand{\A}{\mathcal{A}}
\newcommand{\AF}{\overrightarrow{\mathcal{A}}}
\newcommand{\alphaF}{\overrightarrow{\alpha}}
\newcommand{\B}{\mathcal{B}}
\newcommand{\Bin}{\mathbb{B}}
\newcommand{\F}{\mathbb{F}}
\newcommand{\kA}{\mathcal{A}_k}
\newcommand{\MF}{\overrightarrow{M}}
\newcommand{\N}{\mathbb{N}}
\newcommand{\Q}{\mathbb{Q}}
\newcommand{\QF}{\overrightarrow{Q}}
\newcommand{\R}{\mathbb{R}}
\DeclareMathOperator{\rank}{rank}
\newcommand{\VV}{\mathsf{V}}
\newcommand{\Z}{\mathbb{Z}}
\DeclareMathOperator{\prob}{\mathit{Pr}}
\DeclareMathOperator{\expect}{\mathbb{E}}
\newcommand{\paths}{\mathit{Paths}}
\DeclareMathOperator{\kdis}{\mathnormal{k}-dis}
\newcommand{\pow}{\mathcal{P}}
\begin{document}

\title
{Image-Binary Automata}

\author{Stefan Kiefer \and
	Cas Widdershoven}
	
\institute{Department of Computer Science,
	University of Oxford}

\authorrunning{S.~Kiefer and C.~Widdershoven} 

\sloppy


\maketitle              

\begin{abstract}
We introduce a certain restriction of weighted automata over the rationals, called \emph{image-binary automata}. We show that such automata accept the regular languages, can be exponentially more succinct than corresponding NFAs, and allow for polynomial complementation, union, and intersection.
This compares favourably with unambiguous automata whose complementation requires a superpolynomial state blowup.
We also study an infinite-word version, \emph{image-binary B\"uchi automata}.
We show that such automata are amenable to probabilistic model checking, similarly to unambiguous B\"uchi automata.
We provide algorithms to translate $k$-ambiguous B\"uchi automata to image-binary B\"uchi automata, leading to model-checking algorithms with optimal computational complexity.
\end{abstract}

\section{Introduction} \label{sec-intro}

A weighted automaton assigns weights to words; i.e., it defines mappings of the form $f : \Sigma^* \to D$, where $D$ is some domain of weights.
Weighted automata are well-studied. Many variations have been discussed, such as max-plus automata~\cite{droste09} and probabilistic automata \cite{rabin63,paz14}, both over finite words and over infinite words, in the latter case often combined with $\omega$-valuation monoids~\cite{droste11}. However, it has been shown that many natural questions are undecidable for many kinds of weighted automata~\cite{fijalkow17,almagor20}, including inclusion and equivalence.
These problems become decidable for finitely ambiguous weighted automata~\cite{filiot14}.

In this paper we consider only numerical weights, where $D$ is a subfield of the reals.
A \emph{language} $L \subseteq \Sigma^*$ can be identified with its characteristic function $\chi_L : \Sigma^* \to \{0,1\}$.
We explore weighted automata that encode characteristic functions of languages, i.e., weighted automata that map each word either to $0$ or to~$1$.
We call such automata \emph{image-binary finite automata (IFAs)} and view them as acceptors of languages $L \subseteq \Sigma^*$.
We do not require, however, that individual transitions have weight $0$ or~$1$.
This makes IFAs a ``semantic'' class: it may not be obvious from the transition weights whether a given weighted automaton over, say, the rationals is image-binary.
However, we will see that it can be checked efficiently whether a given $\Q$-weighted automaton is an IFA (\Cref{thm-check-image-binary}).

An immediate question is on the expressive power of IFAs.
Deterministic finite automata (DFAs) can be viewed as IFAs.
On the other hand, in \Cref{sub-regularity} we show that all languages accepted by IFAs are regular.
It follows that IFAs accept exactly the regular languages.
Moreover, IFAs are efficiently closed under Boolean operations; i.e., given two IFAs that accept $L_1, L_2$, respectively, one can compute in polynomial time IFAs accepting $L_1 \cup L_2$, $L_1 \cap L_2$, and $\Sigma^* \setminus L_1$.

The latter feature, efficient closure under complement, might be viewed as a key advantage of IFAs over unambiguous finite automata (UFAs).
UFAs are nondeterministic finite automata (NFAs) such that every word has either zero or one accepting runs.
UFAs can be viewed as a special case of IFAs.
Whereas we show that IFAs can be complemented in polynomial time, UFAs are known to be not polynomially closed under complement~\cite{Raskin18}.

The next question is then on the succinctness of IFAs, and on the complexity of converting other types of finite automata to IFAs and vice versa.
We study such questions in \Cref{sub-succinctness}.
In \Cref{sub-mod-2-automata}, we also study the relationship of IFAs to \emph{mod-2 multiplicity automata}, which are weighted automata over~$\mathit{GF}(2)$, the field $\{0,1\}$ where $1+1=0$.
Such automata~\cite{angluin2020} share various features with IFAs, in particular efficient closure under complement.

In the second part of the paper we put IFAs ``to work''.
Specifically, we consider an infinite-word version, which we call \emph{image-binary B\"uchi automata (IBAs)}.
Following the theme that image-binary automata naturally generalise and relax unambiguous automata, we show that IBAs can be used for model checking Markov chains in essentially the same way as \emph{unambiguous B\"uchi automata (UBAs)}~\cite{cav16full}. 
Specifically, we show in \Cref{sub-mc-iba} that given an IBA and a Markov chain, one can compute in NC (hence in polynomial time) the probability that a random word produced by the Markov chain is accepted by the IBA.

It was shown in~\cite{loding18} that a nondeterministic B\"uchi automaton (NBA) with $n$~states can be converted to an NBA with at most $3^n$~states whose ambiguity is bounded by~$n$.
Known conversions from NBAs to UBAs have a state blowup of roughly $n^n$, see, e.g., \cite{KarmarkarJC13}.
We show in~\Cref{sub-model-kaba} that NBAs with logarithmic ambiguity (as produced by the construction from~\cite{loding18}) can be converted to IBAs in polylogarithmic space.
This suggests that in order to translate NBAs into an automaton model suitable for probabilistic model checking (such as IBAs), it is reasonable to first employ the partial disambiguation procedure from~\cite{loding18} (which does most of the work).
More specifically, by combining the partial disambiguation procedure from~\cite{loding18} with our translation to an IBA, we obtain a PSPACE transducer (i.e., a Turing machine whose work tape is polynomially bounded) that translates an NBA into an IBA.
For example, combining that with the mentioned probabilistic model checking procedure for IBAs we obtain an (optimal) PSPACE procedure for model checking Markov chains against NBA specifications.

\section{Image-Binary Finite Automata} \label{sec-IFA}

\subsection{Definitions}\label{sub-ifa-defs}

Let $\F$ be one of the fields $\Q$ or~$\R$ (with ordinary addition and multiplication).
An \emph{$\F$-weighted automaton} $\A = (Q, \Sigma, M, \alpha, \eta)$ consists of a set of states~$Q$,
 a finite alphabet~$\Sigma$,
 a map $M : \Sigma \to \F^{Q \times Q}$, an initial (row) vector $\alpha \in \F^Q$, and a final (column) vector $\eta \in \F^Q$.
Extend $M$ to $\Sigma^*$ by setting $M(a_1 \cdots a_k) \eqdef M(a_1) \cdots M(a_k)$.
The \emph{language}~$L_\A$ of an automaton~$\A$ is the map $L_\A : \Sigma^* \to \F$ with $L_\A(w) = \alpha M(w) \eta$.
Automata $\A, \B$ over the same alphabet~$\Sigma$ are said to be \emph{equivalent} if $L_\A = L_\B$.

Let $\A = (Q, \Sigma, M, \alpha, \eta)$ be a $\Q$-weighted automaton.
We call $\A$ an \emph{image-binary (weighted) finite automaton (IFA)} if $L_\A(\Sigma^*) \subseteq \{0,1\}$, i.e., $L_\A(w) \in \{0,1\}$ holds for all $w \in \Sigma^*$.
An $\R$-IFA is defined like an IFA, but with $\Q$ replaced by~$\R$.
An ($\R$)-IFA~$\A$ defines a language $L(\A) := \{w \in \Sigma^* \mid L_\A(w) = 1\}$.
Note that we call both $L_\A$ and $L(\A)$ the language of~$A$; strictly speaking, the former is the characteristic function of the latter.

If an IFA $\A = (Q, \Sigma, M, \alpha, \eta)$ is such that $\alpha \in \{0,1\}^Q$ and $\eta \in \{0,1\}^Q$ and $M(a) \in \{0,1\}^{Q \times Q}$ for all $a \in \Sigma$, then $\A$ is called an \emph{unambiguous finite automaton (UFA)}.
Note that this definition of a UFA is essentially equivalent to the classical one, which says that a UFA is an NFA (nondeterministic finite automaton) where each word has at most~$1$ accepting run.
Similarly, a \emph{deterministic finite automaton (DFA)} is essentially a special case of a UFA, and hence of an IFA.

\begin{example} \label{ex-UFA-IFA}
\Cref{fig-ex-IFA} shows an IFA and a UFA in a graphical notation.
Formally, the IFA on the left is $\A = (Q_\A, \Sigma, M_\A, \alpha_\A, \eta_\A)$ with $Q = \{1,2,3\}$ and $\Sigma = \{a,b\}$ and
\[
M_\A(a) \ = \ \begin{pmatrix}
-1 & 1 & 0 \\
0  & 0 & 1 \\
0  & 0 & 1
\end{pmatrix}
\quad \text{and} \quad
M_\A(b) \ = \ \begin{pmatrix}
0 & 0 & 0 \\
0 & 0 & 0 \\
0 & 0 & 1
\end{pmatrix}
\]
and $\alpha_\A = \begin{pmatrix} 1 & 0 & 0 \end{pmatrix}$ and $\eta_\A = \begin{pmatrix} 0 & 0 & 1 \end{pmatrix}^T$.
Both automata recognise the language of words that start in an even (positive) number of $a$s.
\end{example}

\begin{figure}
\begin{center}
\resizebox{0.8\textwidth}{!}{%
\begin{tikzpicture}[scale=2.5,LMC style]
\node[state] (1) at (0,0) {$1$};
\node[state] (2) at (1,0) {$2$};
\node[state,accepting] (3) at (2,0) {$3$};
\path[->] (-0.5,0) edge (1);
\path[->] (1) edge [loop,out=70,in=110,looseness=10] node[above] {$-1 a$} (1);
\path[->] (1) edge node[above] {$a$} (2);
\path[->] (2) edge node[above] {$a$} (3);
\path[->] (3) edge [loop,out=70,in=110,looseness=10] node[above] {$a,b$} (3);

\node (a) at (0.75,-1) {(a)};

\node[state] (q1) at (3.5,0.5) {$1$};
\node[state] (q2) at (4.5,0.5) {$2$};
\node[state,accepting] (q3) at (4,-0.2) {$3$};
\path[->] (3,0.5) edge (q1);
\path[->] (q1) edge[bend left=20] node[above] {$a$} (q2);
\path[->] (q2) edge[bend left=20] node[above] {$a$} (q1);
\path[->] (q2) edge[bend left=20] node[right] {$a$} (q3);
\path[->] (q3) edge[bend left=20] node[left] {$b$} (q2);
\path[->] (q3) edge node[left,pos=0.4] {$b$} (q1);
\path[->] (q3) edge [loop,out=250,in=290,looseness=10] node[below] {$b$} (q3);

\node (b) at (4,-1) {(b)};
\end{tikzpicture}
}
\end{center}
\caption{The IFA in (a) is a forward conjugate of, and hence equivalent to, the UFA in (b). Unless indicated otherwise, edges in (a) have weight 1.}
\label{fig-ex-IFA}
\end{figure}
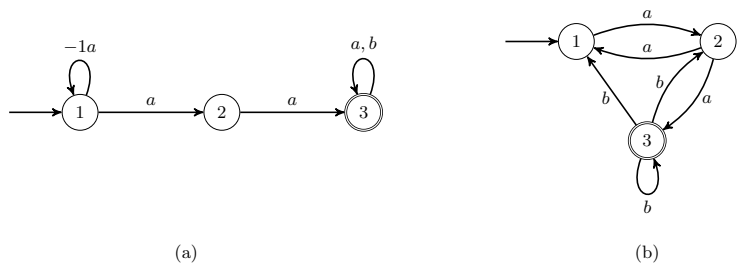

Let $\A = (Q, \Sigma, M, \alpha, \eta)$ be an $\R$-weighted automaton.
We call $\AF := (\QF, \Sigma, \MF, \alphaF, F \eta)$ a \emph{forward conjugate} of~$\A$ with base~$F$ if $F \in \R^{\QF \times Q}$ and $F M(a) = \MF(a) F$ for all $a \in \Sigma$ and $\alpha = \alphaF F$.
Such $\A$ and $\AF$ are equivalent: indeed, let $w \in \Sigma^*$; by induction we have $F M(w) = \MF(w) F$ and hence
\[
 L_\A(w) \ = \ \alpha M(w) \eta \ = \ \alphaF F M(w) \eta \ = \ \alphaF \MF(w) F \eta \ = \ L_{\AF}(w)\,.
\]
A backward conjugate can be defined analogously.
\begin{example} \label{ex-forward-conjugate}
The IFA~$\A$ on the left of \Cref{fig-ex-IFA} is a forward conjugate of the UFA on the right with base
\[
F \ = \ \begin{pmatrix}
1 & 0 & 0 \\
1 & 1 & 0 \\
1 & 1 & 1
\end{pmatrix}\,.
\]
Indeed, we have $\begin{pmatrix} 1 & 0 & 0 \end{pmatrix} F = \begin{pmatrix} 1 & 0 & 0 \end{pmatrix}$ and $\begin{pmatrix} 0 & 0 & 1\end{pmatrix}^T = F \begin{pmatrix} 0 & 0 & 1 \end{pmatrix}^T$, where $\vec{v}^T$ denotes the transpose of a vector $\vec{v}$, and
\[
F \begin{pmatrix} 0 & 1 & 0 \\ 1 & 0 & 1 \\ 0 & 0 & 0 \end{pmatrix} \, = \,
\begin{pmatrix}
0 & 1 & 0 \\ 1 & 1 & 1 \\ 1 & 1 & 1
\end{pmatrix}
\, = \,
\begin{pmatrix}
-1 & 1 & 0 \\
0  & 0 & 1 \\
0  & 0 & 1
\end{pmatrix}
F
\quad
\text{and}
\quad
F \begin{pmatrix} 0 & 0 & 0 \\ 0 & 0 & 0 \\ 1 & 1 & 1\end{pmatrix} \, = \, \begin{pmatrix} 0 & 0 & 0 \\ 0 & 0 & 0 \\ 0 & 0 & 1\end{pmatrix}
F.
\]
\end{example}

For some proofs we need the following definition.
Let $L : \Sigma^* \to \F$, where $\F$ is any field.
Then the \emph{Hankel matrix} of~$L$ is the infinite matrix $H_L \in \F^{\Sigma^* \times \Sigma^*}$ with $H_L[x,y] = L(x y)$.
It was shown by Carlyle and Paz~\cite{carlyle1971realizations} and Fliess~\cite{fliess} that the rank of~$H_L$ is equal to the number of states of the minimal (in number of states) $\F$-weighted automaton~$\A$ with $L_\A = L$.
\begin{proposition}[\cite{carlyle1971realizations,fliess}] \label{prop-Hankel}
Let $L$ be an $\F$-weighted regular language, i.e. a function $L  : \Sigma^* \to \F$ that can be represented by an $\F$-weighted automaton.
Let $\A = (Q, \Sigma, M, \alpha, \eta)$ be a minimal $\F$-weighted automaton such that $L_\A = L$.
Then $\rank H_{L} = |Q|$.
\end{proposition}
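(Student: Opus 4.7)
The plan is to establish the equality in two directions: that $\rank H_L \leq |Q|$ for any $\F$-weighted automaton realizing~$L$, and that conversely one can construct an $\F$-weighted automaton for~$L$ with exactly $\rank H_L$ states. Together these force the minimal automaton to have precisely $\rank H_L$ states.

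For the upper bound $\rank H_L \leq |Q|$, I would factorize the Hankel matrix. Given any $\A = (Q,\Sigma,M,\alpha,\eta)$ with $L_\A = L$, define $P \in \F^{\Sigma^* \times Q}$ by $P[x,\cdot] = \alpha M(x)$ and $S \in \F^{Q \times \Sigma^*}$ by $S[\cdot,y] = M(y)\eta$. Then
\[
H_L[x,y] \;=\; L(xy) \;=\; \alpha M(x) M(y) \eta \;=\; (P S)[x,y],
\]
so $H_L = PS$ and hence $\rank H_L \leq |Q|$. Applying this to the given minimal~$\A$ yields $\rank H_L \leq |Q|$.

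For the matching lower bound, I would build an automaton $\B$ with $r := \rank H_L$ states whose language is~$L$; minimality of~$\A$ then forces $|Q| \leq r$. Choose words $x_1,\dots,x_r \in \Sigma^*$ whose rows $H_L[x_i,\cdot]$ form a basis of the row space of $H_L$. For every $x \in \Sigma^*$, unique scalars $c_1(x),\dots,c_r(x) \in \F$ satisfy $H_L[x,\cdot] = \sum_i c_i(x) H_L[x_i,\cdot]$. Define $\B = (\{1,\dots,r\},\Sigma,M',\alpha',\eta')$ by $\alpha'_i = c_i(\varepsilon)$, $\eta'_i = L(x_i) = H_L[x_i,\varepsilon]$, and $(M'(a))_{ij} = $ the coefficient such that $H_L[x_i a,\cdot] = \sum_j (M'(a))_{ij} H_L[x_j,\cdot]$. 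The key lemma to verify is that for every $x \in \Sigma^*$ one has $\alpha' M'(x)_{\cdot} = (c_1(x),\dots,c_r(x))$; this follows by induction on $|x|$ using the definition of $M'(a)$ and the fact that $H_L[xa,\cdot]$ is determined by $H_L[x,\cdot]$ via the shift $y \mapsto ay$ on columns. Evaluating against $\eta'$ gives $L_\B(x) = \sum_i c_i(x) L(x_i) = H_L[x,\varepsilon] = L(x)$.

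The main obstacle is the induction step in the verification that $\B$ realises~$L$: one must show that the coefficients $c_i(\cdot)$ are a genuine state distribution that evolves linearly under $M'(a)$. The cleanest way is to observe that shifting by $a$ commutes with linear combinations of rows, so $H_L[xa,\cdot] = \sum_i c_i(x) H_L[x_i a,\cdot] = \sum_{i,j} c_i(x) (M'(a))_{ij} H_L[x_j,\cdot]$, and by uniqueness of the coefficients this gives $c_j(xa) = \sum_i c_i(x) (M'(a))_{ij}$, i.e.\ exactly the update rule of $\B$. Once this is in hand, combining the two bounds yields $\rank H_L = |Q|$. \qed
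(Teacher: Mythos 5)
Your proof is correct: the factorization $H_L = PS$ gives the upper bound, and the construction of an $r$-state automaton from a row basis of $H_L$ (with the shift-commutes-with-linear-combinations argument justifying the update rule $c_j(xa)=\sum_i c_i(x)M'(a)_{ij}$) gives the matching lower bound. The paper itself offers no proof of this proposition---it cites Carlyle--Paz and Fliess---and your argument is precisely the standard one from those references, so there is nothing to fault and no divergence to report.
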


\subsection{Regularity} \label{sub-regularity}

Since a DFA is an IFA, for each regular language there is an IFA that defines it.
Conversely, we show that the language of an IFA is regular:
\begin{theorem} \label{thm-regularity}
Let $\A = (Q, \Sigma, M, \alpha, \eta)$ be an $\R$-IFA.
Then $L(\A)$ is regular, and there is a DFA $\B$ with at most $2^{|Q|}$ states and $L(\A) = L(\B)$.
\end{theorem}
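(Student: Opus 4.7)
The plan is to invoke the Myhill--Nerode theorem and bound the number of residual languages of $L(\A)$ by a linear-algebraic argument. For $x \in \Sigma^*$ define $f_x : \Sigma^* \to \R$ by $f_x(z) \eqdef L_\A(x z) = \alpha M(x) M(z) \eta$. Since $\A$ is image-binary, each $f_x$ takes values in $\{0,1\}$ and is the characteristic function of the residual $x^{-1} L(\A)$. It therefore suffices to show that $|\{f_x : x \in \Sigma^*\}| \leq 2^{|Q|}$: by Myhill--Nerode this number is precisely the size of the minimal DFA for $L(\A)$, and so gives the desired $\B$.

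The first observation is that all $f_x$ lie in a single subspace $V \subseteq \R^{\Sigma^*}$, namely the image of the linear map $\R^Q \to \R^{\Sigma^*}$ that sends a row vector $v$ to the function $z \mapsto v M(z) \eta$; indeed $f_x$ is precisely this function for $v = \alpha M(x)$. As the image of a linear map whose domain has dimension $|Q|$, we have $\dim V \leq |Q|$.

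The main content of the proof --- and the step I expect to require the most care --- is the combinatorial bound that any $d$-dimensional subspace $V \subseteq \R^{\Sigma^*}$ contains at most $2^d$ vectors with all entries in $\{0,1\}$. To prove it, I fix a basis $b_1, \ldots, b_d$ of $V$; the $d \times \Sigma^*$ matrix whose rows are $b_1, \ldots, b_d$ has rank $d$, so there exist coordinates $z_1, \ldots, z_d \in \Sigma^*$ for which the corresponding $d \times d$ submatrix is invertible. The coordinate projection $\pi : V \to \R^d$, $\pi(v) = (v(z_1), \ldots, v(z_d))$, is then an injective linear map, and its restriction to $V \cap \{0,1\}^{\Sigma^*}$ lands in $\{0,1\}^d$, giving at most $2^d$ elements. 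Applying this with $d = \dim V \leq |Q|$, the collection $\{f_x\} \subseteq V \cap \{0,1\}^{\Sigma^*}$ has size at most $2^{|Q|}$, so $L(\A)$ has at most $2^{|Q|}$ residuals; hence $L(\A)$ is regular and is recognised by a DFA $\B$ with at most $2^{|Q|}$ states.
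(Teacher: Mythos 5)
Your proof is correct, and while it follows the same overall skeleton as the paper's (bound the number of distinct residuals of $L(\A)$, each of which is a $\{0,1\}$-valued row of the Hankel matrix lying in a subspace of dimension at most $|Q|$, then invoke Myhill--Nerode), the key combinatorial step is argued by a genuinely different route. The paper proves that for a set of $\{0,1\}$-vectors, linear dependence over $\R$ implies linear dependence over $\Q$ (by Cramer-style uniqueness of the coefficients) and then over $\Z_2$ (by clearing denominators and reducing mod $2$); it then bounds the number of elements of a $\Z_2$-vector space of dimension $n$ by $2^n$. You instead stay entirely over $\R$: you pick $d$ coordinates on which a basis of the $d$-dimensional subspace $V$ is invertible, observe that the resulting coordinate projection is injective on $V$ and maps $V \cap \{0,1\}^{\Sigma^*}$ into $\{0,1\}^d$, and conclude $|V \cap \{0,1\}^{\Sigma^*}| \le 2^d$. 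Your argument is more self-contained and avoids the field-change machinery altogether; the one thing the paper's detour through $\Z_2$ buys is reusability --- the same two lemmas are invoked again in Proposition~\ref{prop-IFA2mod-2} to convert IFAs to mod-2 multiplicity automata without state blowup, which your projection argument does not directly yield. (One small point worth making explicit in your write-up: the existence of the $d$ coordinates $z_1,\ldots,z_d$ uses that row rank equals column rank for a matrix with infinitely many columns, which holds here because a nontrivial linear relation among all columns evaluated against the rows would contradict the linear independence of the basis $b_1,\ldots,b_d$.)
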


In order to do so, we will need some auxiliary lemmas.
View $\Z_2 = \{0,1\}$ as the field with two elements.
In the proof of \Cref{thm-regularity} we consider vector spaces \emph{over}~$\Z_2$, i.e., where the scalars are from~$\Z_2$.
In particular, we will argue with the vector space $\Z_2^\N \cong \Z_2^{\Sigma^*}$ over~$\Z_2$.
We first show:
\begin{lemma} \label{lem-Z2}
Let $V$ be a set of $n$ vectors.
Consider the vector space $\langle V \rangle$ spanned by~$V$ over~$\Z_2$.
Then $|\langle V \rangle| \le 2^n$.
\end{lemma}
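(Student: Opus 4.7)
The plan is to argue directly from the definition of the span, without appealing to dimension. Every element of $\langle V \rangle$ is by definition a $\Z_2$-linear combination of the vectors in $V$. Enumerate $V = \{v_1, \ldots, v_n\}$; then each element of $\langle V \rangle$ has the form $\sum_{i=1}^n c_i v_i$ with $c_i \in \Z_2 = \{0,1\}$. Since there are only two choices for each coefficient $c_i$, the number of distinct coefficient tuples $(c_1, \ldots, c_n)$ is exactly $2^n$, and the map from such tuples to $\langle V \rangle$ is surjective. Hence $|\langle V \rangle| \le 2^n$.

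The only subtle point is that one must be careful to allow that a general $\Z_2$-linear combination includes only \emph{finite} sums of scalar multiples, which here is automatic because $V$ itself is finite. There is no obstacle of any real depth: the bound follows because $\Z_2$ has exactly $2$ elements and the set of formal linear combinations is indexed by $\Z_2^n$. In particular one does not need to invoke the fact that a basis can be extracted from $V$; one just needs that the span consists entirely of finite $\Z_2$-combinations of elements of $V$.

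Alternatively, one could note that $\langle V \rangle$ is a vector space over $\Z_2$ of dimension $d \le n$, so $\langle V \rangle \cong \Z_2^d$ and therefore $|\langle V \rangle| = 2^d \le 2^n$. This is essentially the same proof, since the inequality $d \le n$ is itself a consequence of the spanning assumption. The direct counting argument seems preferable because it is entirely self-contained and does not require discussing the existence of a basis or the well-definedness of dimension.
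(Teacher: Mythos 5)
Your argument is correct and is essentially identical to the paper's proof: both enumerate $V = \{v_1,\ldots,v_n\}$ and observe that $\langle V\rangle = \{\sum_{i=1}^n \lambda_i v_i \mid \lambda_i \in \Z_2\}$ is the surjective image of the $2^n$ coefficient tuples. No issues.
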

\begin{proof}
Let $V = \{v_1, \ldots, v_n\}$.
Then $\langle V \rangle = \{\sum_{i=1}^n \lambda_i v_i \mid \lambda_i \in \Z_2\}$.
\qed
\end{proof}
\begin{corollary} \label{cor-Z2}
Let $\VV$ be a vector space over~$\Z_2$.
For any $n \in \N$, if $\dim \VV \le n$ then $|\VV| \le 2^n$.
\end{corollary}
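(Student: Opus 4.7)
The plan is to derive this as an immediate consequence of \Cref{lem-Z2}, once we have a spanning set of size at most~$n$ for~$\VV$.

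First I would pick a basis $V \subseteq \VV$, which has size $m := \dim \VV \le n$. By definition of a basis, $\VV = \langle V \rangle$. Applying \Cref{lem-Z2} to this $V$ of size~$m$ yields $|\VV| = |\langle V \rangle| \le 2^m \le 2^n$, which is exactly the claim.

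There is no real obstacle here; the only subtlety worth mentioning is that \Cref{lem-Z2} is stated for an \emph{arbitrary} finite set of $n$ vectors (not necessarily linearly independent), so the argument does not even require appealing to the existence of a basis of exactly the dimension: any spanning set of size $\le n$ would do. In particular, the corollary also covers the degenerate case $\dim \VV = 0$, where $\VV = \{0\}$ is spanned by the empty set and $|\VV| = 1 \le 2^0$.
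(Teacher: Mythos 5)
Your proof is correct and is exactly the argument the paper intends: the corollary is stated without an explicit proof precisely because it follows by applying \Cref{lem-Z2} to a basis (or any spanning set) of size at most~$n$. Your remark about the degenerate case $\dim \VV = 0$ is a harmless but accurate observation.
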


The following two lemmas show that if an $\R$-weighted automaton is image-binary, then the rank over $\R$ of its Hankel matrix $H$ is at least the rank of $H$ over $\Z_2$.
\begin{lemma} \label{lem-R-Q}
Let $V \subseteq \{0,1\}^{\N}$ be a set of vectors.
If $V$ is linearly dependent over~$\R$ then $V$ is linearly dependent over~$\Q$.
Hence $\dim \langle V \rangle_{\Q} \le \dim \langle V \rangle_{\R}$.
\end{lemma}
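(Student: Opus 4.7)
The plan is to deduce the claim from the fact that a matrix with rational entries has the same rank computed over $\Q$ as over $\R$, combined with the orthogonality identity $\ker M = (\text{row span})^\perp$, which gives a rank-nullity relation in $\F^n$ for any field $\F$.

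Suppose $V$ is $\R$-linearly dependent. Then there exist $v_1,\dots,v_n \in V$ and $(\lambda_1,\dots,\lambda_n)\in\R^n\setminus\{0\}$ with $\sum_{i}\lambda_i v_i = 0$. I would assemble the $v_i$ into the matrix $M\in\{0,1\}^{\N\times\{1,\dots,n\}}$ whose $i$th column is $v_i$; the dependence is exactly the statement that $\lambda$ lies in the kernel of $M$ viewed as an $\R$-linear map $\R^n\to\R^\N$. The goal becomes to exhibit a nonzero rational $\mu\in\Q^n$ in the kernel of $M$ viewed over~$\Q$.

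The key step is to observe that, for either $\F\in\{\Q,\R\}$, the identification $\ker_\F M = \{x\in\F^n : \text{every row of $M$ annihilates }x\}$ yields $\dim_\F\ker_\F M = n - r_\F$, where $r_\F$ is the $\F$-dimension of the row span of $M$ inside $\F^n$. Since every row of $M$ lies in $\{0,1\}^n\subseteq\Q^n$, the $\Q$-span and the $\R$-span of the rows have the same dimension: a $\Q$-basis of the $\Q$-span extends to a $\Q$-basis of $\Q^n$, which is simultaneously an $\R$-basis of $\R^n$, so each element of the original $\Q$-basis remains $\R$-linearly independent, giving $r_\Q \ge r_\R$; the reverse inequality is immediate since $\Q$-linear dependence is a special case of $\R$-linear dependence. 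Hence the nullities agree, and since the $\R$-nullity is positive by hypothesis, so is the $\Q$-nullity, producing the required rational dependence.

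The \emph{Hence} clause is then a contrapositive: any $\Q$-linearly-independent subset of $V$ must be $\R$-linearly independent by what was just proved, so its cardinality is bounded by $\dim\langle V\rangle_\R$; taking the supremum yields $\dim\langle V\rangle_\Q\le\dim\langle V\rangle_\R$. The only mildly delicate point is the equality $r_\Q=r_\R$, which is elementary but worth stating carefully; alternatively it can be argued via the non-vanishing-minor characterisation of rank, as all minors of $M$ are integers and hence take the same value in $\Q$ and in $\R$.
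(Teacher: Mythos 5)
Your proof is correct, but it reaches the conclusion by a different route than the paper. You reduce the claim to the field-independence of the rank of a rational matrix: assembling the $v_i$ as columns of $M\in\{0,1\}^{\N\times n}$, you identify $\ker_\F M$ with the annihilator of the row span, invoke $\dim W+\dim W^\perp=n$ over either field, and show the row ranks over $\Q$ and $\R$ coincide, so a positive $\R$-nullity forces a positive $\Q$-nullity. The paper instead isolates a single vector $v_0=\sum_i\lambda_i v_i$ with $\{v_1,\dots,v_n\}$ independent, restricts to a finite coordinate set $J$ on which $\{v_i[J]\}$ stays independent, and observes that the $\lambda_i$ are the \emph{unique} solution of a rational linear system and hence rational. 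The paper's argument is slightly more informative (it shows the given real coefficients are themselves rational), whereas yours only produces \emph{some} nonzero rational kernel vector --- which is all the lemma needs --- and avoids the explicit choice of $J$ by letting the annihilator identity absorb the infinitely many rows. One cosmetic slip: your two sub-arguments for $r_\Q=r_\R$ are attached to the wrong inequalities --- the extension of a $\Q$-basis of the row span to an $\R$-independent set gives $r_\R\ge r_\Q$, while ``$\Q$-dependence is a special case of $\R$-dependence'' gives $r_\Q\ge r_\R$ --- but since both inequalities are in fact established, the equality and the rest of the proof stand.
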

\begin{proof}
Let $V$ be linearly dependent (over~$\R$), and let $n = \dim \langle V \rangle$.
Then there are $v_0, v_1, \ldots, v_n \in V$ such that $V' := \{v_1, \ldots, v_n\}$ is linearly independent and $v_0 \not\in V'$ but $v_0 \in \langle V' \rangle$.
So there are unique $\lambda_1, \ldots, \lambda_n \in \R$ with $v_0 = \sum_{i=1}^n \lambda_i v_i$.
It suffices to show that $\lambda_1, \ldots, \lambda_n \in \Q$.
Since $V'$ is linearly independent, there exists $J \subseteq \N$ with $|J| = n$ such that $\{v_1[J], \ldots, v_n[J]\}$ is linearly independent, where $v_i[J] \in \{0,1\}^n$ is the restriction of $v_i$ to the entries indexed by~$J$.
Hence the $\lambda_i$ are the unique solution of a linear system of equations
\begin{align*}
v_0[j] \ &=\ \sum_{i=1}^n \lambda_i v_i[j] \qquad \text{ for all $j \in J$.}
\end{align*}
Linear systems of equations with rational coefficients and constants have rational solutions.
Hence $\lambda_1, \ldots, \lambda_n \in \Q$.
\qed
\end{proof}
\begin{lemma} \label{lem-Z2-R}
Let $V \subseteq \{0,1\}^{\N}$ be a set of vectors.
If $V$ is linearly dependent over~$\Q$ then $V$ is linearly dependent over~$\Z_2$.
Hence $\dim \langle V \rangle_{\Z_2} \le \dim \langle V \rangle_{\Q}$.
\end{lemma}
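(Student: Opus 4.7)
The plan is to take a $\Q$-linear dependence and transform it into a $\Z_2$-linear dependence by the standard trick of clearing denominators and then dividing out by the greatest common divisor of the integer coefficients, which ensures that at least one coefficient survives as nonzero modulo~$2$.

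In detail: suppose $V$ is linearly dependent over~$\Q$, so there are $v_1, \ldots, v_k \in V$ and $\lambda_1, \ldots, \lambda_k \in \Q$, not all zero, with $\sum_{i=1}^k \lambda_i v_i = 0$. Multiplying by the least common multiple of the denominators of the~$\lambda_i$, I obtain integers $a_1, \ldots, a_k \in \Z$, not all zero, with $\sum_{i=1}^k a_i v_i = 0$. Let $d = \gcd(a_1, \ldots, a_k)$; since the $a_i$ are not all zero, $d \neq 0$, and setting $b_i := a_i / d$ I get $\sum_{i=1}^k b_i v_i = 0$ with $\gcd(b_1, \ldots, b_k) = 1$. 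In particular, not all $b_i$ are even, so the reductions $\overline{b_i} := b_i \bmod 2 \in \Z_2$ are not all zero.

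It remains to verify that $\sum_{i=1}^k \overline{b_i}\, v_i = 0$ holds in $\Z_2^{\N}$. Fix any coordinate $j \in \N$. Because $v_i \in \{0,1\}^{\N}$, the value $v_i[j]$ is the same integer as its reduction modulo~$2$, and the integer identity $\sum_{i=1}^k b_i v_i[j] = 0$ reduces to $\sum_{i=1}^k \overline{b_i}\, v_i[j] = 0$ in~$\Z_2$. Hence $V$ is linearly dependent over~$\Z_2$.

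For the dimensional inequality, I argue by contrapositive: suppose $\dim_{\Z_2}\langle V\rangle_{\Z_2} > \dim_{\Q}\langle V \rangle_{\Q} =: n$. Then $V$ contains $n+1$ vectors that are linearly independent over~$\Z_2$, but these $n+1$ vectors must be linearly dependent over~$\Q$ (as they lie in an $n$-dimensional $\Q$-space), and the first part of the lemma yields a $\Z_2$-dependence among them, contradicting their $\Z_2$-independence. The only nontrivial step is the mod-$2$ reduction, and the gcd normalisation is exactly what is needed to prevent the reduced dependence from degenerating to the trivial one.
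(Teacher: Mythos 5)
Your proof is correct and follows essentially the same route as the paper: clear denominators to get an integer dependence, normalise so that not all coefficients are even (you divide by the gcd, the paper divides by the largest power of $2$ dividing all coefficients — the same effect), and reduce modulo~$2$ using that the entries of the vectors lie in $\{0,1\}$. Your explicit verification of the coordinate-wise reduction and of the dimension inequality merely spells out steps the paper leaves implicit.
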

\begin{proof}
Let $V = \{v_1, \ldots, v_n\}$ and $\sum_{i=1}^n \lambda_i v_i = \vec0$, where $\lambda_i \in \Q$ are not all zero.
By multiplying all $\lambda_i$ with a common denominator, we can assume without loss of generality that $\lambda_i \in \Z$ where $\lambda_i$ are not all zero.
By dividing all $\lambda_i$ with the largest power of~$2$ that divides all~$\lambda_i$, we can assume without loss of generality that the $\lambda_i$ are not all even.
In the equation $\sum_{i=1}^n \lambda_i v_i = \vec0$, by regarding every $\lambda_i$ and every entry of every $v_i$ modulo~$2$, we have $\sum_{i=1}^n \lambda_i v_i = \vec0$ over~$\Z_2$, i.e., $V$ is linearly dependent over~$\Z_2$.
\qed
\end{proof}

Hence we can prove~\Cref{thm-regularity}:

\begin{proof}[of \Cref{thm-regularity}]
Write $n = |Q|$.
Let $H$ be the Hankel matrix of $L_\A$.
We have $H \in \{0,1\}^{\Sigma^* \times \Sigma^*}$.
By \cref{prop-Hankel} we have $\rank H \le n$, where the rank is over~$\R$.
By \cref{lem-R-Q,lem-Z2-R} it follows that $\rank H \le n$, where the rank is over~$\Z_2$.
By \cref{cor-Z2} it follows that $H$ has at most $2^n$ different rows, say $H[w_1,\cdot], \ldots, H[w_\ell, \cdot]$ with $\ell \le 2^n$.
So every word is Myhill-Nerode equivalent to a word in $\{w_1, \ldots, w_\ell\}$.
Thus, there is an equivalent DFA with $\ell$ states.

Explicitly, the following DFA $\B = (Q', \Sigma, \delta, q_0, F)$ is equivalent to~$\A$:
\begin{align*}
 Q'   \ &=\ \{H[w_1, \cdot], \ldots, H[w_\ell,\cdot]\} \\
 \delta(H[w_i, \cdot], a) \ &=\ H[w_i a, \cdot] \\
 q_0 \ &=\ H[\varepsilon,\cdot] \\
 F   \ &=\ \{H[w_i, \cdot] \mid 1 \le i \le \ell,\ H[w_i, \varepsilon] = 1\}\,. \tag*{\qed}
\end{align*}
\end{proof}

\subsection{Boolean Operations and Checking Image-Binariness} \label{sub-boolean-ops}

IFAs are \emph{polynomially closed} under all boolean operations, by which we mean:
\begin{theorem} \label{thm-boolean-ops}
Let $\A_1, \A_2$ be IFAs over~$\Sigma$.
One can compute in polynomial time IFAs $\B_\neg, \B_\cap, \B_\cup$ with $L(\B_\neg) = \Sigma^* \setminus L(\A_1)$ and $L(\B_\cap) = L(\A_1) \cap L(\A_2)$ and $L(\B_\cup) = L(\A_1) \cup L(\A_2)$.
\end{theorem}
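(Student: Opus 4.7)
Proof plan for \Cref{thm-boolean-ops}:

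My plan is to realize each Boolean operation on languages as a corresponding arithmetic operation on the weight functions $L_{\A_1}, L_{\A_2} : \Sigma^* \to \{0,1\}$ and implement those arithmetic operations via standard direct-sum and tensor-product constructions on weighted automata. Since the values $L_{\A_i}(w)$ lie in $\{0,1\}$, the identities $L_{\Sigma^* \setminus L(\A)}(w) = 1 - L_\A(w)$, $L_{L(\A_1) \cap L(\A_2)}(w) = L_{\A_1}(w)\cdot L_{\A_2}(w)$, and $L_{L(\A_1)\cup L(\A_2)}(w) = L_{\A_1}(w)+L_{\A_2}(w)-L_{\A_1}(w)\cdot L_{\A_2}(w)$ take values only in $\{0,1\}$, so the constructed $\Q$-weighted automata will automatically be image-binary.

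First I would build three weighted-automaton primitives, each clearly polynomial-time. (i) The one-state constant-$1$ automaton $\C$ with $L_\C(w)=1$ for all $w$. (ii) Given $\A=(Q,\Sigma,M,\alpha,\eta)$ and $\A'=(Q',\Sigma,M',\alpha',\eta')$, the \emph{sum automaton} on state set $Q\sqcup Q'$, with block-diagonal matrices $M(a)\oplus M'(a)$, initial vector $(\alpha,\alpha')$, and final vector $(\eta,\eta')^T$; it satisfies $L_{\A\oplus\A'}(w)=L_\A(w)+L_{\A'}(w)$. (iii) The \emph{tensor-product automaton} on state set $Q\times Q'$, with transition matrices $M(a)\otimes M'(a)$, initial vector $\alpha\otimes\alpha'$, and final vector $\eta\otimes\eta'$; it satisfies $L_{\A\otimes\A'}(w)=L_\A(w)\cdot L_{\A'}(w)$. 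Both constructions have size polynomial (in fact linear and quadratic, respectively) in the inputs, and scaling a component by a rational constant is done by scaling the initial or final vector.

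For $\B_\neg$ I take the direct sum of $\C$ with the scaled automaton $-\A_1$ (negating $\eta_{\A_1}$); this computes $1 - L_{\A_1}(w)$ and is therefore image-binary. For $\B_\cap$ I take the tensor product $\A_1\otimes\A_2$, which computes the product $L_{\A_1}(w)\cdot L_{\A_2}(w)$ that coincides with the characteristic function of $L(\A_1)\cap L(\A_2)$. For $\B_\cup$ I combine these: either apply the identity $L_1\cup L_2=\Sigma^*\setminus(\overline{L_1}\cap\overline{L_2})$ by complementing twice and intersecting, or directly construct the sum $\A_1\oplus\A_2\oplus(-(\A_1\otimes\A_2))$ implementing $L_{\A_1}+L_{\A_2}-L_{\A_1}\cdot L_{\A_2}$. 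In every case the number of states is bounded by $|Q_1|+1$, $|Q_1|\cdot|Q_2|$, and $|Q_1|+|Q_2|+|Q_1|\cdot|Q_2|$ respectively, and all entries are obtained by elementary rational arithmetic on the inputs.

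There is no real obstacle here: since image-binariness is a \emph{semantic} condition, we do not need to verify it structurally once we have established the algebraic identity between the output weight and the target characteristic function. The only things to check are (a) correctness of the sum and tensor-product constructions, which is a routine induction on word length using $(M\oplus M')(w)=M(w)\oplus M'(w)$ and $(M\otimes M')(w)=M(w)\otimes M'(w)$, and (b) polynomial size of the outputs, which is immediate from the formulas above.
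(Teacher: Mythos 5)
Your proposal is correct and follows essentially the same route as the paper: both reduce the Boolean operations to the arithmetic identities $1-x$, $x\cdot y$ (and union via De Morgan or $x+y-xy$), implement them with the standard direct-sum and Kronecker-product constructions on $\Q$-weighted automata, and observe that image-binariness of the result is automatic because the computed function agrees with the target characteristic function. No gaps.
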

By De Morgan's laws it suffices to construct $\B_\neg$ and $\B_\cap$.
Since $\B_\neg$ and~$\B_\cap$ need to satisfy only $L_{\B_\neg} = 1 + (-L_{\A_1})$ (where $1 : \Sigma^* \to \{1\}$ denotes the constant~$1$ function) and $L_{\B_\cap} = L_{\A_1} \cdot L_{\A_2}$, it suffices to know that $\Q$-weighted automata are polynomially closed under negation and pointwise addition and multiplication:
\begin{proposition}[see, e.g., {\cite[Chapter~1]{BerstelReutenauer88}}] \label{prop-weighted-closure}
Let $\A_1, \A_2$ be $\Q$-weighted automata.
One can compute in polynomial time $\Q$-weighted automata $\B_-, \B_+, \B_\times$ with $L_{\B_-} = - L_{\A_1}$ and $L_{\B_+} = L_{\A_1} + L_{\A_2}$ and $L_{\B_\times} = L_{\A_1} \cdot L_{\A_2}$.
\end{proposition}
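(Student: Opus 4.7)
The plan is to produce $\B_-$, $\B_+$, $\B_\times$ from $\A_1 = (Q_1, \Sigma, M_1, \alpha_1, \eta_1)$ and $\A_2 = (Q_2, \Sigma, M_2, \alpha_2, \eta_2)$ by three standard linear-algebraic constructions, each clearly polynomial.

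For negation, I would simply flip the sign of the final vector, setting $\B_- := (Q_1, \Sigma, M_1, \alpha_1, -\eta_1)$, so that $L_{\B_-}(w) = -\alpha_1 M_1(w)\eta_1 = -L_{\A_1}(w)$ for every $w$; the size is unchanged. For sum, I would take the direct-sum (disjoint-union) construction on state set $Q_1 \uplus Q_2$, with block-diagonal transition matrices $\mathrm{diag}(M_1(a), M_2(a))$ and initial and final vectors obtained by concatenating $\alpha_1$ with $\alpha_2$ and $\eta_1$ with $\eta_2$, respectively. Since the two blocks do not interact under matrix multiplication, $L_{\B_+}(w) = L_{\A_1}(w) + L_{\A_2}(w)$, and the state space has size $|Q_1| + |Q_2|$.

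For product, I would use the tensor (Kronecker) product: take states $Q_1 \times Q_2$, transitions $M(a) := M_1(a) \otimes M_2(a)$, initial vector $\alpha_1 \otimes \alpha_2$, and final vector $\eta_1 \otimes \eta_2$. Using the mixed-product identity $(A \otimes B)(C \otimes D) = (AC) \otimes (BD)$, a short induction on $|w|$ gives $M(w) = M_1(w) \otimes M_2(w)$, whence $L_{\B_\times}(w) = (\alpha_1 M_1(w)\eta_1)(\alpha_2 M_2(w)\eta_2) = L_{\A_1}(w) \cdot L_{\A_2}(w)$. The resulting automaton has $|Q_1| \cdot |Q_2|$ states, and the matrices $M_1(a) \otimes M_2(a)$ are computable in polynomial time from the $M_i(a)$.

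There is no serious obstacle here: the only non-trivial ingredient is the mixed-product identity for Kronecker products, which is a standard one-line fact from linear algebra. The whole proposition is essentially a consequence of the compositional, linear nature of weighted-automaton semantics, and as such all three constructions are textbook.
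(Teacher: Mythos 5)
Your proposal is correct and matches the paper's proof essentially verbatim: sign flip for negation (the paper negates $\alpha_1$ rather than $\eta_1$, an immaterial difference), block-diagonal direct sum for addition, and the Kronecker product with the mixed-product identity for pointwise multiplication. All three constructions are the same as in the paper and the polynomial-time bounds are justified identically.
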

\begin{proof}
For $i \in \{1,2\}$ let $\A_i = (Q_i, \Sigma, M_i, \alpha_i, \eta_i)$.
For $\B_-$, replace $\alpha_1$ by $-\alpha_1$.
For $\B_+$, assume $Q_1 \cap Q_2 = \emptyset$, take $Q = Q_1 \cup Q_2$, and $M(a) = \begin{pmatrix} M_1(a) & 0 \\ 0 & M_2(a) \end{pmatrix}$ for all $a \in \Sigma$, and $\alpha = \begin{pmatrix} \alpha_1 & \alpha_2 \end{pmatrix}$, and $\eta = \begin{pmatrix} \eta_1 \\ \eta_2\end{pmatrix}$.
For $\B_\times$, take $Q = Q_1 \times Q_2$, and $M(a) = M_1(a) \otimes M_2(a)$ for all $a \in \Sigma$, and $\alpha = \alpha_1 \otimes \alpha_2$, and $\eta = \eta_1 \otimes \eta_2$, where $\mathord{\otimes}$ stands for the Kronecker product.
It follows from the mixed-product property of $\mathord{\otimes}$ (i.e., $(A \otimes B) (C \otimes D) = (A C) \otimes (B D)$) that indeed $L_{\B_\times} = L_{\A_1} \cdot L_{\A_2}$.
\qed
\end{proof}

While DFAs are also polynomially closed under complement (switch accepting and non-accepting states), NFAs and UFAs are not.
For NFAs, it was shown in~\cite{HolzerK03} that the (worst-case) blowup in the number~$n$ of states is $\Theta(2^n)$.
For UFAs, it was shown recently:
\begin{proposition}[\cite{Raskin18}] \label{prop-Raskin18}
For any $n \in \N$ there exists a unary (i.e., on an alphabet $\Sigma$ with $|\Sigma|=1$) UFA $\A_n$ with $n$~states such that any NFA for the complement language has at least $n^{(\log \log \log n)^{\Theta(1)}}$ states.
\end{proposition}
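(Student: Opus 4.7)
The plan is to construct the witnessing family $\A_n$ by composing cyclic components of distinct prime lengths, and then to establish the lower bound for NFAs accepting the complement via an analysis of the Chrobak normal form of unary NFAs.

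For the construction, I would choose distinct primes $p_1 < \cdots < p_k$ with $\sum_i p_i \le n$; by the prime number theorem one may take $k$ roughly $\sqrt{n/\log n}$, so that the product $P := \prod_{i=1}^k p_i$ is super-polynomial in~$n$. Build $\A_n$ as $k$ disjoint unary cycles of lengths $p_1, \ldots, p_k$, reached from a common initial branching state. Because the $p_i$ are pairwise coprime, the Chinese Remainder Theorem lets me pick one accepting state per cycle so that for every length $m \in \N$ at most one of the $k$ branches is in an accepting state after reading $a^m$. This yields a unary UFA with $O(n)$ states whose accepted language is (eventually) a specific residue pattern modulo~$P$.

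For the lower bound, suppose $\B$ is an NFA with $N$ states accepting $\{a\}^* \setminus L(\A_n)$. Convert $\B$ to Chrobak normal form, obtaining an initial tail followed by simple cycles of lengths $\ell_1, \ldots, \ell_c$ with total size polynomial in~$N$. The accepted language is, from some point on, a union of residue classes modulo the~$\ell_j$. The task is then to show that capturing the complement of the chosen residue pattern modulo~$P$ forces the multiset $\{\ell_1, \ldots, \ell_c\}$ to be so rich that $N$ is at least $n^{(\log\log\log n)^{\Theta(1)}}$.

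The main obstacle is precisely this lower bound step. Nondeterminism allows $\B$ to combine arbitrary residue classes coming from many different moduli, so a naive counting argument is insufficient: one must exploit the multiplicative structure imposed by the pairwise coprime primes $p_i$. I expect to need an extremal, number-theoretic lemma of the kind established in~\cite{Raskin18}, showing that no short list of cycle lengths can approximate the complement pattern; plugging standard analytic-number-theoretic estimates (on distributions of prime divisors and intersections of residue classes) into this lemma is what produces the precise exponent $(\log\log\log n)^{\Theta(1)}$.
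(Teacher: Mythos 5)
First, note that the paper does not prove this proposition at all: it is imported verbatim from \cite{Raskin18}, so there is no internal proof to compare your attempt against; your proposal is in effect a reconstruction of Raskin's own argument, and it has to be judged on those terms.

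Your construction fails at the first step. If $\A_n$ consists of $k$ disjoint unary cycles of pairwise coprime lengths $p_1,\dots,p_k$, each with a single accepting state at residue $r_i$, then branch $i$ accepts $a^m$ iff $m \equiv r_i \pmod{p_i}$. The Chinese Remainder Theorem works \emph{against} you here: for any choice of the $r_i$ there exist (indeed, a positive density of) $m$ with $m \equiv r_i \pmod{p_i}$ for all $i$ simultaneously, so those words have $k$ accepting runs and the automaton is maximally ambiguous rather than unambiguous. More generally, full residue classes modulo pairwise coprime moduli can never be pairwise disjoint, so no choice of accepting states turns this ``union of prime cycles'' into a UFA with more than one nonempty branch. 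Raskin's actual hard instance is a considerably more delicate, recursively composed cyclic language, precisely because the naive CRT layout cannot yield unambiguity.

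The lower-bound half is also circular. The reduction via Chrobak normal form is a reasonable opening move (a unary NFA's language is, past a preperiod, a union of residue classes modulo its cycle lengths, so the question does become one about covering the complement of a disjoint union of arithmetic progressions by few arithmetic progressions), but you then appeal to ``an extremal, number-theoretic lemma of the kind established in~\cite{Raskin18}''. That lemma \emph{is} the theorem; deferring to it leaves the entire difficulty unaddressed. Moreover, the exponent $(\log\log\log n)^{\Theta(1)}$ does not arise from analytic estimates on the distribution of prime divisors, but from an iterated composition of gadgets and a carefully balanced induction in the covering argument. As written, the proposal contains neither a valid witness family nor any substantive content for the lower bound.
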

This super-polynomial blowup (even for unary alphabet and even if the output automaton is allowed to be ambiguous) refuted a conjecture that it may be possible to complement UFAs with a polynomial blowup~\cite{Colcombet15}.
An upper bound (for general alphabets and requiring the output to be a UFA) of $O(2^{0.79 n})$ was shown in~\cite{JirasekJS18}; see also~\cite{indz2021} for an (unpublished) improvement.

The authors believe \Cref{prop-Raskin18} shows the strength of \Cref{thm-boolean-ops}:
while UFAs cannot be complemented efficiently, the more general IFAs are polynomially closed under all boolean operations.

\Cref{prop-weighted-closure} can be used to show: 
\begin{theorem} \label{thm-check-image-binary}
Given a $\Q$-weighted automaton, one can check in polynomial time if it is an IFA.
\end{theorem}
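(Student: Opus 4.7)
The plan is to reduce the image-binary check to a zero-equivalence check for a $\Q$-weighted automaton, and then invoke the classical polynomial-time algorithm for the latter.

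First, observe that a rational number $x$ lies in $\{0,1\}$ if and only if $x^2 - x = 0$. Hence $\A$ is an IFA if and only if $L_\A(w)^2 - L_\A(w) = 0$ for every $w \in \Sigma^*$. So the plan is to construct a $\Q$-weighted automaton $\B$ with $L_\B = L_\A \cdot L_\A - L_\A$ and then test whether $L_\B$ is identically zero. Using \Cref{prop-weighted-closure} twice (once for the pointwise product $L_\A \cdot L_\A$, once for pointwise sum against the negation of $L_\A$), such a $\B$ can be built in polynomial time; its state count is $O(|Q|^2 + |Q|)$.

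It remains to decide in polynomial time whether a given $\Q$-weighted automaton $\B = (Q',\Sigma,M,\alpha,\eta)$ satisfies $L_\B \equiv 0$. This is the classical zero-equivalence problem for weighted automata, solvable in polynomial time by a Tzeng-style forward-reachability computation: one incrementally builds a basis for the $\Q$-vector space $V \eqdef \langle\, \alpha M(w) \mid w \in \Sigma^* \,\rangle \subseteq \Q^{Q'}$, starting from $\{\alpha\}$ and, while possible, extending by vectors of the form $v M(a)$ that are linearly independent from the current basis. Termination occurs within $|Q'|$ extensions because $\dim V \le |Q'|$, and each extension requires only a Gaussian-elimination-style independence test, so the whole procedure runs in polynomial time. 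Finally, $L_\B \equiv 0$ if and only if every basis vector $v$ of $V$ satisfies $v \eta = 0$.

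Putting the pieces together: construct $\B$, run the zero-equivalence procedure, and answer ``$\A$ is an IFA'' exactly when the procedure reports $L_\B \equiv 0$. There is no real obstacle here; the only point requiring care is confirming that the standard zero-equivalence algorithm applies over $\Q$ with polynomial bit-complexity, which follows because the basis vectors produced by the forward-reachability construction can be represented with polynomially-bounded rationals (this is a well-known consequence of Gaussian elimination over $\Q$).
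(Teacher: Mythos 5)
Your proposal is correct and follows essentially the same route as the paper: the paper builds $\B$ with $L_\B = L_\A \cdot L_\A$ via \Cref{prop-weighted-closure} and then checks equivalence of $\A$ and $\B$ using the classical polynomial-time equivalence test of Sch\"utzenberger/Tzeng, which is exactly your zero-test on $L_\A\cdot L_\A - L_\A$ phrased as an equivalence check. Your additional remarks on the forward-reachability basis construction and bit-complexity are just an unpacking of that cited equivalence algorithm.
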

\begin{proof}
Let $\A$ be a $\Q$-weighted automaton.
By \Cref{prop-weighted-closure} one can compute in polynomial time a $\Q$-weighted automaton~$\B$ with $L_\B = L_\A \cdot L_\A$ (pointwise multiplication).
Then $\A$ is an IFA if and only if $\A$ and~$\B$ are equivalent.
Equivalence of $\Q$-weighted automata can be checked in polynomial time, see~\cite{schut61,Tzeng96}.
\qed
\end{proof}

\subsection{Succinctness} \label{sub-succinctness}

It is known that UFAs can be exponentially more succinct than DFAs: for each $n \in \N$ with $n \ge 3$ there is a UFA with $n$ states such that the smallest equivalent DFA has $2^n$ states, see~\cite[Theorem~1]{Leung05}.
Since UFAs are IFAs, \Cref{thm-regularity} is optimal:
\begin{corollary} \label{cor-IFA2DFA}
For converting IFAs to DFAs, a state blowup of $2^n$ is sufficient and necessary.
\end{corollary}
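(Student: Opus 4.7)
The plan is to observe that both bounds are essentially already in hand: the upper bound comes directly from \Cref{thm-regularity}, and the matching lower bound is inherited from the corresponding UFA-vs-DFA separation via the fact that every UFA is an IFA.

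First I would dispatch sufficiency. For an arbitrary IFA $\A$ with $n$ states, \Cref{thm-regularity} (applied with $\F = \Q \subseteq \R$, so that $\A$ is in particular an $\R$-IFA) produces a DFA $\B$ with at most $2^n$ states such that $L(\A) = L(\B)$. This is exactly the $2^n$ upper bound claimed.

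For necessity, I would invoke the cited result of Leung (\cite[Theorem~1]{Leung05}) stated in the paragraph just above the corollary: for every $n \geq 3$ there is a UFA $U_n$ with $n$ states such that the minimal DFA equivalent to $U_n$ has exactly $2^n$ states. Since, as noted in \Cref{sub-ifa-defs}, every UFA is a special case of an IFA (set the transition, initial, and final weights in $\{0,1\}$ and observe that $L_{U_n}(w) \in \{0,1\}$ because $U_n$ is unambiguous), the family $\{U_n\}_{n \geq 3}$ witnesses that the $2^n$ blowup is unavoidable already within the UFA subclass of IFAs.

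There is essentially no obstacle here: the corollary is a packaging statement combining \Cref{thm-regularity} with an externally cited lower bound. The only thing to double-check is that ``UFAs are IFAs'' in the sense required, i.e.\ that the UFA $U_n$, when viewed as an $\F$-weighted automaton, genuinely satisfies $L_{U_n}(\Sigma^*) \subseteq \{0,1\}$; this is immediate because unambiguity means each word has at most one accepting run, so $\alpha M(w) \eta$ counts accepting runs and is therefore $0$ or $1$.
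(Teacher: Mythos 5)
Your proposal is correct and follows exactly the paper's reasoning: sufficiency is \Cref{thm-regularity}, and necessity is inherited from Leung's $2^n$ UFA-to-DFA lower bound via the observation that every UFA is an IFA, which is precisely how the paper justifies the corollary in the sentence preceding it.
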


It is also known from~\cite{Leung05} that converting NFAs to UFAs can require $2^n - 1$ states.
The argument carries over to IFAs:
\begin{proposition} \label{prop-NFA2IFA}
For converting NFAs to IFAs, a state blowup of $\Theta(2^n)$ is sufficient and necessary.
\end{proposition}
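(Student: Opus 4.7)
The plan is to prove the two bounds separately. The upper bound is routine: given an NFA with $n$ states, I apply the classical subset construction to obtain an equivalent DFA with at most $2^n$ states, and observe that every DFA is in particular an IFA (as noted after the definition in \Cref{sub-ifa-defs}). So the $2^n$ upper bound is inherited from the NFA-to-DFA bound.

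For the lower bound, my plan is to reuse Leung's hard NFA family $\{A_n\}_{n \ge 3}$ from \cite{Leung05} and then translate his UFA lower bound into an IFA lower bound via the Hankel-matrix characterisation \Cref{prop-Hankel}. The combinatorial heart of Leung's argument provides, for each $n$, an ``extended fooling set'' for $L(A_n)$: pairs of words $(x_1,y_1),\ldots,(x_m,y_m)$ with $m = 2^n - 1$ such that $x_i y_i \in L(A_n)$ for every $i$ and $x_i y_j \notin L(A_n)$ whenever $i \neq j$. The resulting $m \times m$ submatrix of the Hankel matrix $H_{\chi_{L(A_n)}}$ is the identity over $\Q$, so $\rank H_{\chi_{L(A_n)}} \ge 2^n - 1$ where the rank is taken over~$\Q$. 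By \Cref{prop-Hankel}, every $\Q$-weighted automaton computing $\chi_{L(A_n)}$ has at least $2^n - 1$ states, and in particular every IFA for $L(A_n)$ has at least $2^n - 1$ states. Together with the upper bound this gives the claimed $\Theta(2^n)$ blowup.

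The main obstacle I anticipate is bookkeeping, not algebra: one has to verify that Leung's proof really supplies the \emph{strong} fooling-set form above (which lower-bounds $\rank_\Q H$), rather than only a weaker fooling set that would lower-bound the number of distinct Hankel rows and thereby only yield a DFA-size bound. If his published exposition gives only the weaker form, I would reinspect the same family $A_n$ and order the witnesses $x_1,\ldots,x_m$ so that the matrix $(\chi_{L(A_n)}(x_i y_j))_{i,j}$ is upper-triangular with $1$'s on the diagonal, hence invertible over~$\Q$, which is all that is needed. Given that Leung's construction is already essentially combinatorial over the subset lattice, producing such an ordering should be routine.
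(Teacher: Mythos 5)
Your proposal is correct and follows essentially the same route as the paper: subset construction for the upper bound, and Leung's hard family combined with the Hankel-matrix rank bound of \Cref{prop-Hankel} for the lower bound. The only difference is that the paper simply cites Leung~\cite{Leung98} for the fact that the Hankel matrix of his family has rank $2^n-1$ over the reals, whereas you propose to re-derive that rank bound via a fooling-set/triangular-submatrix argument; your worry about the ``weak'' versus ``strong'' fooling-set form is moot, since the full-rank statement is already in the literature.
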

\begin{proof}
Sufficiency is clear via the subset construction.
For necessity, Leung~\cite[Theorem~3]{Leung05} considers for each $n \ge 3$ an NFA (even an MDFA, which is a DFA with multiple initial states) such that its Hankel matrix has rank $2^n - 1$, which he proved in~\cite{Leung98}.
He then invokes an analogue of \Cref{prop-Hankel}, due to Schmidt~\cite{Schmidt78}, to show that any equivalent UFA needs at least $2^n - 1$ states.
But by \Cref{prop-Hankel} this holds also for IFAs.
\qed
\end{proof}

It follows from \Cref{thm-boolean-ops} and \Cref{prop-Raskin18} that IFAs cannot be converted to NFAs in polynomial time:
\begin{proposition} \label{prop-IFA2NFA}
Converting IFAs to NFAs requires a super-polynomial state blowup.
\end{proposition}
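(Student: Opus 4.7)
The plan is to derive \Cref{prop-IFA2NFA} by composing the two results that the excerpt has already highlighted: efficient complementation of IFAs (\Cref{thm-boolean-ops}) and the super-polynomial complementation lower bound for UFAs (\Cref{prop-Raskin18}). The guiding idea is that if converting IFAs to NFAs were polynomial, then the IFA-complementation of a UFA followed by such a conversion would yield a polynomial-size NFA for the complement of the UFA, directly contradicting \Cref{prop-Raskin18}.

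Concretely, I would proceed as follows. Fix the family of unary UFAs $\A_n$ from \Cref{prop-Raskin18}, each with $n$ states. Since every UFA is an IFA, we may regard each $\A_n$ as an IFA of size~$n$. Applying \Cref{thm-boolean-ops} to $\A_n$ yields, in polynomial time in~$n$, an IFA $\B_n$ with $L(\B_n) = \Sigma^* \setminus L(\A_n)$ and with $|\B_n|$ polynomial in~$n$. Now suppose, towards a contradiction, that there is a polynomial $p$ such that every IFA with $m$~states can be converted into an equivalent NFA with at most $p(m)$ states. Applying this to $\B_n$ produces an NFA of size polynomial in~$n$ recognising $\Sigma^* \setminus L(\A_n)$. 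But \Cref{prop-Raskin18} guarantees that any NFA for $\Sigma^* \setminus L(\A_n)$ must have at least $n^{(\log\log\log n)^{\Theta(1)}}$ states, which is super-polynomial in~$n$. This contradiction establishes the claim.

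I do not foresee a real obstacle; the argument is a one-line reduction once one observes that IFA-complementation is a \emph{polynomial} (not merely effective) operation, so the IFA-to-NFA blowup inherits the super-polynomial lower bound for UFA complementation. The only subtlety worth stating carefully is what ``super-polynomial blowup'' means: namely, that there is no polynomial $p$ bounding the NFA state count in terms of the IFA state count uniformly over all inputs, which is precisely what the above contradiction rules out.
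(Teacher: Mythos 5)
Your proposal is correct and matches the paper's own proof essentially verbatim: both complement the Raskin UFA family via \Cref{thm-boolean-ops} to obtain polynomial-size IFAs for the complement languages, and then derive a contradiction with \Cref{prop-Raskin18} under the assumption of a polynomial IFA-to-NFA conversion. No gaps.
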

\begin{proof}
Let $n \in \N$, and let $\A_n$ be the UFA from \cref{prop-Raskin18}.
By \cref{thm-boolean-ops} there is a polynomial-size IFA, say~$\B_n$, for the complement of~$L(\A_n)$.
If converting IFAs to NFAs required only a polynomial state blowup, there would exist an NFA, say~$\B_n'$, with $L(\B_n') = L(\B_n) = \Sigma^* \setminus L(\A_n)$, of size polynomial in~$\A_n$, contradicting \cref{prop-Raskin18}.
\qed
\end{proof}

\subsection{Mod-2-Multiplicity Automata}\label{sub-mod-2-automata}

We compare IFAs with the mod-2-multiplicity automata (mod-2-MAs) as introduced in~\cite{angluin2020}, which are weighted automata over the field $\Z_2$.
Given a mod-2-MA $\mathcal{A}$ and a word $w$, $w$ is accepted iff $\mathcal{A}(w) = 1$.
Like with IFAs, mod-2-MAs are exponentially more succinct than DFAs~\cite[Lemma~6]{angluin2020}.
Converting NFAs to mod-2-MAs requires a super-polynomial state blowup~\cite[Lemma~10]{angluin2020} while (under the assumption that there are infinitely many Mersenne primes) converting mod-2-MAs to NFAs requires an exponential blowup~\cite[Lemma~11]{angluin2020}.

We can convert IFAs to mod-2-MAs without incurring a blowup:
\begin{proposition}\label{prop-IFA2mod-2}
For any IFA $\A$ with $n$ states there exists a mod-2-MA $\A'$ of at most $n$ states with $L_{\A} =  L_{\A'}$.
\end{proposition}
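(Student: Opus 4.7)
The plan is to deduce the existence of the small mod-2-MA directly from the Hankel matrix characterisation of \Cref{prop-Hankel}, combined with the rank comparisons already established in \Cref{lem-R-Q,lem-Z2-R}. The key observation is that for an IFA the Hankel matrix is a $\{0,1\}$-matrix, so it is literally the \emph{same} matrix whether we view $L_\A$ as taking values in $\Q$, in $\R$, or in $\Z_2$.

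Concretely, I would proceed as follows. Let $\A$ be an IFA with $n$ states, and let $H$ be the Hankel matrix of $L_\A$. Since $L_\A(\Sigma^*)\subseteq\{0,1\}$, we have $H\in\{0,1\}^{\Sigma^*\times\Sigma^*}$. Viewing $\A$ as a $\Q$-weighted automaton, \Cref{prop-Hankel} gives $\rank_\Q H \le n$. The rows of $H$ form a set of $\{0,1\}$-valued vectors, so \Cref{lem-Z2-R} (applied to any finite subset of rows) yields
\[
  \rank_{\Z_2} H \ \le\ \rank_\Q H \ \le\ n.
\]
Now reinterpret $L_\A$ as a function $\Sigma^*\to\Z_2$; since the entries of $H$ lie in $\{0,1\}$, the Hankel matrix of this $\Z_2$-valued function is the very same matrix $H$, so it has finite rank over $\Z_2$. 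Applying the existence direction of \Cref{prop-Hankel} over the field $\Z_2$ then produces a mod-2-MA $\A'$ with exactly $\rank_{\Z_2} H \le n$ states whose language function (over $\Z_2$) coincides with $L_\A$. Since acceptance in a mod-2-MA is defined as output~$1$, and $L_\A(w)=1$ iff $w\in L(\A)$, this mod-2-MA accepts exactly $L(\A)$, giving $L_{\A'} = L_\A$ as required.

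There is no real obstacle: the proof is a three-line chain of rank inequalities once the right interpretation of the Hankel matrix is in place. The only subtle point worth spelling out is that $H$ is literally the same object across the three fields $\R$, $\Q$, $\Z_2$, so that the Hankel-rank characterisation over $\Z_2$ can be invoked on a matrix whose rank was bounded using properties over $\Q$ and $\R$. With this observation the construction is immediate and the bound $n$ follows.
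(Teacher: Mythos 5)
Your proposal is correct and follows essentially the same route as the paper's own proof: bound the Hankel rank via \Cref{prop-Hankel}, transfer it to $\Z_2$ via \Cref{lem-R-Q,lem-Z2-R} using the fact that $H$ is a $\{0,1\}$-matrix, and then apply the Hankel characterisation over $\Z_2$ to obtain the mod-2-MA. The only (immaterial) difference is that you start from the rank over $\Q$ and use only \Cref{lem-Z2-R}, while the paper starts from the rank over $\R$ and chains both lemmas.
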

\begin{proof}
Let $H$ be the Hankel matrix of $L_{\A}$.
By Proposition~\ref{prop-Hankel}, $\rank H \leq n$, where the rank is taken over $\R$.
Invoking Lemma~\ref{lem-R-Q} and Lemma~\ref{lem-Z2-R} then shows that $\rank H \leq n$ also when the rank is taken over $\Z_2$.
Then by Proposition \ref{prop-Hankel} there exists a mod-2-MA with $\rank H \leq n$ (over $\Z_2$) states that accepts the same language as $\A$.
\qed
\end{proof}

However, the converse requires an exponential blowup.
Inspired by Angluin et al.'s~\cite{angluin2020} proof that mod-2 automata can be exponentially more succinct than NFAs, this proof makes use of shift register sequences.
However, note that this proof does not require the assumption that there are infinitely many Mersenne primes.
For further information on shift register sequences, see~\cite{golomb1967}.
A \emph{shift register sequence} of dimension $d$ is an infinite periodic sequence $\{a_n\}$ of bits defined by initial conditions $a_i = b_i$ for $i = 0,\ldots,d-1$ and $b_i \in \{0,1\}$, and a linear recurrence
\begin{equation*}
a_n = c_1a_{n-1}+c_2a_{n-2}+\ldots+c_da_{n-d},
\end{equation*}
for all $n \geq d$, where each $c_i \in \{0,1\}$ and addition is done modulo 2.
The \emph{minimum period} of a periodic sequence $\{a_n\}$ is the lowest $p \in \N$ such that $a_n = a_{n \bmod{p}}$ for every $n$.
The maximum possible minimum period of a shift register sequence is $2^d-1$, and it is known that for each positive integer $d$ there are shift register sequences of maximum period.
These are known as \emph{maximal length} or \emph{pseudo-noise} sequences~\cite{golomb1967}.

Given $d > 0$, let $a_n = c_1a_{n-1}+c_2a_{n-2}+\ldots+c_da_{n-d}$ define a maximum period shift register sequence.
Let $L_d$ be the language over a unary alphabet $\{\#\}$ defined by $\#^n \in L_d$ if and only if $a_n = 1$.
We have the following:
\begin{proposition}\label{prop-mod-22IFA}
The language $L_d$ is accepted by a mod-2-MA with $d$ states, but not by any IFA with fewer than $2^d-1$ states.
\end{proposition}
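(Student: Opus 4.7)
The plan is to prove the two claims separately. The mod-2-MA upper bound is a direct construction with $d$ states. For the IFA lower bound I apply \Cref{prop-Hankel} to reduce to showing that the Hankel matrix of the sequence $\{a_n\}$ has rank at least $2^d - 1$ over $\Q$, which I establish by a Fourier-analytic computation.

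For the upper bound, take a mod-2-MA $\A'$ with state set $\{1, \ldots, d\}$ whose transition matrix $M(\#) \in \Z_2^{d \times d}$ is the companion matrix of the recurrence, so that right-multiplication by $M(\#)$ advances the row vector $(a_n, a_{n+1}, \ldots, a_{n+d-1})$ to $(a_{n+1}, a_{n+2}, \ldots, a_{n+d})$. Setting $\alpha = (b_0, \ldots, b_{d-1})$ and $\eta = (1, 0, \ldots, 0)^T$, a routine induction gives $\alpha M(\#)^n = (a_n, \ldots, a_{n+d-1})$, so $L_{\A'}(\#^n) = \alpha M(\#)^n \eta = a_n$.

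For the lower bound, set $p = 2^d - 1$. By \Cref{prop-Hankel}, the minimum number of states of an IFA (viewed as a $\Q$-weighted automaton) with language $L_d$ equals $\rank_{\Q} H$, where $H[\#^i, \#^j] = a_{i+j}$; since rank is invariant under field extension, $\rank_{\Q} H = \rank_{\mathbb{C}} H$. The sequence has period $p$, so $H$ has at most $p$ distinct rows, and it suffices to exhibit a $p \times p$ submatrix of rank $p$. Take $A$ with $A[i, j] = a_{(i+j) \bmod p}$ for $0 \le i, j \le p-1$. Using the inverse DFT $a_m = \tfrac{1}{p} \sum_{k=0}^{p-1} \hat{a}(k)\, \omega^{-km}$ with $\omega = e^{2\pi \mathrm{i}/p}$ and $\hat{a}(k) = \sum_{n=0}^{p-1} a_n \omega^{nk}$, a direct calculation factors $A = \tfrac{1}{p}\, U\, \mathrm{diag}(\hat{a})\, U^T$ with $U[i,k] = \omega^{-ik}$. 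Since $U$ is a nonsingular Vandermonde matrix, $\rank A = |\{k : \hat{a}(k) \neq 0\}|$.

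The main task is then to show $\hat{a}(k) \neq 0$ for every $k$. For $k = 0$, $\hat{a}(0) = \sum_{n=0}^{p-1} a_n = 2^{d-1}$ is the number of $1$'s per period of the m-sequence. For $k \neq 0$, I invoke the shift-and-add property of m-sequences: for each $\tau$ with $1 \le \tau \le p - 1$ there is a $\tau'$ with $a_n \oplus a_{n+\tau} = a_{n + \tau'}$ for all $n$. Writing $b_n := 1 - 2 a_n \in \{-1, +1\}$, this yields the autocorrelation identity $\sum_{n=0}^{p-1} b_n b_{n+\tau} = -1$ for $1 \le \tau \le p-1$. The Wiener--Khintchine identity then gives $|\hat{b}(k)|^2 = p + 1$ for $k \neq 0$, and since $\hat{b}(k) = -2\, \hat{a}(k)$ in that range, $\hat{a}(k) \neq 0$, whence $\rank A = p$. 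The main hurdle is establishing this Fourier non-vanishing; it rests on the classical observation that an LFSR of maximum period visits all $p$ non-zero states $(a_n, \ldots, a_{n+d-1})$ exactly once per period, from which both the shift-and-add property and the $-1$ autocorrelation follow.
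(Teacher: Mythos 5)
Your proof is correct, and while it reduces to the same two ingredients as the paper (the rank of the periodic Hankel submatrix $H_{i,j}=a_{i+j}$ via \Cref{prop-Hankel}, and the two-valued autocorrelation of maximal-length sequences from Golomb), the way you extract full rank is genuinely different. The paper works directly over $\R$ with the unnormalised correlation $C(\tau)=\sum_k a_k a_{k+\tau}\in\{2^{d-1},2^{d-2}\}$, observes that $H^2$ is of the form $(2^{d-1}-2^{d-2})I+2^{d-2}J$, and verifies nonsingularity by exhibiting an explicit inverse; this is elementary and self-contained but computational. You instead pass to $\mathbb{C}$, diagonalise the circulant-Hankel matrix by the DFT as $\tfrac1p U\,\mathrm{diag}(\hat a)\,U^T$ with $U$ Vandermonde, and show every Fourier coefficient is nonzero via the $\pm1$-valued autocorrelation ($R_b(\tau)=-1$ for $\tau\neq 0$, hence $|\hat b(k)|^2=p+1$). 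Your route buys conceptual clarity and generality --- it shows full rank for any periodic $0/1$ sequence whose spectrum does not vanish, and identifies $\rank H$ exactly as the number of nonvanishing Fourier coefficients --- at the cost of importing the shift-and-add property and a field extension (harmless, since rank is invariant under extension, as you note). You also make the $d$-state mod-2-MA explicit via the companion matrix, where the paper simply cites \cite[Lemma 11]{angluin2020}; that is a welcome addition. One small presentational point: \Cref{prop-Hankel} gives that the minimal $\Q$-weighted automaton for $\chi_{L_d}$ has $\rank_\Q H$ states, and you should say explicitly that an IFA for $L_d$ is in particular such an automaton, so the bound applies; ``at least'' is all you need, and the ``equals'' you assert (which holds because the minimal weighted automaton for a $\{0,1\}$-valued function is automatically image-binary) is not required for the lower bound.
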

\begin{proof}
The existence of a mod-2-MA with $d$ states accepting $L_d$ is shown in~\cite[Lemma 11]{angluin2020}.

Consider the Hankel matrix over $\R$ of $L_d$.
Since $\{a_n\}$ is $2^d-1$ periodic, we have that the $i$'th row of the Hankel matrix is equal to the $i+2^d-1$'st row for any $i$, and similar for the columns.
Hence, the rank (over $\R$) of the Hankel matrix is equal to the rank of the submatrix of size $(2^d-1)\times(2^d-1)$ in the top left corner. We will call this matrix $H$.
Notice that $H_{i,j} = a_{i+j}$.
The \emph{auto-correlation} of $\{a_n\}$ is defined as
\begin{equation*}
C(\tau) = \sum_{k=1}^{2^d-1}a_ka_{k+\tau},
\end{equation*}
for $\tau \geq 0$. By Equation 10 on page 82 in~\cite{golomb1967}, we have that $C(\tau) = 2^{d-1}$ if $\tau = 0$ and $C(\tau) = 2^{d-2}$ otherwise.
Consider $H^2$. If $H^2$ has full rank, then so does $H$.
We have that
\begin{eqnarray*}
(H^2)_{i,j} & = & \sum_{k=1}^{2^d-1}H_{i,k}H_{k,j} \\
& = & \sum_{k=1}^{2^d-1}a_{i+k}a_{k+j} \\
& = & \sum_{k=1}^{2^d-1}a_ka_{k+|j-i|} = C(|j-i|),
\end{eqnarray*}
where the indices are taken modulo $2^d-1$.
Hence, $H^2$ is the matrix with $2^{d-1}$ on the diagonal and $2^{d-2}$ elsewhere.
We show that the matrix with $2^{-d+2}-2^{-2d+2}$ on the diagonal and $-2^{-2d+2}$ elsewhere is an inverse of $H^2$.
Let $H' \in \R^{(2^d-1)\times(2^d-1)}$ as follows:
\begin{equation*}
H'_{i,j} = \left\{\begin{array}{ll}
2^{-d+2}-2^{-2d+2} & \textrm{if $i=j$} \\
2^{-2d+2} & \textrm{otherwise}
\end{array}\right.
\end{equation*}
Now we have that $(H^2H')_{i,j} = 1$ if $i = j$:
\begin{eqnarray*}
(H^2H')_{i,i} & = & \sum_{k=1}^{2^d-1} H^2_{i,k}H'_{k,i} \\
& = & 2^{d-1}(2^{-d+2}-2^{-2d+2})+(2^d-2)(2^{d-2}(-2^{-2d+2}) \\
& = & 2^{1}-2^{-d+1}-2^{0}+2^{-d+1} \\
& = & 1
\end{eqnarray*}
We also have $(H^2H')_{i,j} = 0$ if $i \neq j$:
\begin{eqnarray*}
(H^2H')_{i,j} & = & \sum_{k=1}^{2^d-1} H^2_{i,k}H'_{k,j} \\
& = & 2^{d-1}(-2^{-2d+2})+2^{d-2}(2^{-d+2}-2^{-2d+2})+(2^d-3)(2^{d-2}(-2^{-2d+2})) \\
& = & -2^{-d+1}+2^{0}-2^{-d}-2^{0}+3*2^{-d} \\
& = & 0
\end{eqnarray*}
Thus, $H'$ is an inverse of $H^2$.
Hence, $H^2$ and $H$ have full rank and thus the Hankel matrix of $L_d$ has rank $2^d-1$.
This means that the smallest IFA accepting $L_d$ has $2^d-1$ states.
\qed
\end{proof}

\section{Image-binary Büchi Automata}\label{sub-iba}

\subsection{Definitions}\label{sub-iba-defs}

Let $\mathcal{A} = (Q, \Sigma, M, \alpha)$ be like in a weighted automaton over a field $\mathbb{F}$ and let $F$ be a set of final states. We call $\mathcal{A}$ \emph{ultimately stable} if for any $q, q' \in Q$ and $a \in \Sigma$ such that there exists a word $w$ with $M(w)_{q',q} \neq 0$ (i.e., there is a path from $q'$ to $q$ over some word $w$), $M(a)_{q,q'} = 0$ or $M(a)_{q,q'} = 1$, meaning that any edges in a loop have weight 1. For any infinite word $w = w_0w_1\ldots$ we call $q_0q_1\ldots \in Q^\omega$ a \emph{path} over $w$ if $\alpha(q_0) \neq 0$ and for all $i$, $M(w_i)_{q_i,q_{i+1}} \neq 0$. We call $q_0q_1\ldots$ a \emph{final path} if $\mathit{infinite}(q_0q_1\ldots) \cap F \neq \emptyset$, where $\mathit{infinite}(q_0q_1\ldots)$ denotes the set of states in $Q$ that occur infinitely often in the path. We will write $\mathit{FinalPaths}_{\A}(w)$ to denote the set of final paths of an automaton $\A$ over a word $w$.

It is clear that for any path $q_0q_1\ldots$ over a word $w = w_0w_1\ldots$ there exists an $i$ such that for any $j \geq i$, $q_j$ lies on a loop, and therefore we can define the weight of the path $q_0q_1\ldots$ over $w$ to be $\lim_{i \to \infty} \prod_{n \leq i} M(w_n)_{q_n,q_{n+1}}$, denoted by $\mathit{weight}(q_0q_1\ldots, w)$. If $w$ is clear from context, we may simply write $\mathit{weight}(q_0q_1\ldots)$. For any word $w$ with finitely many final paths, we define the weight of $w$ to be the sum of the weights of the final paths over $w$, denoted by $L_\mathcal{A}(w)$. We call $\mathcal{A} = (Q, \Sigma, M, \alpha, F)$ an \emph{image-binary (weighted) Büchi automaton (IBA)} if it is ultimately stable, there exists a bound $N \in \N$ such that $|\mathit{FinalPaths}_{\mathcal{A}}(w)| \leq N$ for any word $w$, and $L_\mathcal{A}(\Sigma^\omega) \subseteq \{0,1\}$, i.e. for all $w \in \Sigma^\omega$, $L_\mathcal{A}(w) \in \{0,1\}$.

If an IBA $\mathcal{A}$ is such that $\alpha \in \{0,1\}^Q$ and $M(a) \in \{0,1\}^{Q \times Q}$ for all $a \in \Sigma$, then $\mathcal{A}$ is called an \emph{unambiguous Büchi automaton (UBA)}. Similarly to the finite word case, we note that this definition of a UBA is essentially equivalent to the classical one, which says that a UBA is an NBA (nondeterministic Büchi automaton) where each word has at most 1 final path. We also see that a \emph{deterministic Büchi automaton (DBA)} essentially is a special case of a UBA, and hence of an IBA.

We will use the following notation: given a finite sequence $a = a_1a_2\ldots a_n$, we will write $\mathord{last}(a)$ to denote $a_n$. Given a (possibly finite) sequence $a = a_1a_2\ldots$ and a character $a_0$ we will write $a_0\cdot a$ to denote the concatenation $a_0a_1a_2\ldots$. We will write $\Bin$ to denote the two element set $\{\bot,\top\}$ and for any set $S$ we will write $\pow^{\leq k}(S)$ to denote the set $\{S' \subseteq S \; \mid \; |S'| \leq k\}$. For ease of notation we will treat $\Bin$ as the set of true ($\top$) and false ($\bot$) predicates, on which the standard Boolean operations apply.

\subsection{IBAs and $k$-Ambiguous NBAs}\label{sub-model-kaba}

In this section we introduce $k$-ambiguous NBAs ($k$-ABAs) and show that they can be exponentially more concise than IBAs. We give a procedure to translate a $k$-ABA into an equivalent IBA using a PSPACE transducer.

A non-deterministic Büchi automaton (NBA) is a tuple $(Q, \Sigma, \delta, Q_0, F)$ where $Q$ is a state set, $\Sigma$ is an alphabet, $\delta : Q \times \Sigma \rightarrow Q$ is a transition relation, $Q_0 \subseteq Q$ is a set of initial states, and $F$ is a set of final states. For any infinite word $w = w_0w_1\ldots$ we call $q_0q_1\ldots \in Q^\omega$ a \emph{path} over $w$ if $q_0 \in Q_0$ and for all $i$, $q_{i+1} \in \delta(q_i, w_i)$. We call $q_0q_1\ldots$ a \emph{final path} if $\mathit{infinite}(q_0q_1\ldots) \cap F \neq \emptyset$, where $\mathit{infinite}(q_0q_1\ldots)$ denotes the set of states in $Q$ that occur infinitely often in the path. We will write $\mathit{FinalPaths}_{\A}(w)$ to denote the set of final paths of an automaton $\A$ over a word $w$. The \emph{language} of an NBA is the set of those words $w$ such that $\mathit{FinalPaths}_{\A}(w) \neq \emptyset$. A $k$-ABA is an NBA such that for every word $w$, $|\mathit{FinalPaths}_{\A}(w)| \leq k$. For the rest of the section, fix a $k$-ambiguous NBA $\kA = (Q, \Sigma, \delta, Q_0, F)$.


We have that $k$-ABAs can be exponentially more succinct than equivalent IBAs:
\begin{lemma}\label{lem:exponlower}
Let $\mathcal{A}$ be a $k$-ABA with $n$ states. The minimal IBA accepting the same language as $\mathcal{A}$ may require at least $2^n$ states, even if $k = n$.
\end{lemma}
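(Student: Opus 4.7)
The plan is to lift the exponential separation between NFAs and IFAs on finite words (\Cref{prop-NFA2IFA}) to the $\omega$-word setting by padding with a single infinite suffix. Concretely, for each $n$, start from Leung's MDFA $\A_n$ on some alphabet~$\Sigma$ with $n$ states whose Hankel matrix has rank $2^n-1$ (used in the proof of \Cref{prop-NFA2IFA}). Viewed as an NFA, $\A_n$ is at most $n$-ambiguous, since it is deterministic from each of its (up to $n$) initial states.

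Next I construct a $k$-ABA $\A_n^\omega$ over $\Sigma \cup \{\$\}$ (with $\$ \notin \Sigma$) recognising $L(\A_n)\cdot\$^\omega$, by keeping the $n$ states of~$\A_n$, adding a $\$$-self-loop on every accepting state, and declaring those states to be the Büchi accepting states. A final path on $w\$^\omega$ must eventually loop on a $\$$-self-loop, so it is precisely a run of $\A_n$ that reads $w$ and ends in an accepting state; hence there are at most $n$ such paths per word, making $\A_n^\omega$ an $n$-ABA with $n$ states.

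The main step is the matching lower bound. Let $\B = (Q, \Sigma\cup\{\$\}, M, \alpha, F)$ be any IBA for $L(\A_n)\cdot\$^\omega$; I aim to manufacture from $\B$ an IFA for $L(\A_n)$ with $|Q|$ states, so that \Cref{prop-NFA2IFA} (via \Cref{prop-Hankel}) forces $|Q| \geq 2^n-1$. For each $q \in Q$, define
\[
\eta'(q) \ := \ \sum_{\pi \in \mathit{FinalPaths}(\$^\omega \text{ from } q)} \mathit{weight}(\pi, \$^\omega)\,.
\]
Ultimate stability guarantees that every path over $\$^\omega$ eventually traverses only $0/1$-weighted edges, so each $\mathit{weight}(\pi,\$^\omega)$ is a finite rational; the bounded-ambiguity assumption makes the sum finite. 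Any final path of $\B$ on $w\$^\omega$ decomposes uniquely at the boundary between $w$ and $\$^\omega$ into a finite prefix ending at some state~$q$ and an infinite $\$$-suffix from~$q$, with the ``final'' property determined entirely by the suffix. Summing over~$q$ yields $L_\B(w\$^\omega) = \alpha M(w)\eta'$, so $\A' := (Q,\Sigma,M,\alpha,\eta')$ is an IFA on~$\Sigma$ with $L(\A')=L(\A_n)$ and $|Q|$ states, giving the bound.

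The step I expect to be the main obstacle is this decomposition argument: making precise that the semantic weight of $\B$ on $w\$^\omega$ factorises as $\alpha M(w) \eta'$, which requires checking that (i) ultimate stability and the ambiguity bound make $\eta'(q)$ a well-defined rational, and (ii) no final path is missed or double-counted when one splits it at the $w$--$\$^\omega$ boundary (in particular, that transient prefix excursions contribute zero since they are not the ``final'' part of the path). Once that is in place, the lower bound follows immediately from the finite-word theory developed earlier.
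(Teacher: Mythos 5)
Your proof takes essentially the same route as the paper's: reduce to the finite-word NFA-versus-IFA separation (\Cref{prop-NFA2IFA}) by appending a $\$$-marked infinite suffix, and turn any IBA for the resulting $\omega$-language into an equally sized IFA whose final vector records the acceptance value of the $\$$-suffix from each state. One small inaccuracy: because your B\"uchi-accepting states are the original accepting states of $\A_n$, your automaton may accept words outside $L(\A_n)\cdot\$^\omega$ (e.g.\ words in $\Sigma^\omega$ whose run visits accepting states infinitely often), which does not harm the lower bound --- it only inspects words of the form $w\$^\omega$ --- but the paper sidesteps this by routing the $\$$-transition to a fresh accepting sink.
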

\begin{proof}
By Proposition \ref{prop-NFA2IFA}, there exists a family of $n$-ambiguous NFAs with exponentially fewer states than the minimal IFAs accepting the same languages.
Let $\mathcal{A}$ be such an NFA on $n$ states, and let $L \subseteq \Sigma^*$ be its language.
Consider the language $(L)_\$$ given by $w \in (L)_\$$ if and only if $w$ can be decomposed as $u\$v$, where $\$$ does not occur in $\Sigma$ and $u \in L$.
We can create an $n$-ABA that accepts this language with $n+1$ states by adding a $\$$-labeled arrow from the final states of $\mathcal{A}$ to an accepting sink state.
However, let $\mathcal{A}'$ be any IBA accepting $(L)_\$$.
Let $\$w$ be any infinite word starting in $\$$, and let $\eta_q = L_{\mathcal{A'}[q]}(\$w)$.
Let $\mathcal{A''}$ be IFA defined as follows: the state set of $\mathcal{A''}$ is equal to the state set of $\mathcal{A'}$, the alphabet is the alphabet of $\mathcal{A}$ (i.e. the alphabet of $\mathcal{A'}$ without $\$$), there exists an $a$-edge between two states $q, q'$ if and only if there exists an $a$-edge between $q$ and $q'$ in $\mathcal{A'}$, and the final vector is given by $\eta$.
We claim that $\mathcal{A''}$ accepts $L$.
Indeed, we have $L_{\mathcal{A''}}(u) = L_{\mathcal{A'}}(u\$w)$.
Hence, $\mathcal{A'}$ has at least $2^n$ states, because otherwise $\mathcal{A''}$ would be an IFA with fewer than $2^n$ states accepting $L$.
\qed
\end{proof}

The rest of this section will be dedicated to converting $k$-ABAs to equivalent IBAs, resulting in an IBA with at most a singly exponential state set size blowup.

By the binomial theorem, $(1+x)^n = \sum_{i=0}^n {n \choose i}x^i$, and hence, setting $x = -1$, $1 = 1 - \sum_{i=0}^n (-1)^i{n \choose i} = 1 - (-1)^0{n \choose 0} - \sum_{i=1}^n (-1)^i{n \choose i} = \sum_{i=1}^n (-1)^{i-1}{n \choose i}$. Hence, for any set $S$, $\sum_{S' \in \pow(S) \setminus \{\emptyset\}} (-1)^{|S'|-1} = \sum_{i=1}^{|S|} (-1)^{i-1} {|S| \choose i} = 1$. Let $R$ be the set of final paths of $\kA$ over a word $w$. We can design an (infinite state) IBA $\kA' = (Q', \Sigma, \Delta', \alpha, F')$ where final paths correspond to subsets of $R$, and where for each $R' \subseteq R$, the final path corresponding to $R'$ has weight $(-1)^{|R'|-1}$. This IBA is given as follows:
\begin{itemize}
	\item $Q' = \pow^{\leq k}(Q^* \times \Bin) \setminus \{\emptyset\}$,
	\item $\alpha_P = (-1)^{|P|-1}$ for each $P \in \pow(Q_0 \times \{\bot\}) \setminus \{\emptyset\}$ and $\alpha_P = 0$ otherwise,
	\item $F' = \pow(Q^* \times \{\top\})$, and
	\item for any $P \in Q'$, let $b''= \bot$ if for all $(r, b) \in P$, $b = \top$, and let $b'' = \top$ otherwise. Let $P'$ be such that:\
	\begin{itemize}
		\item For any $(r, b) \in P$, there exists $(r', b') \in P'$ and $q \in \delta(\mathord{last}(r), a)$ such that $r' = r \cdot q$ and $b' = ((\mathord{last}(r) \in F) \vee b) \wedge b''$, and
		\item For any $(r', b') \in P'$, there exists $(r, b) \in P$ and $q \in \delta(\mathord{last}(r), a)$ such that $r' = r \cdot q$ and $b' = ((\mathord{last}(r) \in F) \vee b) \wedge b''$.
	\end{itemize}
	Then $\Delta(a)_{P, P'} = (-1)^{|P'|-|P|}$. For any other $P'$, $\Delta(a)_{P, P'} = 0$.
\end{itemize}
Intuitively, the bit $b$ in a $(r,b)$-tuple flips to true every time $r$ reaches a final state, and back to false every time all the prefixes have reached a final state. This ensures that every sequence of prefixes in a final path of $\kA'$ visits final states infinitely often. This technique mirrors for instance Safra's construction~\cite{safra1988}.

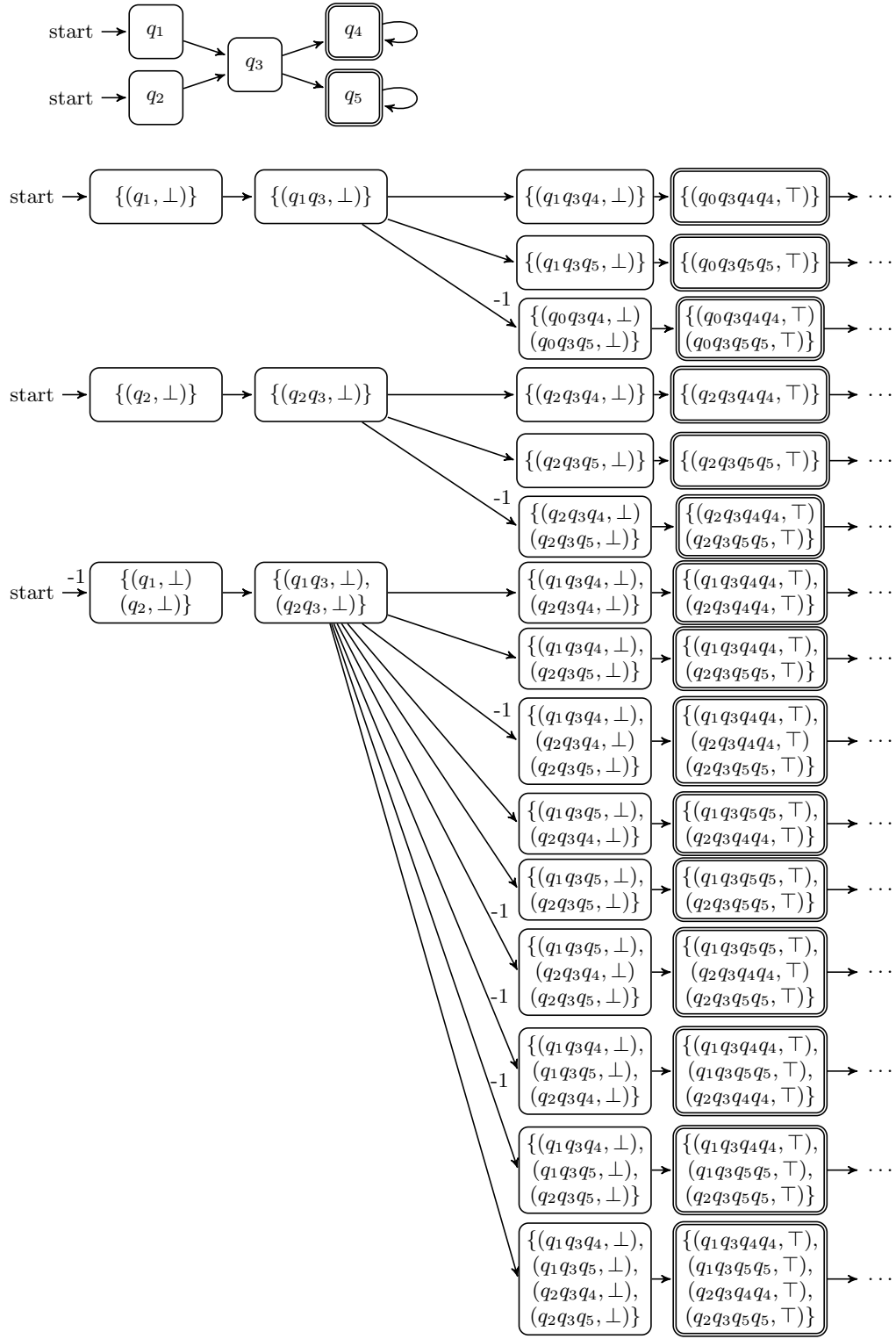
\begin{figure}
\begin{tikzpicture}[->,>=stealth',shorten >= 1pt,auto,node distance=1.5cm,semithick]
\tikzstyle{every state}=[rectangle,rounded corners,draw=black,text=black]
\tikzstyle{pnode}=[minimum width=2cm]

\node[initial,state] (A) at (0,0.5) {$q_1$};
\node[initial,state] (B) at (0,-0.5) {$q_2$};
\node[state] (C) at (1.5,0) {$q_3$};
\node[accepting,state] (D) at (3,0.5) {$q_4$};
\node[accepting,state] (E) at (3,-0.5) {$q_5$};

\path (A) edge (C);
\path (B) edge (C);
\path (C) edge (D);
\path (C) edge (E);
\path (D) edge [loop right] (D);
\path (E) edge [loop right] (E);

\node[initial, state,pnode] (a) at (0, -2) {$\{(q_1, \bot)\}$};
\node[state,pnode] (ac) at (2.5,-2) { $\{(q_1q_3, \bot)\}$};
\node[state,pnode] (acd) at (6.5,-2) { $\{(q_1q_3q_4,\bot)\}$};
\node[state,pnode] (ace) at (6.5,-3) { $\{(q_1q_3q_5,\bot)\}$};
\node[state,pnode,align=center] (acde) at (6.5,-4) { $\{(q_0q_3q_4,\bot)$\\$(q_0q_3q_5,\bot)\}$};
\node[accepting,state,pnode] (acdd) at (9,-2) { $\{(q_0q_3q_4q_4,\top)\}$};
\node[accepting,state,pnode] (acee) at (9,-3) { $\{(q_0q_3q_5q_5,\top)\}$};
\node[accepting,state,pnode,align=center] (acddee) at (9,-4) { $\{(q_0q_3q_4q_4,\top)$\\$(q_0q_3q_5q_5,\top)\}$};

\node[initial, state,pnode] (b) at (0, -5) {$\{(q_2, \bot)\}$};
\node[state,pnode] (bc) at (2.5,-5) { $\{(q_2q_3, \bot)\}$};
\node[state,pnode] (bcd) at (6.5,-5) { $\{(q_2q_3q_4,\bot)\}$};
\node[state,pnode] (bce) at (6.5,-6) { $\{(q_2q_3q_5,\bot)\}$};
\node[state,pnode,align=center] (bcde) at (6.5,-7) { $\{(q_2q_3q_4,\bot)$\\$(q_2q_3q_5,\bot)\}$};
\node[accepting,state,pnode] (bcdd) at (9,-5) { $\{(q_2q_3q_4q_4,\top)\}$};
\node[accepting,state,pnode] (bcee) at (9,-6) { $\{(q_2q_3q_5q_5,\top)\}$};
\node[accepting,state,pnode,align=center] (bcddee) at (9,-7) { $\{(q_2q_3q_4q_4,\top)$\\$(q_2q_3q_5q_5,\top)\}$};

\node[initial, state,pnode,align=center] (ab) at (0, -8) {$\{(q_1, \bot)$\\$(q_2,\bot)\}$};
\node (-1) at (-1.2,-7.75) {-1};
\node[state,pnode,align=center] (abc) at (2.5,-8) { $\{(q_1q_3,\bot),$\\ $(q_2q_3,\bot)\}$};

\node[state,pnode,align=center] (abcdd) at (6.5,-8) { $\{(q_1q_3q_4,\bot),$\\ $(q_2q_3q_4,\bot)\}$};
\node[state,pnode,align=center] (abcde) at (6.5,-9) { $\{(q_1q_3q_4,\bot),$\\ $(q_2q_3q_5,\bot)\}$};
\node[state,pnode,align=center] (abcdde) at (6.5,-10.25) { $\{(q_1q_3q_4,\bot),$\\ $(q_2q_3q_4,\bot)$\\$(q_2q_3q_5,\bot)\}$};

\node[state,pnode,align=center] (abced) at (6.5,-11.5) { $\{(q_1q_3q_5,\bot),$\\ $(q_2q_3q_4,\bot)\}$};
\node[state,pnode,align=center] (abcee) at (6.5,-12.5) { $\{(q_1q_3q_5,\bot),$\\ $(q_2q_3q_5,\bot)\}$};
\node[state,pnode,align=center] (abcede) at (6.5,-13.75) { $\{(q_1q_3q_5,\bot),$\\ $(q_2q_3q_4,\bot)$\\$(q_2q_3q_5,\bot)\}$};

\node[state,pnode,align=center] (abcded) at (6.5,-15.25) { $\{(q_1q_3q_4,\bot),$\\$(q_1q_3q_5,\bot),$\\ $(q_2q_3q_4,\bot)\}$};
\node[state,pnode,align=center] (abcdee) at (6.5,-16.75) { $\{(q_1q_3q_4,\bot),$\\$(q_1q_3q_5,\bot),$\\ $(q_2q_3q_5,\bot)\}$};
\node[state,pnode,align=center] (abcdede) at (6.5,-18.4) { $\{(q_1q_3q_4,\bot),$\\$(q_1q_3q_5,\bot),$\\ $(q_2q_3q_4,\bot),$\\$(q_2q_3q_5,\bot)\}$};

\node[accepting,state,pnode,align=center] (abcddp) at (9,-8) { $\{(q_1q_3q_4q_4,\top),$\\ $(q_2q_3q_4q_4,\top)\}$};
\node[accepting,state,pnode,align=center] (abcdep) at (9,-9) { $\{(q_1q_3q_4q_4,\top),$\\ $(q_2q_3q_5q_5,\top)\}$};
\node[accepting,state,pnode,align=center] (abcddep) at (9,-10.25) { $\{(q_1q_3q_4q_4,\top),$\\ $(q_2q_3q_4q_4,\top)$\\$(q_2q_3q_5q_5,\top)\}$};

\node[accepting,state,pnode,align=center] (abcedp) at (9,-11.5) { $\{(q_1q_3q_5q_5,\top),$\\ $(q_2q_3q_4q_4,\top)\}$};
\node[accepting,state,pnode,align=center] (abceep) at (9,-12.5) { $\{(q_1q_3q_5q_5,\top),$\\ $(q_2q_3q_5q_5,\top)\}$};
\node[accepting,state,pnode,align=center] (abcedep) at (9,-13.75) { $\{(q_1q_3q_5q_5,\top),$\\ $(q_2q_3q_4q_4,\top)$\\$(q_2q_3q_5q_5,\top)\}$};

\node[accepting,state,pnode,align=center] (abcdedp) at (9,-15.25) { $\{(q_1q_3q_4q_4,\top),$\\$(q_1q_3q_5q_5,\top),$\\ $(q_2q_3q_4q_4,\top)\}$};
\node[accepting,state,pnode,align=center] (abcdeep) at (9,-16.75) { $\{(q_1q_3q_4q_4,\top),$\\$(q_1q_3q_5q_5,\top),$\\ $(q_2q_3q_5q_5,\top)\}$};
\node[accepting,state,pnode,align=center] (abcdedep) at (9,-18.4) { $\{(q_1q_3q_4q_4,\top),$\\$(q_1q_3q_5q_5,\top),$\\ $(q_2q_3q_4q_4,\top),$\\$(q_2q_3q_5q_5,\top)\}$};

\node (d1) at (11,-2) {\ldots};
\node (d2) at (11,-3) {\ldots};
\node (d3) at (11,-4) {\ldots};
\node (d4) at (11,-5) {\ldots};
\node (d5) at (11,-6) {\ldots};
\node (d6) at (11,-7) {\ldots};
\node (d7) at (11,-8) {\ldots};
\node (d8) at (11,-9) {\ldots};
\node (d9) at (11,-10.25) {\ldots};
\node (d10) at (11,-11.5) {\ldots};
\node (d11) at (11,-12.5) {\ldots};
\node (d12) at (11,-13.75) {\ldots};
\node (d13) at (11,-15.25) {\ldots};
\node (d14) at (11,-16.75) {\ldots};
\node (d15) at (11,-18.4) {\ldots};

\path (a) edge (ac);
\path (ac) edge (acd.west);
\path (ac) edge (ace.west);
\path (ac) edge node[pos=0.9,above=2] {-1} (acde.west);
\path (acd) edge (acdd);
\path (ace) edge (acee);
\path (acde) edge (acddee);

\path (b) edge (bc);
\path (bc) edge (bcd.west);
\path (bc) edge (bce.west);
\path (bc) edge node[pos=0.9,above=2] {-1} (bcde.west);
\path (bcd) edge (bcdd);
\path (bce) edge (bcee);
\path (bcde) edge (bcddee);

\path (ab) edge (abc);
\path (abc) edge (abcdd.west);
\path (abc) edge (abcde.west);
\path (abc) edge node[pos=0.9,above=2] {-1} (abcdde.west);
\path (abc) edge (abced.west);
\path (abc) edge (abcee.west);
\path (abc) edge node[pos=0.9,above=5] {-1} (abcede.west);
\path (abc) edge node[pos=0.9,above=7] {-1} (abcded.west);
\path (abc) edge node[pos=0.9,above=9] {-1} (abcdee.west);
\path (abc) edge (abcdede.west);
\path (abcdd) edge (abcddp);
\path (abcde) edge (abcdep);
\path (abcdde) edge (abcddep);
\path (abced) edge (abcedp);
\path (abcee) edge (abceep);
\path (abcede) edge (abcedep);
\path (abcded) edge (abcdedp);
\path (abcdee) edge (abcdeep);
\path (abcdede) edge (abcdedep);

\path (acdd) edge (d1);
\path (acee) edge (d2);
\path (acddee) edge (d3);
\path (bcdd) edge (d4);
\path (bcee) edge (d5);
\path (bcddee) edge (d6);
\path (abcddp) edge (d7);
\path (abcdep) edge (d8);
\path (abcddep) edge (d9);
\path (abcedp) edge (d10);
\path (abceep) edge (d11);
\path (abcedep) edge (d12);
\path (abcdedp) edge (d13);
\path (abcdeep) edge (d14);
\path (abcdedep) edge (d15);

\end{tikzpicture}
\caption{An example $4$-ambiguous automaton (above) and its infinite IBA counterpart (below). In the IBA, the weights of the unlabeled edges are 1.}\label{fig:kaba}
\end{figure}

In~\Cref{fig:kaba} we give an example of a 4-ambiguous automaton over a unary alphabet, together with the corresponding infinite $\kA'$.

\begin{lemma}\label{lem:kapequiv}
$\kA'$ is an infinite-state IBA equivalent to $\kA$.
\end{lemma}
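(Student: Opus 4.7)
The plan is to build an explicit bijection between final paths of $\kA'$ over an arbitrary word $w$ and non-empty subsets of the set $R$ of final paths of $\kA$ over $w$, compute the weight of each such path to be $(-1)^{|R'|-1}$ by telescoping, and then invoke the inclusion-exclusion identity stated just before the construction to obtain $L_{\kA'}(w) \in \{0,1\}$ with $L_{\kA'}(w)=1$ exactly when $R \neq \emptyset$. Alongside this, I have to verify the three structural conditions that make $\kA'$ an IBA: ultimate stability, a uniform bound on the number of final paths, and image-binariness of $L_{\kA'}$.

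First I would unpack the transition relation. A path $P_0 P_1 P_2 \ldots$ of $\kA'$ over $w$ amounts to choosing, for each $(r_i,b_i) \in P_i$, a successor state $q \in \delta(\mathrm{last}(r_i), w_i)$, and collecting the resulting pairs $(r_i \cdot q, b_{i+1})$ into $P_{i+1}$. Because $r_{i+1}$ determines its parent $r_i$ (by removing the last symbol), children come from unique parents, and a single parent may have several children; in particular $|P_{i+1}| \ge |P_i|$. Thus the pairs in $P_i$ arrange themselves as the nodes at level $i$ of a finite forest of paths of $\kA$ rooted at $Q_0$, and the limit of $P_0 P_1 \ldots$ is a non-empty finite set $S$ of infinite paths of $\kA$ over $w$, with $|S| \le k$. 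Now analyse the flag $b$: it is set to $\top$ when $r$ visits $F$ and globally reset to $\bot$ only when every prefix in the current set has had $b = \top$. A standard Safra-style argument then shows that $P_0 P_1 \ldots$ is a final path of $\kA'$ (i.e.\ visits all-$\top$ states infinitely often) if and only if every path in $S$ hits $F$ infinitely often, i.e.\ $S \subseteq R$. Conversely, for any non-empty $R' \subseteq R$ there is a unique final path of $\kA'$ whose limit set is exactly $R'$, namely the one where each $P_i$ consists of the length-$i$ prefixes of the paths in $R'$, equipped with the appropriate flags.

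For the weight computation, the weight of the final path corresponding to $R' \subseteq R$ is $\alpha_{P_0} \prod_{i \ge 0} \Delta(w_i)_{P_i,P_{i+1}} = (-1)^{|P_0|-1}\prod_{i\ge 0}(-1)^{|P_{i+1}|-|P_i|}$. The sizes $|P_i|$ are non-decreasing and bounded by $|R'|$, so they eventually stabilise at $|R'|$, and the product telescopes (only finitely many factors are non-trivial) to $(-1)^{|R'|-1}$. Hence
\[
L_{\kA'}(w) \;=\; \sum_{\emptyset \neq R' \subseteq R} (-1)^{|R'|-1},
\]
which by the binomial identity recalled in the paragraph preceding the construction equals $1$ if $R \neq \emptyset$ and $0$ otherwise; therefore $L_{\kA'} = \chi_{L(\kA)}$, as required for equivalence.

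Finally I would verify the IBA conditions. The number of final paths over any $w$ is at most $|\pow(R)\setminus\{\emptyset\}| \le 2^k - 1$, independent of $w$, giving the required uniform bound $N$. Image-binariness follows from the displayed computation. Ultimate stability is the only subtlety: I need that any edge $P \to P'$ lying on a directed cycle in $\kA'$ has weight $1$, i.e.\ $|P'|=|P|$. Since transitions never decrease $|P|$, any cycle must consist of states of the same size, so each transition on a cycle carries weight $(-1)^0 = 1$. The main conceptual obstacle is thus the flag/reset analysis linking the all-$\top$ visits of $\kA'$ to Büchi acceptance of each tracked prefix in $\kA$; once that correspondence is pinned down, the rest is bookkeeping together with the inclusion-exclusion identity already established.
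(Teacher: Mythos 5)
Your proposal is correct and follows essentially the same route as the paper: final paths of $\kA'$ are put in bijection with non-empty subsets of the set $R$ of final paths of $\kA$, their weights telescope to $(-1)^{|R'|-1}$, and the inclusion--exclusion identity yields $L_{\kA'}(w)=1$ iff $R\neq\emptyset$, with the $2^k$ bound and the (vacuous, since prefix lengths strictly increase) loop condition handled as in the paper. The only differences are organizational: the paper splits into accepted/rejected cases (using K\"onig's lemma for the latter and $k$-ambiguity to show the tracked limit set consists of final paths of $\kA$), whereas you characterize finality of a $\kA'$-path directly via the flag/reset mechanism --- the ``Safra-style'' step you rightly identify as the crux and which the paper itself treats with comparable brevity.
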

\begin{proof}
For $\kA'$ to be an IBA equivalent to $\kA$, it needs to satisfy three properties - the edges of $\kA'$ over loops have weight 1, any word $w$ has at most $N$ final paths for some global bound $N$, and the sum of the weights of final paths over $w$ is 1 if $w$ is accepted by $\kA$ and 0 otherwise. Since states in a path of $\kA'$ consist of sets of prefixes of increasing length, we see that $\kA'$ does not have any loops and hence the first property is trivially true. For the second and third properties, let $w$ be any fixed word.

Suppose $w$ is not accepted by $\kA$, then by König's lemma there exists a bound $m$ such that any path of $\kA$ over $w$ has at most $m$ occurences of final states. Let $\rho_1\rho_2\ldots\rho_i$ be any prefix of a path of $\kA'$ over $w$. Since for any $(r,b) \in \rho_i$, $r$ has at most $m$ occurrences of final states, and all the bits are flipped to false any time every prefix in $\rho_1\rho_2\ldots\rho_i$ has visited a final state, we see that $\rho_1\rho_2\ldots\rho_i$ also visits final states at most $m$ times. Therefore, $\kA'$ has no final paths (and hence the sum of the weights is 0) and the second and third properties are true.

Suppose, then, that $w$ is accepted by $\kA$. Let $\rho_1\rho_2\ldots$ be a path of $\kA'$ over $w$. Suppose that for some $i$ there exists $(r,b) \in \rho_i$ such that $r$ is not a prefix of a final path of $\kA$ over $w$. Let $r$ be the shortest such prefix.
Then for large enough $m$ and any $m' > m$ there exists $(r', \bot) \in \rho_{m'}$ where $r$ is a prefix of $r'$. Hence, $\rho_{m'} \not \in F'$ and $\rho_1\rho_2\ldots$ is not a final path.

Finally, let $\rho_1\rho_2\ldots$ be a path of $\kA'$ over $w$ such that for every $i$ and every $(r,b) \in \rho_i$, $r$ is a prefix of a final path of $\kA$ over $w$. Let $R$ be the set of those paths $\rho'$ of $\kA$ over $w$ such that for all $i$ there exists a tuple $(r,b) \in \rho_i$ where $r$ is a prefix of $\rho'$. We have that for any $i$ and any $(r,b) \in \rho_i$ there exists $(r', b') \in \rho_{i+1}$ such that $r' = r \cdot q$ with $q \in \delta(\mathord{last}(r), w_i)$ and for any $(r', b') \in \rho_{i+1}$ there exists $(r, b) \in \rho_i$ such that $r' = r \cdot q$ with $q \in \delta(\mathord{last}(r), w_i)$. Therefore $R$ is nonempty and $\rho_i$ is precisely the set of prefixes of length $i$ of paths in $R$ (paired with some boolean $b$). Hence, paths $\rho_1\rho_2\ldots$ in $\kA'$ over $w$ correspond uniquely to sets of paths of $\kA$ over $w$.

We claim that $R$ is a set of final paths, and that the weight of $\rho_1\rho_2\ldots$ is $(-1)^{|R|-1}$. Indeed, suppose $\rho'_1\rho'_2\ldots \in R$ is not a final path. Since for any $i$, $\rho'_1\rho'_2\ldots\rho'_i$ prefixes a final path, there exists $j > i$ such that $\rho'_1\rho'_2\ldots\rho'_i$ is a prefix of a final path $\rho''_1\rho''_2\ldots$, and $\rho''_j \neq \rho'_j$. Hence $\kA$ has infinitely many final paths over $w$, which contradicts its $k$-ambiguity. Moreover, the weight of any prefix of length $i$ of $\rho_1\rho_2\ldots$ is $(-1)^{|\rho_1|-1}\prod_j (-1)^{|\rho_{j+1}|-|\rho_j|} = (-1)^{|\rho_i|-1}$. Since the size of $\rho_i$ is non-decreasing and bounded from above by $k$, and for any $i$, $\rho_i$ contains prefixes of length $i$ of paths in $R$, for large enough $i$ we have that $|\rho_i| = |R|$, and hence the weight of $\rho_1\rho_2\ldots$ is $(-1)^{|R|-1}$.

Thus, final paths in $\kA'$ over $w$ correspond uniquely to sets of final paths in $\kA$ over $w$, and hence there are at most $2^k$ final paths in $\kA'$ over $w$. Moreover, since by the binomial theorem we have $\sum_{i=1}^{|S|} (-1)^{i-1} {|S| \choose i} = 1$ for any set $S$, and any path in $\kA'$ corresponding to a set $R$ of final paths in $\kA$ has weight $(-1)^{|R|-1}$, we see that the sum of the weights of final paths of $\kA'$ over $w$ is precisely equal to 1 if and only if $\kA$ has an accepting path over $w$.
\qed
\end{proof}

We will construct a finite IBA called the $k$-disambiguation of $\kA$ based on $\kA'$ that accepts the same language as $\kA$.

Let $\pi_{\mathord{last}} : Q' \rightarrow [k]^{Q \times \Bin}$ be defined as $\pi_{\mathord{last}}(P)_{q,b} = |\{(r, b) \in P \mid \mathord{last}(r) = q\}|$.
We extend $\pi_{\mathord{last}}$ over finite and infinite sequences of elements of $Q'$ in the natural way: $\pi_{\mathord{last}}(P_1P_2\ldots) = \pi_{\mathord{last}}(P_1)\cdot\pi_{\mathord{last}}(P_2\ldots)$.
\begin{lemma}\label{lem:finalvecs}
Let $\rho=\rho_1\rho_2\ldots \in (Q')^\omega$ be a path of $\kA'$ over a word $w$ and let $\rho'_1\rho'_2\ldots = \pi_{\mathord{last}}(\rho)$. Then $\rho_1\rho_2\ldots$ is final if and only if for infinitely $i$, $(\rho'_i)_{q,\bot} = 0$ for all $q \in Q$.
\end{lemma}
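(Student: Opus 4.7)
The plan is to unravel the two conditions in terms of the definitions of $F'$ and $\pi_{\mathord{last}}$ and verify they agree for each individual index~$i$; the ``infinitely often'' quantifier then transfers across transparently.

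First I would note that by the definition of $F'$, the state $\rho_i$ lies in $F'$ if and only if every tuple $(r,b) \in \rho_i$ has $b = \top$; equivalently, no tuple in $\rho_i$ carries the boolean $\bot$. Next, I would read off from the definition of $\pi_{\mathord{last}}$ that
\[
(\rho'_i)_{q,\bot} \;=\; |\{(r,\bot) \in \rho_i \mid \mathord{last}(r) = q\}|,
\]
so the statement ``$(\rho'_i)_{q,\bot} = 0$ for all $q \in Q$'' is exactly the statement that $\rho_i$ contains no tuple with boolean $\bot$. Putting these together, I obtain the pointwise equivalence
\[
\rho_i \in F' \iff (\rho'_i)_{q,\bot} = 0 \text{ for all } q \in Q.
\]

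Finally, I would apply the Büchi acceptance condition: by the definition in \Cref{sub-iba-defs}, $\rho_1\rho_2\ldots$ is a final path if and only if $\mathit{infinite}(\rho_1\rho_2\ldots) \cap F' \neq \emptyset$, which (since $Q'$ may be infinite but the condition still means that some final state is visited infinitely often) is equivalent to saying that $\rho_i \in F'$ holds for infinitely many~$i$. Combined with the pointwise equivalence above, this yields the claim.

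There is no real obstacle here; the lemma is essentially a reformulation of the acceptance condition of $\kA'$ in terms of the vector-valued projection $\pi_{\mathord{last}}$. The only minor subtlety to flag is that $\mathit{infinite}(\rho_1\rho_2\ldots) \cap F' \neq \emptyset$ is indeed equivalent to ``$\rho_i \in F'$ for infinitely many $i$'', which is immediate since $F'$-membership depends only on the single state $\rho_i$ and not on the identity of which element of $F'$ it is.
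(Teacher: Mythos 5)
Your proof is correct in substance and is just the written-out version of what the paper dismisses as ``Trivial'': the pointwise equivalence $\rho_i \in F' \iff (\rho'_i)_{q,\bot}=0$ for all $q$ follows immediately from the definitions of $F'$ and $\pi_{\mathord{last}}$, and the ``infinitely often'' quantifier transfers. The one place where your argument is not quite right is the very subtlety you flag: for an \emph{infinite} state set, ``$\rho_i \in F'$ for infinitely many $i$'' is not equivalent to ``$\mathit{infinite}(\rho)\cap F'\neq\emptyset$'' (infinitely many indices could hit pairwise distinct final states, each occurring once), and your justification---that membership depends only on the single state---does not bridge that gap. Indeed, in $\kA'$ the states along a path are sets of prefixes of strictly increasing length, so no state ever repeats and $\mathit{infinite}(\rho)$ is always empty; read literally, the paper's definition would make every path of $\kA'$ non-final. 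The acceptance condition for $\kA'$ must therefore be understood as ``$\rho_i \in F'$ for infinitely many indices $i$,'' which is the reading the paper uses throughout (e.g.\ in the proof of Lemma~\ref{lem:kapequiv}) and the one your pointwise equivalence correctly captures; this is an imprecision inherited from the paper's definitions rather than a flaw in your core argument, but your claimed resolution of it should be replaced by an explicit appeal to that intended reading.
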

\begin{proof}
Trivial.
\qed
\end{proof}
 Paths in our $k$-disambiguation will be sequences in $([k]^{Q \times \Bin})^\omega$ such that there exist paths in $\kA'$ that map to that sequence. However, there is not a one-to-one correspondence between sequences over $[k]^{Q \times \Bin}$ and paths in $\kA'$: in Figure \ref{fig:kaba}, for instance, both
$\{(q_1,\bot),(q_2,\bot)\}$ $\{(q_1q_3,\bot),(q_2q_3,\bot)\}$ $\{(q_1q_3q_4,\bot),(q_2q_3q_5,\bot)\}$ $\{(q_1q_3q_4q_4,\top),(q_2q_3q_5q_5,\top)\}\ldots$ and  $\{(q_1,\bot),(q_2,\bot)\}$ $\{(q_1q_3,\bot),(q_2q_3,\bot)\}$ $\{(q_1q_3q_5,\bot),(q_2q_3q_4,\bot)\}$ $\{(q_1q_3q_5q_5,\top),(q_2q_3q_4q_4,\top)\}\ldots$ map to the sequence $(1,1,0,0,0,0,0,0,0,0)^\top$ $(0,0,2,0,0,0,0,0,0,0)^\top$ $((0,0,0,1,1,0,0,0,0,0)^\top$ $(0,0,0,0,0,0,0,0,1,1)^\top)^\omega$, where the order of the vector components is given by $((q_1,\bot),(q_2,\bot),(q_3,\bot),(q_4,\bot),(q_5,\bot),(q_1,\top),(q_2,\top),(q_3,\top),(q_4,\top),(q_5,\top))$.

Fix any $\vec{r}, \vec{r'} \in [k]^{Q\times \Bin}$ and $a \in \Sigma$. Let $P$ be any state in $Q'$ such that $\pi_{\mathord{last}}(P) = \vec{r}$. As it turns out, the number of states $P'$ with $\pi_{\mathord{last}}(P') = \vec{r'}$ where $P'$ is an $a$-successor of $P$ only depends on $\vec{r}$, $a$, and $\vec{r'}$. We call this number $w(\vec{r},a,\vec{r'})$.
\begin{lemma}\label{lem:succnum}
The number $w(\vec{r},a,\vec{r'})$ is unique and at most exponential in $k$.
\end{lemma}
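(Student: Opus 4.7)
\emph{Proof sketch.}
The plan is to identify each $a$-successor $P'$ of a state $P$ with a family of local choices: for every $(r,b) \in P$ one picks a non-empty subset $T_{(r,b)} \subseteq \delta(\mathord{last}(r), a)$, and then $P' = \{(r \cdot q, b') : (r,b) \in P,\ q \in T_{(r,b)}\}$, where $b' = ((\mathord{last}(r) \in F) \vee b) \wedge b''$ and $b''$ is the global flag of the construction (which is $\top$ exactly when $P$ contains a tuple whose second component is $\bot$). Because distinct elements of $P$ carry distinct prefixes, distinct families $(T_{(r,b)})_{(r,b) \in P}$ yield distinct $P'$, and by the two conditions defining the transitions of $\kA'$ every $a$-successor arises from exactly one such family.

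The key observation is that every ingredient above factors through the projection $\pi_{\mathord{last}}$: the range $\delta(\mathord{last}(r), a)$, the bit update $((\mathord{last}(r) \in F) \vee b) \wedge b''$, and the flag $b''$ all depend on $(r,b)$ only through $(\mathord{last}(r), b)$, and $b''$ in turn depends on $P$ only through whether $\vec{r}_{q,\bot} > 0$ for some $q$. I would then partition $P = \bigsqcup_{(q,b)} P_{q,b}$ with $|P_{q,b}| = \vec{r}_{q,b}$ and argue as follows for uniqueness: given $P_1, P_2$ with $\pi_{\mathord{last}}(P_1) = \pi_{\mathord{last}}(P_2) = \vec{r}$, any type-preserving bijection $\phi : P_1 \to P_2$ sends a subset-family $T$ for $P_1$ to the family $T \circ \phi^{-1}$ for $P_2$; the contribution formula just derived shows that this bijection commutes with $P' \mapsto \pi_{\mathord{last}}(P')$, so the fibers over $\vec{r'}$ have equal cardinality and $w(\vec{r}, a, \vec{r'})$ depends only on $(\vec{r}, a, \vec{r'})$.

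For the exponential bound, each of the $|P| \le k$ pairs $(r,b) \in P$ contributes at most $2^{|\delta(\mathord{last}(r),a)|} \le 2^{|Q|}$ choices for $T_{(r,b)}$, so there are at most $2^{k|Q|}$ subset-families in total, whence $w(\vec{r}, a, \vec{r'}) \le 2^{k|Q|} = 2^{O(k)}$, which is exponential in $k$. The only real obstacle is the bookkeeping needed to verify rigorously that the bit update and the flag $b''$ genuinely factor through $\pi_{\mathord{last}}$; once that is spelled out, both well-definedness and the bound fall out by counting.
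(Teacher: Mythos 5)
Your proof is correct and takes essentially the same route as the paper's: both parametrize the $a$-successors of a state $P$ by families of nonempty subsets of $\delta(\mathord{last}(r),a)$, one per tuple in $P$, observe that the set of admissible families and the induced type vector $\pi_{\mathord{last}}(P')$ depend on $P$ only through $\vec{r}$ (the paper makes this explicit by re-indexing tuples as triples $(q,b,i)$ with $i \le \vec{r}_{q,b}$, you via a type-preserving bijection), and then bound the number of families by an exponential in $k$. Your bound $2^{k|Q|}$ is in fact slightly sharper than the paper's $2^{2k|Q|^2}$, and both arguments leave the same minor point implicit (that distinct families yield distinct successors, which relies on reachable states never containing the same prefix with both bit values).
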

\begin{proof}
Given $\vec{r}, \vec{r'} \in [k]^{Q \times \B}$ and $a \in \Sigma$, let
\begin{eqnarray*}
C = \{f : Q \times \B \times [k] \rightarrow \pow(Q) \mid & \forall (q,b,i) \in Q \times \B\times[k] : \\
& ((i > \vec{r}_{q,b}) \rightarrow f(q,b,i) = \emptyset) \land \\
&  ((i \leq \vec{r}_{q,b}) \rightarrow f(q,b,i) \in \pow(\delta(q,a)) \setminus \{\emptyset\})\}.
\end{eqnarray*}
Intuitively, functions in $C$ map the final states of prefixes in a state $P$ with $\pi_{\mathord{last}}(P) = \vec{r}$ to nonempty sets of successor states. Let $b'' = \bot$ if $b = \top$ for all $(r,b) \in P$, and let $b'' = \top$ otherwise. Let $C'$ be the set of those $f \in C$ such that for any $(q',b') \in Q \times \B$, $|\{(q,b,i) \in Q \times \B \times [k] \mid (b' = (q \in F \lor b) \land b'') \land q' \in f(q,b,i)\}| = \vec{r'}_{q',b'}$. We claim that $w(\vec{r},a,\vec{r'}) = |C'|$.

Fix an ordering on $Q^*$. Since $\pi_{\mathord{last}}(P) = \vec{r}$, we see that for any $q \in Q, b \in \B$ there are $\vec{r}_{q,b}$ tuples $(s,b)$ in $P$ where $\mathord{last}(s) = q$. Let $s_{q,b,i}$ denote the $i$'th (under the ordering on $Q^*$) path in $P$ that is paired with bit $b$ and ends in state $q$. Then for any $f \in C$, $f(q,b,i)$ is a nonempty set of $a$-successors of $q$. Let $b'' = \bot$ if $b' = \top$ for every $(s', b') \in P$ and let $b'' = \bot$ otherwise. If $q \in F$, let $b' = \top \land b''$, and otherwise let $b' = b \land b''$. Hence, for any $f \in C$ the following is an $a$-successor of $P$ in $\kA'$:
\begin{equation*}
P' = \left\{(s_{q,b,i}\cdot q', b') \mid (s_{q,b,i},b)\in P \land q' \in f(q,b,i)\right\}.
\end{equation*}
If $f \in C'$, then by definition of $C'$, for any $q', b'$, $\pi_{\mathord{last}}(P')_{q',b'} = \vec{r'}_{q',b'}$, and hence $\pi_{\mathord{last}}(P') = \vec{r'}$. Since each $f \in C$ gives rise to a unique successor, this means $P$ has $|C'|$ $a$-successors $P'$ with $\pi_{\mathord{last}}(P') = \vec{r'}$. Moreover, $|C'| \leq |C| \leq 2^{2k|Q|^2}$.
\qed
\end{proof}

For a state $\vec{r} \in Q''$ we will write $\mathord{size}(\vec{r}) = \sum_{(q,b)} \vec{r}_{q,b}$ to denote the size of $\vec{r}$. We define the IBA $\kdis'(\kA) = (Q'', \Sigma, \Delta'', \alpha', F'')$ where:
\begin{itemize}
	\item $Q'' = ([k]^{Q \times \Bin}) \setminus \{\vec{0}\}$,
	\item $\Delta''(a)_{\vec{r},\vec{r'}} = (-1)^{\mathord{size}(\vec{r'})-\mathord{size}(\vec{r})}w(\vec{r},a,\vec{r'})$
	\item $\alpha'_{\vec{r}} = (-1)^{\mathord{size}(\vec{r}) - 1}$ for every $\vec{r}$ with:
	\begin{itemize}
		\item $\vec{r}_{q,b} = 0$ if $q \not \in Q_0$ or $b = \top$, and
		\item $\vec{r}_{q, b} \leq 1$ otherwise.
	\end{itemize}
	$\alpha'_{\vec{r}} = 0$ otherwise.
	\item $F'' = \{\vec{r} \in Q'' \mid \forall (q, \bot) \in Q'' : \vec{r}_{q, \bot} = 0\}$.
\end{itemize}
\begin{figure}
\begin{tikzpicture}[->,>=stealth',shorten >= 1pt,auto,node distance=1.5cm,semithick]
\tikzstyle{every state}=[rectangle,rounded corners,draw=black,text=black]

\node[initial,state,align=center] (10000) at (0,0) {$\{(q_1,\bot)\}$};
\node[initial,state,align=center] (01000) at (0,-3) {$\{(q_2,\bot)\}$};
\node[state,align=center] (00100) at (1,-1.5) {$\{q_3,\bot)\}$};
\node[state,align=center] (00010) at (2,0) {$\{(q_4,\bot)\}$};
\node[accepting,state,align=center] (0001b0) at (4,0) {$\{(q_4,\top)\}$};
\node[state,align=center] (00001) at (3,-1.5) {$\{(q_5,\bot)\}$};
\node[accepting,state,align=center] (00001b) at (5, -1.5) {$\{(q_5,\top)\}$};
\node[state,align=center] (00011) at (2.5,-3) {$\{(q_4,\bot),(q_5,\bot)\}$};
\node[accepting,state,align=center] (0001b1b) at (5.5,-3) {$\{(q_4,\top),(q_5,\top)\}$};

\path (10000) edge (00100);
\path (01000) edge (00100);
\path (00100) edge (00010);
\path (00100) edge (00001);
\path (00010) edge [bend left] (0001b0);
\path (0001b0) edge [bend left] (00010);
\path (00001) edge [bend left] (00001b);
\path (00001b) edge [bend left] (00001);
\path (00100) edge node[above] {-1} (00011);
\path (00011) edge [bend left] (0001b1b);
\path (0001b1b) edge [bend left] (00011);

\node[initial,state,align=center] (11000) at (0,-7.5) {$\{(q_1,\bot),(q_2,\bot)\}$};
\node (-1) at (-1.5,-7) {-1};
\node[state,align=center] (00200) at (3,-7.5) {$\{(q_3,\bot),(q_3,\bot)\}$};
\node[state,align=center] (00002) at (3,-6) {$\{q_5,\bot),(q_5,\bot)\}$};
\node[accepting,state,align=center] (00002b) at (3,-4.5) {$\{(q_5,\top),(q_5,\top)\}$};
\node[state,align=center] (00011) at (6,-6) {$\{(q_4,\bot),(q_5,\bot)\}$};
\node[accepting,state,align=center] (0001b1b) at (6,-4.5) {$\{(q_4,\top),(q_5,\top)\}$};
\node[state,align=center] (00012) at (6,-7.5) {$\{(q_4,\bot),(q_5,\bot)$\\$(q_5,\bot)\}$};
\node[accepting,state,align=center] (0001b2b) at (9,-7.5) {$\{(q_4,\top),(q_5,\top)$\\$(q_5,\top)\}$};
\node[state,align=center] (00020) at (6,-9) {$\{(q_4,\bot),(q_4,\bot)\}$};
\node[accepting,state,align=center] (0002b0) at (6,-10.5) {$\{(q_4,\top),(q_4,\top)\}$};
\node[state,align=center] (00021) at (3,-9) {$\{(q_4,\bot),(q_4,\bot)$\\$(q_5,\bot)\}$};
\node[accepting,state,align=center] (0002b1b) at (3,-10.5) {$\{(q_4,\top),(q_4,\top)$\\$(q_5,\top)\}$};
\node[state,align=center] (00022) at (0,-9) {$\{(q_4,\bot),(q_4,\bot)$\\$(q_5,\bot),(q_5,\bot)\}$};
\node[accepting,state,align=center] (0002b2b) at (0,-10.5) {$\{(q_4,\top),(q_4,\top)$\\$(q_5,\top),(q_5,\top)\}$};

\path (11000) edge (00200);
\path (00200) edge (00002);
\path (00200) edge node[above] {2} (00011);
\path (00200) edge node[above] {-2} (00012);
\path (00200) edge (00020);
\path (00200) edge node[right] {-2} (00021);
\path (00200) edge (00022);

\path (00002) edge[bend left] (00002b);
\path (00002b) edge[bend left] (00002);
\path (00011) edge[bend left] (0001b1b);
\path (0001b1b) edge[bend left] (00011);
\path (00012) edge[bend left] (0001b2b);
\path (0001b2b) edge[bend left] (00012);
\path (00020) edge[bend left] (0002b0);
\path (0002b0) edge[bend left] (00020);
\path (00021) edge[bend left] (0002b1b);
\path (0002b1b) edge[bend left] (00021);
\path (00022) edge[bend left] (0002b2b);
\path (0002b2b) edge[bend left] (00022);

\end{tikzpicture}
\caption{The $k$-disambiguation of the automaton in Figure \ref{fig:kaba}. The elements of $[k]^{Q \times \Bin}$ are represented by multisets. Unlabeled edges have weight 1.}\label{fig:kdiskaba}
\end{figure}
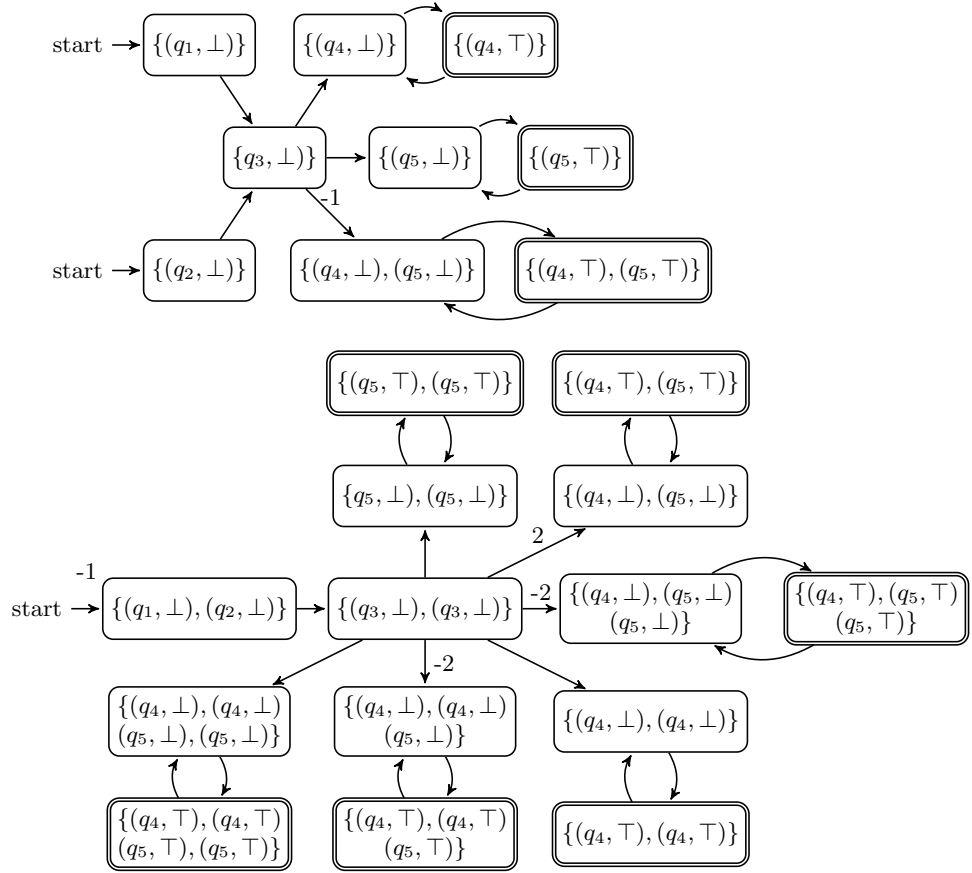 
The IBA $\kdis(\kA)$ (the $k$-disambiguation of $\kA$) is then defined as $\kdis'(\kA)$ restricted to those reachable states in $Q''$ that can reach a loop over a final state. This trimness condition helps the proofs later on, but could be omitted. In~\Cref{fig:kdiskaba} we see the $k$-disambiguation of the automaton in~\Cref{fig:kaba}.

The weights in $\kdis(\kA)$ count the number of equivalent paths in $\kA'$, where two paths $\rho, \rho' \in (Q')^\omega$ are equivalent if $\pi_{\mathord{last}}(\rho) = \pi_{\mathord{last}}(\rho')$:
\begin{lemma}\label{lem:countruns}
Let $\rho \in (Q'')^\omega$ be a path in $\kdis(\kA)$ over a word $w$. Let $R$ be the set of those paths $\rho'$ in $\kA'$ over $w$ such that $\pi_{\mathord{last}}(\rho') = \rho$, and let $n = \max_i \mathord{size}(\rho_i)$. Then $\mathord{weight}(\rho) = (-1)^{n-1}|R|$.
\end{lemma}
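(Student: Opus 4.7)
The plan is to compute $\mathord{weight}(\rho)$ by a telescoping argument and to count $|R|$ by a parallel induction over finite prefixes, using Lemma~\ref{lem:succnum} for the per-step branching factor, and then reconcile the two in the limit using the fact that $\rho$ eventually stays on a cycle of the finite automaton $\kdis(\kA)$.

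First I would unfold $\mathord{weight}(\rho)$. From the definitions of $\alpha'$ and $\Delta''$ we have $\alpha'_{\rho_1} = (-1)^{\mathord{size}(\rho_1) - 1}$ and $\Delta''(w_i)_{\rho_i, \rho_{i+1}} = (-1)^{\mathord{size}(\rho_{i+1}) - \mathord{size}(\rho_i)} \, w(\rho_i, w_i, \rho_{i+1})$. In $\kA'$ each $(r,b) \in P_j$ extends in every successor to at least one pair of the form $(r \cdot q, b')$, and extensions of distinct prefixes remain distinct; together with the bound $|P_j| \le k$ built into~$\kA'$, this forces $\mathord{size}(\rho_j) = |P_j|$ to be monotonically non-decreasing and to stabilise at $n = \max_j \mathord{size}(\rho_j)$. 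The sign factors then telescope to $(-1)^{n-1}$, giving $\mathord{weight}(\rho) = (-1)^{n-1} \prod_{i \geq 1} w(\rho_i, w_i, \rho_{i+1})$.

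Next I would count finite prefixes. Let $R_i$ denote the set of length-$i$ paths $P_1 \ldots P_i$ of $\kA'$ starting at an initial state with $\pi_{\mathord{last}}(P_j) = \rho_j$ for every $j \leq i$. Because initial states of $\kA'$ are subsets of $Q_0 \times \{\bot\}$ and $\rho_1$, being an initial state of $\kdis(\kA)$, has multiplicities in $\{0,1\}$ supported on $Q_0 \times \{\bot\}$, the set $P_1 = \{(q,\bot) \mid \rho_1(q,\bot) = 1\}$ is the unique initial state projecting to $\rho_1$, so $|R_1| = 1$. The inductive step is exactly Lemma~\ref{lem:succnum}: each prefix in $R_i$ has precisely $w(\rho_i, w_i, \rho_{i+1})$ extensions in $R_{i+1}$, and hence $|R_i| = \prod_{j=1}^{i-1} w(\rho_j, w_j, \rho_{j+1})$.

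The main obstacle is to pass from the finite counts $|R_i|$ to $|R|$ itself. Here I would use that $\rho$, as an infinite path in the finite automaton $\kdis(\kA)$, is ultimately periodic on some cycle. On that cycle $\mathord{size}(\rho_j)$ is constant, so the sign factor in $\Delta''$ disappears and $\Delta''(w_j)_{\rho_j, \rho_{j+1}} = w(\rho_j, w_j, \rho_{j+1})$; ultimate stability of the IBA $\kdis(\kA)$ then forces each cycle edge to have weight~$1$, so $w(\rho_j, w_j, \rho_{j+1}) = 1$ for all $j \ge i_0$ for some $i_0$. Thus $|R_i| = |R_{i_0}|$ for every $i \ge i_0$, and each prefix in $R_{i_0}$ admits a unique forward extension at every subsequent step, hence a unique infinite extension in~$R$; conversely every element of~$R$ restricts to a prefix in $R_{i_0}$. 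This gives a bijection $R \to R_{i_0}$, so $|R| = |R_{i_0}| = \prod_{j \ge 1} w(\rho_j, w_j, \rho_{j+1})$, which combined with the telescoped formula yields $\mathord{weight}(\rho) = (-1)^{n-1} |R|$ as required.
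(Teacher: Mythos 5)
Your first two paragraphs follow essentially the same route as the paper: telescoping the sign factors to $(-1)^{n-1}$ and using Lemma~\ref{lem:succnum} to get the recurrence $|R_{i+1}| = w(\rho_i,w_i,\rho_{i+1})\,|R_i|$. The problem is in your final step, where you pass from the finite counts to $|R|$. You justify that the factors $w(\rho_j,w_j,\rho_{j+1})$ are eventually $1$ by appealing to ``ultimate stability of the IBA $\kdis(\kA)$'', i.e.\ to the fact that edges of $\kdis(\kA)$ lying on cycles have weight~$1$. But that $\kdis(\kA)$ is an IBA at all --- in particular that its cycle edges have weight~$1$ --- is exactly the content of Theorem~\ref{thm:kaequiv}, and the paper's proof of that theorem derives the cycle-weight-$1$ property \emph{from} Lemma~\ref{lem:countruns} (by iterating a cycle $C$ times and observing that a weight $\ge 2$ would produce $w(\vec{r},a,\vec{r'})^C$ final paths in $\kA'$, contradicting the global bound from Lemma~\ref{lem:kapequiv}). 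So your argument is circular: the lemma must be proved before, and without, the assumption that $\kdis(\kA)$ is ultimately stable. A secondary inaccuracy is the claim that an infinite path in the finite automaton $\kdis(\kA)$ is ultimately periodic; it need not be, since $w$ need not be ultimately periodic. What is true (and repairable) is that all edges traversed after some point lie on cycles, but that does not help with the circularity.

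The fix is to do what the paper does: treat both sides as limits of the finite-prefix identity. From your own two computations one gets, for every $m$, that the weight of the length-$m$ prefix of $\rho$ equals $(-1)^{\mathord{size}(\rho_{m+1})-1}|R_{m+1}|$, and since $\mathord{size}(\rho_i)$ is non-decreasing and bounded it stabilises at $n$, so $\mathord{weight}(\rho) = \lim_m (-1)^{n-1}|R_{m+1}| = (-1)^{n-1}|R|$ (with $|R| = \lim_m |R_m|$ because an infinite path is determined by its prefixes and, by Lemma~\ref{lem:succnum} applied inductively along $\rho$, every length-$m$ path projecting to $\rho_1\cdots\rho_m$ extends). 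No claim about eventual weight-$1$ edges is needed; the identity is simply read in the limit, and finiteness of $|R|$ is only established afterwards in Theorem~\ref{thm:kaequiv}.
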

\begin{proof}
Let $R_i$ be the set of prefixes of length $i$ of paths in $R$. Let $\rho'_i$ be any path in $R_i$. By Lemma \ref{lem:succnum}, $\rho'_i$ has $w(\rho_i, w_i, \rho_{i+1})$ $w_i$-successors that map to $\rho_{i+1}$ under $\pi_{\mathord{last}}$. Hence, $|R_{i+1}| = w(\rho_i,w_i,\rho_{i+1})|R_i|$ and in particular, $|R| = \lim_{n \to \infty} \prod_{i=1}^n w(\rho_i,w_i,\rho_{i+1})$. Moreover, $\lim_{n \to \infty} (-1)^{\mathord{size}(\rho_1)-1}\prod_{i=1}^n (-1)^{\mathord{size}(\rho_{i+1})-\mathord{size}(\rho_i)} = (-1)^{n-1}$, and hence $\mathord{weight}(\rho) =\lim_{n \to \infty} (-1)^{\mathord{size}(\rho_1)-1}\prod_{i=1}^n (-1)^{\mathord{size}(\rho_{i+1})-\mathord{size}(\rho_i)}w(\rho_i,w_i,\rho_{i+1}) = (-1)^{n-1}|R|$.
\qed
\end{proof}

Since by Lemma \ref{lem:finalvecs}, a path $\rho$ in $\kdis(\kA)$ is final whenever any path $\rho'$ in $\kA'$ with $\pi_{\mathord{last}}(\rho') = \rho$ is, this means $L_{\kdis(\kA)}=L_{\kA'}$. This proves the final result:
\begin{theorem}\label{thm:kaequiv}
$\kdis(\kA)$ is an IBA that accepts the same language as $\kA$.
\end{theorem}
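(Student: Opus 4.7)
The plan is to assemble the three preceding lemmas into the equivalence $L_{\kdis(\kA)} = L_\kA$ and then separately verify the three IBA conditions. The core identity to prove is $L_{\kdis(\kA)}(w) = L_{\kA'}(w)$ for every infinite word $w$; this, combined with \Cref{lem:kapequiv}, yields $L_{\kdis(\kA)} = L_\kA$ and image-binariness of $L_{\kdis(\kA)}$ at the same time.

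To establish the identity I would partition the set of final paths of $\kA'$ over $w$ according to their image under $\pi_{\mathord{last}}$. By \Cref{lem:finalvecs}, finality in $\kA'$ depends only on this projection, so each class consists entirely of final paths of $\kA'$ and projects to a unique final path $\rho$ of $\kdis(\kA)$; conversely every final path of $\kdis(\kA)$ arises in this way from its preimage set $R$. For a fixed $\rho$, \Cref{lem:countruns} gives $\mathit{weight}(\rho) = (-1)^{n-1}|R|$, where $n$ is the stabilised value of $\mathit{size}(\rho_i)$. The analysis inside \Cref{lem:kapequiv} shows, on the other hand, that each $\rho' \in R$ corresponds to a set of exactly $n$ final $\kA$-paths and so has $\kA'$-weight $(-1)^{n-1}$. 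Summing over $\rho' \in R$ matches $\mathit{weight}(\rho)$, and summing over all final $\rho$ of $\kdis(\kA)$ yields the desired identity.

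It then remains to verify the IBA axioms for $\kdis(\kA)$. Image-binariness is immediate from the identity above together with \Cref{lem:kapequiv}. Bounded ambiguity follows from the bound of $2^k$ final paths of $\kA'$ over $w$ noted in \Cref{lem:kapequiv}: their $\pi_{\mathord{last}}$-images supply at most $2^k$ final paths in $\kdis(\kA)$ over any $w$.

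The main obstacle is ultimate stability: every edge lying on a loop of $\kdis(\kA)$ must carry weight $0$ or $1$. On any loop the sizes $\mathit{size}(\vec{r})$ are constant, so the sign factor $(-1)^{\mathit{size}(\vec{r}')-\mathit{size}(\vec{r})}$ is $+1$, reducing the task to showing that $w(\vec{r},a,\vec{r}') \in \{0,1\}$ for loop edges. I would argue this using the trimness restriction built into the definition of $\kdis(\kA)$: a state on a reachable loop tracks a fixed finite collection of $\kA$-paths whose shape has already stabilised, and hence each $\kA'$-preimage of $\vec{r}$ admits a unique $a$-successor whose $\pi_{\mathord{last}}$ keeps it on the same loop, forcing $w(\vec{r},a,\vec{r}') = 1$ on that edge.
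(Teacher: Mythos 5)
Your overall architecture matches the paper's: reduce everything to the identity $L_{\kdis(\kA)}(w) = L_{\kA'}(w)$ via \Cref{lem:kapequiv}, establish that identity by partitioning the final paths of $\kA'$ over $w$ into $\pi_{\mathord{last}}$-equivalence classes (using \Cref{lem:finalvecs} for finality and \Cref{lem:countruns} for the weights, with each member of a class having weight $(-1)^{n-1}$ and the class of size $|R|$ contributing $(-1)^{n-1}|R|$, which is exactly the weight of the corresponding $\kdis(\kA)$-path), and get the $2^k$ bound on ambiguity for free. All of that is sound and is essentially the paper's argument.

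The gap is in your treatment of ultimate stability. You correctly reduce it to showing $w(\vec{r},a,\vec{r}') = 1$ on loop edges (the sign factor being $+1$ since sizes are constant on a loop), but your justification --- that a state on a reachable loop ``tracks a fixed finite collection of $\kA$-paths whose shape has already stabilised, and hence each $\kA'$-preimage of $\vec{r}$ admits a unique $a$-successor'' --- is an assertion, not an argument. Nothing about trimness or about sizes having stabilised prevents a preimage $P$ from having two distinct $a$-successors $P'_1 \neq P'_2$ with $\pi_{\mathord{last}}(P'_1) = \pi_{\mathord{last}}(P'_2) = \vec{r}'$: this happens whenever some $q$ counted by $\vec{r}$ has $|\delta(q,a)| \geq 2$ and the branches can be recombined into the same multiset of last states. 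The paper's actual argument is a pumping argument: using trimness, choose a word $w_1\cdots w_m(w_{m+1}\cdots w_n)^C w_{n+1}\cdots$ that reaches $\vec{r}$, traverses the loop $C$ times, and then completes to a final path. The $\kdis(\kA)$-path so obtained has absolute weight at least $w(\vec{r},a,\vec{r}')^C$, so by \Cref{lem:countruns} and \Cref{lem:finalvecs} the automaton $\kA'$ has at least $w(\vec{r},a,\vec{r}')^C$ final paths over that word. Since $C$ is arbitrary and the number of final paths of $\kA'$ over any word is globally bounded (by $2^k$, from \Cref{lem:kapequiv}), this forces $w(\vec{r},a,\vec{r}') = 1$. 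Without invoking the global ambiguity bound of $\kA'$ in this way, the uniqueness you claim does not follow.
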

\begin{proof}
We will prove equivalence between $\kA'$ and $\kdis(\kA)$. Combined with Lemma \ref{lem:kapequiv}, this gives us the required result. As before, we need to show that the edges of $\kdis(\kA)$ over loops have weight 1, any word $w$ has at most $N$ final paths for some global bound $N$, and the sum of the weights of final paths over $w$ is 1 if $w$ is accepted by $\kA'$ and 0 otherwise.

Pick any loop in $\kdis(\kA)$ and any $a$-edge from any $\vec{r}$ to any $\vec{r'}$. Note that $w(\vec{r},a,\vec{r'}) > 0$ implies that $\mathord{size}(\vec{r}) \leq \mathord{size}(\vec{r'})$ and hence for any $\vec{r}, \vec{r'}$ on that loop, $\mathord{size}(\vec{r}) = \mathord{size}(\vec{r'})$. Let $w_1\ldots w_m$ be such that there exists a path from an initial state to $\vec{r}$, and let $w_{m+1}\ldots w_n$ be a word that traverses the loop starting with this edge from $\vec{r}$ to $\vec{r'}$ (hence, $w_{m+1} = a$). Finally let $w_{n+1}w_{n+2}\ldots$ be a word such that $w_1\ldots w_m(w_{m+1}\ldots w_{n})^Cw_{n+1}\ldots$ has a final path that traverses the loop over $\vec{r}$ at least $C$ times, such a word exists since every state can reach a loop over a final state. Note that weights in $\kdis(\kA)$ are integers. Then the absolute weight of this path is at least $w(\vec{r},w_{m+1},\vec{r'})^C$. Using Lemmas \ref{lem:countruns} and \ref{lem:finalvecs}, then, we see that there exist at least $w(\vec{r},w_{m+1},\vec{r'})^C$ final paths over $w_1\ldots w_m(w_{m+1}\ldots w_n)^Cw_{n+1}\ldots$. Since this is the case for any $C$, and the number of final paths over any word in $\kA'$ is bounded above, we must have that $w(\vec{r}, w_{m+1}, \vec{r'}) = 1$ and hence $\Delta(a)_{\vec{r},\vec{r'}} = 1$.

Now let $w$ be any word. By Lemma \ref{lem:finalvecs} a path in $\kA'$ is final if and only if the path in $\kdis(\kA)$ it maps to under $\pi_{\mathord{last}}$ is final, and by Lemma \ref{lem:countruns} any path with nonzero weight (i.e. any path) has at least one path in $\kA'$ mapping to it. Hence we see that $\kdis(\kA)$ has at most as many final paths over $w$ as $\kA'$ does, and since the number of final paths in $\kA'$ is globally bounded, so is the number of final paths in $\kdis(\kA)$.

Let $w$ be any word and define the equivalence relation $=_\pi \subseteq Q'^\omega\times Q'^\omega$ as $\rho =_\pi \rho'$ iff $\pi_{\mathord{last}}(\rho) = \pi_{\mathord{last}}(\rho')$ (note that $\pi_{\mathord{last}}(\rho) \in ([k]^{Q\times\Bin})^\omega$). Note that this equivalence relation partitions the final paths of $\kA'$ over $w$, and for any $\rho, \rho'$ with $\rho =_\pi \rho'$, $\mathord{weight}(\rho) = \mathord{weight}(\rho') = (-1)^{\lim_{i \to \infty} |\rho_i|}$. Each equivalence class can be represented by a path in $\kdis(\kA)$. Let $\rho'$ be the representative of the equivalence class containing $\rho$, and let $R$ be the size of this equivalence class. By Lemma \ref{lem:countruns}, $\mathord{weight}(\rho') = \mathord{weight}(\rho)R$. Hence, the sum of the weights of final paths in $\kdis(\kA)$ is equal to the sum of the weights of all the equivalence classes of $=_\pi$ where the representative is a final path, which by Lemma \ref{lem:finalvecs} is precisely the sum of the weights of final paths in $\kA'$.

Hence, the weight of a word $w$ in $\kdis(\kA)$ is equal to the weight of $w$ in $\kA'$. This means $\kdis(\kA)$ is equivalent to $\kA'$ which is in turn equivalent to $\kA$, concluding the proof.
\qed
\end{proof}

\begin{theorem}\label{thm:pspacetrans}
Given a $k$-ambiguous automaton $\kA$ with $n$ states, the disambiguation $\kdis(\kA)$ has at most $k^{2n}$ states. Moreover, $\kdis(\kA)$ can be calculated using a PSPACE transducer.
\end{theorem}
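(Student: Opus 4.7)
The plan is to split the theorem into its two claims---the size bound and the PSPACE construction---and handle them with different tools.

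For the state count, the easy half is a direct product bound. By definition $Q'' \subseteq [k]^{Q \times \Bin}$ with $|Q \times \Bin| = 2n$, and each coordinate of a vector $\vec r \in Q''$ records the multiplicity of prefixes in a set $P \in \pow^{\leq k}(Q^* \times \Bin)$, hence is bounded by $k$ thanks to the $k$-ambiguity of $\kA$. A straight product count then yields $|Q''| \le k^{2n}$; trimness can only remove states, so the bound is preserved.

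For the PSPACE transducer, the central observation is that although $\kdis(\kA)$ is exponentially large, every state, transition, and weight admits a polynomial-sized description. The construction can therefore be \emph{streamed} to the output tape using only polynomial work space. Concretely, I would have the transducer iterate through every vector $\vec r \in [k]^{Q \times \Bin}$ in lexicographic order (each uses $O(n \log k)$ bits of work tape) and, for each candidate, decide two reachability questions: is $\vec r$ reachable from some initial vector of $\kdis'(\kA)$, and can $\vec r$ reach a loop through a state of $F''$? Both are reachability queries on a graph whose vertices are polynomially encoded and whose edge predicate ``$\vec{r'}$ is an $a$-successor of $\vec r$'' is decidable in polynomial time via the combinatorial characterisation $w(\vec r, a, \vec{r'}) = |C'|$ from Lemma~\ref{lem:succnum}. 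By Savitch's theorem, both queries run in PSPACE. Surviving vectors are written out as states of $\kdis(\kA)$ (with $\alpha'_{\vec r}$ and membership in $F''$ read off directly from $\vec r$), and for each ordered pair $(\vec r, \vec{r'})$ of survivors and each $a \in \Sigma$ the transducer outputs $(-1)^{\mathord{size}(\vec{r'})-\mathord{size}(\vec r)} \, w(\vec r, a, \vec{r'})$. The quantity $w(\vec r, a, \vec{r'})$ is computed by enumerating every function $f : Q \times \Bin \times [k] \to \pow(Q)$ (each of size $O(n^2 k)$ bits), testing the defining conditions of $C'$, and maintaining an $O(n^2 k)$-bit counter (which suffices because $|C'| \le 2^{2 k |Q|^2}$).

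The main obstacle is precisely that $\kdis(\kA)$ is not materialised anywhere: the trim set $Q''$ is exponential and its reachable/co-reachable subset cannot be stored explicitly. This is what forces a genuine PSPACE (rather than polynomial-time) construction, and it is resolved by the Savitch-style reachability step on the implicit graph. Once that reduction is in hand, the remaining ingredients---polynomial-space counters for $w$, a polynomial-time edge predicate, and the streaming of states and transitions in lexicographic order---are routine bookkeeping.
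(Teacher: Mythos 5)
Your size bound and your handling of the trimming step are fine: the product count $|[k]^{Q\times\Bin}|\le k^{2n}$ matches the paper, and using Savitch-style reachability on the implicitly represented graph to decide which vectors survive trimming is a reasonable (and slightly more explicit) account of what the paper glosses over with ``iterate over every element of $Q''$''.

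The gap is in your computation of $w(\vec r,a,\vec{r'})$. You propose to enumerate every $f : Q\times\Bin\times[k]\to\pow(Q)$ and count those in $C'$, asserting that each $f$ and the counter fit in $O(n^2k)$ bits. But $k$ is not polynomial in the input size: the input is the $n$-state automaton, and the degree of ambiguity $k$ can be singly exponential in $n$ (the paper notes this explicitly, citing Weber--Seidl, and it is why the state bound $k^{2n}$ is still only singly exponential). Hence $O(n^2k)$ bits is exponential work space, each individual $f$ cannot even be written down, and $|C'|\le 2^{2k|Q|^2}$ is doubly exponential, so your counter is exponentially long. This is precisely the obstacle the paper flags (``we cannot simply list the edges between states in $\kA'$'') and resolves with Lemma~\ref{lem:wrarcombi}: the elements of $C'$ are grouped into equivalence classes indexed by functions $g : Q\times\Bin\to C_{q,b}$ whose values are vectors in $[k]^{Q\times\Bin}$ --- so each $g$ takes only polynomially many bits --- and the size of each class is given in closed form by the inclusion--exclusion expression $\#\mathord{succ}$, so that $w(\vec r,a,\vec{r'})$ becomes a sum over polynomially representable objects of products of binomial coefficients. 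Without this (or an equivalent) aggregation step, your construction of the transition weights is not a PSPACE computation; your brute-force enumeration is essentially the argument the paper reserves for the special case $k=O(\log n)$ in \Cref{thm:polylogspacetrans}, where direct enumeration of successors really does fit in the space budget. The same issue infects your edge predicate for the reachability queries if it is implemented by the same enumeration, although deciding $w(\vec r,a,\vec{r'})>0$ can be repaired more cheaply than computing the exact value.
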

\begin{proof}
The number of states follows from the size of $[k]^{Q \times \B}$.  Note that $k$ is at most singly exponential in the size of the state set of the automaton, which follows from for instance~\cite[Theorem 2.1]{weber1991}, or the fact that finite ambiguity implies the nonexistence of diamonds on a loop. This means that $k^{2n}$ is also singly exponential in $n$. Hence, we can iterate over every element in $Q''$, each of which is a vector in $[k]^{Q \times \B}$ which can be represented in polynomial space. Given $\vec{r}, \vec{r'} \in Q''$ and $a \in \Sigma$, to calculate $\Delta''(a)_{\vec{r},\vec{r'}}$ we need to calculate $(-1)^{\mathord{size}(\vec{r'})-\mathord{size}(\vec{r})}$ and $w(\vec{r},a,\vec{r'})$. Clearly the former can be calculated in polynomial space. For the latter, notice that $w(\vec{r},a,\vec{r'})$ can be doubly exponential, requiring an exponential representation. Hence, we cannot simply list the edges between states in $\kA'$ to calculate $w(\vec{r},a,\vec{r'})$. However, we have the following lemma:
\begin{lemma}\label{lem:wrarcombi}
Let $w(\vec{r},a,\vec{r'})$ as defined above, let $b'' = \bot$ if $\vec{r}_{q,\bot} = 0$ for every $q \in Q$ and $b'' = \top$ otherwise.
For each $(q,b) \in Q\times \B$ let $C_{q,b}$ denote the set of those $\vec{r'}[q,b] \in [k]^{Q \times \B}$ such that
\begin{multline*}
\mathord{size}(\vec{r'}[q,b]) \geq \vec{r}_{q,b}\quad \land \quad
\forall (q',b') \in Q \times \B \ . \ \vec{r'}[q,b]_{q',b'} \leq \vec{r}_{q,b} \quad \land \\
 \forall (q',b') \in Q \times \B \ . \ ((b' \neq (q \in F \lor b) \land b'')\lor q'\not\in\delta(q,a))\rightarrow \vec{r'}[q,b]_{q',b'} = 0\,.
\end{multline*}
Let $\#\mathord{succ}(q,n,\vec{r'}[q,b]) = \sum_{j=0}^n (-1)^j{n \choose j}\left(\prod_{q',b'} {n-j \choose \vec{r'}[q,b]_{q',b'}}\right)$. Then
\begin{equation*}
w(\vec{r},a,\vec{r'}) = \sum_{f : Q \times \B \rightarrow C_{q,b} \mid \sum_{q,b} f(q,b) = \vec{r'}} \prod_{q,b} \#\mathord{succ}(q,\vec{r}_{q,b}, f(q,b)).
\end{equation*}
\end{lemma}
\begin{proof}
We have that $w(\vec{r},a,\vec{r'}) = |C'|$, where $C'$ is defined as in Lemma~\ref{lem:succnum}.
We define the following equivalence relation:
two elements $f, f' \in C'$ are equivalent if for every $(q,b) \in Q\times\B$ we have that $\sum_i [[f(q,b,i)]] = \sum_i [[f'(q,b,i)]]$, where for any set $S$, $[[S]]$ denotes the characteristic vector of $S$.

 Let $\mathcal{G} := \{g : Q \times \B \rightarrow C_{q,b} \; \mid \; \sum_{q,b} g(q,b) = \vec{r'}\}$.
Clearly the equivalence classes of $C'$ can be characterised by elements of $\mathcal{G}$.
We can view any function $f \in C'$ as a way of distributing non-empty successor sets over the numbered elements in $\vec{r}$ in such a way that it adds up to $\vec{r'}$.
Hence, for any $g \in \mathcal{G}$, the number of elements in the equivalence class of $g$ is given by those functions that, for every $(q,b) \in Q \times \B$, distribute the elements in $g(q,b)$ to the $\vec{r}_{q,b}$ paths ending in $(q,b)$.
These distributions are independent, and hence the number of elements in the equivalence class of $g$ is the product over all $(q,b)$ of the number of ways to distribute the elements in $g(q,b)$.

This number is captured by $\#\mathord{succ}(q,\vec{r},g(q,b))$.
We will explain $\#\mathord{succ}$ as a balls-and-bins problem.
We need to assign a nonempty successor set to each of the $\vec{r}_{q,b}$ paths ending in $(q,b)$ such that for each $(q',b') \in Q\times\B$ there are $g(q,b)_{q',b'}$ sets containing $(q',b')$.
This is equivalent to dividing $\sum_{q',b'} g(q,b)_{q',b'}$ coloured balls, with colours in $Q \times \B$, over $\vec{r}_{q,b}$ bins in such a way that no bin has two balls of the same colour and no bin is empty.
We can express the number of distributions where bins may be empty by distributing the balls per colour independently, this is given by $\prod_{q',b'} {n \choose g(q,b)_{q',b'}}$ where $n$ is the number of bins (i.e. $\vec{r}_{q,b}$).
Using the inclusion-exclusion principle, then, we can add the restriction that no bins may be empty by including/excluding sets of empty bins.
This gives us the formula $\#\mathord{succ}(q,\vec{r},g(q,b)) = \sum_{j=0}^n (-1)^j{n \choose j}\left(\prod_{q',b'} {n-j \choose \vec{r'}[q,b]_{q',b'}}\right)$.

Hence, the number of elements in the equivalence class of $g$ is given by $\prod_{q,b}\#\mathord{succ}(q,\vec{r}_{q,b},g(q,b))$ and therefore we have that $w(\vec{r},a,\vec{r'}) = |C'| = \sum_{g \in \mathcal{G}} \prod_{q,b}\#\mathord{succ}(q,\vec{r}_{q,b},g(q,b))$.
\qed
\end{proof}
Using this lemma,, we can enumerate every vector in $C_{q,b}$ for every $(q,b)$-pair, which is a polynomial combination of polynomially sized objects. We can then calculate $\#\mathord{succ}$ with a PSPACE transducer to find $w(\vec{r},a,\vec{r'})$. Hence, we can construct $\kdis(\kA)$ with a PSPACE transducer.
\qed
\end{proof}

By Lemma \ref{lem:exponlower}, we have an $O(2^n)$ lower bound even if $k = n$. From Theorem \ref{thm:pspacetrans} we already have a $2^{O(n\log n)}$ upper bound, leaving only a small gap.

When the ambiguity is comparatively low, we can do even better:
\begin{theorem}\label{thm:polylogspacetrans}
Given a $k$-ambiguous automaton $\kA$ with $n$ states where $k = O(\log n)$, the disambiguation $\kdis(\kA)$ has at most $(2n)^{O(\log n)}$ states. Moreover, $\kdis(\kA)$ can be calculated using a POLYLOGSPACE transducer.
\end{theorem}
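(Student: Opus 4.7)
The plan is to tighten the analysis behind \Cref{thm:pspacetrans} by exploiting the sparsity of states when $k = O(\log n)$. First, I would argue that every reachable state of $\kdis(\kA)$ satisfies $\mathord{size}(\vec{r}) \leq k$: any $\vec{r}$ with an incoming or outgoing nonzero-weight transition arises from some $P \in Q' = \pow^{\leq k}(Q^* \times \Bin) \setminus \{\emptyset\}$ via $\vec{r} = \pi_{\mathord{last}}(P)$, and $\mathord{size}(\pi_{\mathord{last}}(P)) = |P| \leq k$; initial states with $\mathord{size}(\vec{r}) > k$ have no outgoing transitions (there is no such $P \in Q'$) and are eliminated by the trimness condition. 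The number of vectors $\vec{r}\colon Q \times \Bin \to \N$ with $\mathord{size}(\vec{r}) \leq k$ equals $\binom{2n + k}{2n}$ by the hockey-stick identity, which is at most $(2n + k)^k = (2n)^{O(\log n)}$.

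For the transducer itself, I would store each state by its list of at most $k = O(\log n)$ nonzero entries $((q,b), v)$, using $O(k(\log n + \log k)) = O(\log^2 n)$ bits per state, and enumerate all such sparse vectors within polylog workspace. For each triple $(\vec{r}, a, \vec{r'})$ the transducer outputs the transition weight $\Delta''(a)_{\vec{r}, \vec{r'}} = (-1)^{\mathord{size}(\vec{r'}) - \mathord{size}(\vec{r})}\, w(\vec{r}, a, \vec{r'})$ computed via \Cref{lem:wrarcombi}. The restriction from $\kdis'(\kA)$ to $\kdis(\kA)$ (reachability from an initial state and co-reachability of a final loop) is decided by Savitch-style reachability in the implicit state graph of size $(2n)^{O(\log n)}$, which requires only $O(\log^2((2n)^{\log n})) = O(\log^4 n)$ space.

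The central calculation is $w(\vec{r}, a, \vec{r'})$. The crucial observation is that any $f \in \mathcal{G}$ in \Cref{lem:wrarcombi} satisfies $f(q,b) = \vec{0}$ whenever $\vec{r}_{q,b} = 0$, since $f(q,b) \in C_{q,b}$ forces $f(q,b)_{q',b'} \leq \vec{r}_{q,b}$; hence $f$ is determined by its values on the $\leq k$ nonzero positions of $\vec{r}$. Each $f(q,b)$ is itself sparse (entries bounded by $\vec{r'}$ and $\mathord{size}(f(q,b)) \leq \mathord{size}(\vec{r'}) \leq k$), so $f$ fits in $O(k^2(\log n + \log k)) = O(\log^3 n)$ bits and is enumerated by nested loops. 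Each factor $\#\mathord{succ}(q, \vec{r}_{q,b}, f(q,b))$ is an alternating sum of at most $k{+}1$ products of binomial coefficients $\binom{m}{j}$ with $m, j \leq k$, all computable in polylog space.

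The main obstacle one might anticipate --- arithmetic on the accumulated integer $w$ --- turns out to be mild: a direct size estimate gives $|w(\vec{r}, a, \vec{r'})| \leq |\mathcal{G}| \cdot \max_{f} \prod_{q,b} |\#\mathord{succ}(q, \vec{r}_{q,b}, f(q,b))| \leq 2^{O(\log^2 n)} \cdot 2^{O(\log^3 n)} = 2^{O(\log^3 n)}$, so $w$ has polylog bit length. All sums and products therefore fit in polylog workspace via schoolbook integer arithmetic, and the transducer writes each transition weight to its output tape as soon as it is formed, reusing workspace between transitions. Combining the pieces yields the claimed POLYLOGSPACE transducer.
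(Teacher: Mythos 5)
Your proof is correct and rests on the same key observation as the paper's: every relevant state $\vec{r}$ of $\kdis(\kA)$ satisfies $\mathord{size}(\vec{r})\le k$, so states admit a sparse $O(\log^2 n)$-bit representation, number at most $(2n)^{O(\log n)}$, and the weights $w(\vec{r},a,\vec{r'})$ have only polylogarithmically many bits, so all arithmetic fits in polylog workspace. The one place you genuinely diverge is the computation of $w(\vec{r},a,\vec{r'})$: the paper abandons the inclusion--exclusion formula of \Cref{lem:wrarcombi} here and instead directly enumerates, in polylog space, the $a$-successors $P'$ of a representative $P\in Q'$ with $\pi_{\mathord{last}}(P)=\vec{r}$ (of which there are at most $2^{O(k\log(kn))}$, since $|P'|\le k$), whereas you keep the formula and bound the bit-lengths of $|\mathcal{G}|$, of the $\#\mathord{succ}$ factors, and of the accumulated sum. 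Both routes work; yours has the merit of reusing the machinery already set up for \Cref{thm:pspacetrans}, while the paper's enumeration is conceptually simpler once one notes the successor count is small. You are also more explicit than the paper on two points it leaves implicit: why states of $Q''$ with $\mathord{size}(\vec{r})>k$ (which a priori exist, since $[k]^{Q\times\Bin}$ only bounds individual entries) can be discarded, and how the trimming to reachable/co-reachable states is effected in polylog space via Savitch-style reachability on the implicit $(2n)^{O(\log n)}$-vertex graph. These additions strengthen rather than replace the paper's argument.
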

\begin{proof}
While in general, elements in $[\log n]^{[n] \times \Bin}$ take $2n\log n$ space to represent, we know that for any $\vec{r} \in Q''$, $\sum_i \vec{r}_i \leq \log n$ and hence we can instead represent elements in $Q''$ as vectors in $([n]\times\Bin)^{\log n}$, which take $\log(2n^{\log n}) = O((\log n)^2)$ space. Moreover, $w(\vec{r},a,\vec{r'})$ is at most exponential in the ambiguity of $\kA$ and hence we can simply enumerate the successors of any $P \in Q'$ with $\pi_{\mathord{last}}(P) = \vec{r}$ in POLYLOGSPACE. Thus, we can calculate $\kdis(\kA)$ in POLYLOGSPACE.
\qed
\end{proof}

%

\section{Model Checking IBA}\label{sub-mc-iba}

In this section we will consider the problem of model checking IBAs against Markov chains.
A \emph{Markov chain} (MC) is a pair $(S,P)$ where $S$ is the finite state set, and $P \in [0,1]^{S \times S}$ is a stochastic matrix specifying transition probabilities.
Given an initial distribution $\iota$, an MC $\mathcal{M}$ induces a probability measure~$\Pr^{\mathcal{M}}_\iota$ over infinite words.
The model checking question asks, what is the probability of the language accepted by an IBA?

We show that model checking IBAs against MCs can be done in NC using a modified procedure for model checking UBAs from~\cite{cav16full}. This is the main theorem:

\begin{theorem}\label{thm:mciba}
Let $\mathcal{M}$ be an MC and let $\A$ be an IBA. The probability that a random word sampled from $\mathcal{M}$ is in $L_\A$ can be computed in NC.
\end{theorem}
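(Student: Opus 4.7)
The plan is to adapt the NC model-checking algorithm for UBAs from~\cite{cav16full}, using ultimate stability to reduce the IBA setting to the UBA setting inside each bottom strongly connected component of the synchronous product. First I would form the product of~$\mathcal{M}$ and~$\A$ on state set $S \times Q$, assigning to each transition $((s,q),(s',q'))$ the weight $P[s,s']\cdot M(a)[q,q']$, where $a$ is the letter produced on the step into~$s'$. Expanding $\Pr^{\mathcal{M}}_\iota(L(\A)) = \mathbb{E}[L_\A(w)]$ by linearity turns this quantity into a weighted sum over infinite product-paths that project onto final paths of~$\A$, with the IBA's initial vector~$\alpha$ and the MC's initial distribution~$\iota$ supplying the initial weights.

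I would next compute the SCC decomposition of the product, which is a classical NC computation, and analyse each bottom SCC~$C$. Ultimate stability guarantees that every intra-SCC edge of~$\A$ has weight $0$ or~$1$, and this transfers to the product; consequently the IBA-weight of an infinite path depends only on the finitely many transient edges traversed before~$C$ is reached. Inside~$C$ the restriction of~$\A$ has $\{0,1\}$-edge weights, and the global bound~$N$ on the number of final paths per word makes the behaviour effectively unambiguous. This lets me apply the UBA subroutine of~\cite{cav16full} almost verbatim to each~$C$ and compute, in NC, for each entry state $(s,q)\in C$ the probability that the MC, conditioned on staying in the $S$-projection of~$C$, produces a word accepted via an $\A$-run through~$(s,q)$.

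Finally, the SCC-local values must be stitched together by weighted reachability. The transient edges form a DAG over the SCC quotient, so the weighted probability that a sampled run first enters a given entry state of a given bottom SCC is the solution of a polynomial-size $\mathbb{Q}$-linear system whose coefficients come from the product weights on transient transitions; this system is solvable in NC by parallel linear algebra. Summing, over all bottom SCCs and all entry states, the product of the reachability weight and the SCC-local acceptance value yields $\Pr^{\mathcal{M}}_\iota(L(\A))$ in NC. The main obstacle I anticipate is handling the transient phase carefully: unlike in a UBA, an IBA can carry arbitrary rational weights on non-loop edges, so parallel transient paths may contribute with cancelling weights, and the bookkeeping that keeps the overall sum in $\{0,1\}$ has to be respected by the linear system. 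Because these contributions are finite rational sums rather than limits, however, the linear-algebraic formulation above suffices with no further analytic subtlety, and NC closure under SCC analysis plus rational linear algebra delivers the claimed bound.
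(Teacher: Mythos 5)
Your overall plan---build the product matrix $B$ on $Q\times S$, do an SCC analysis, and reduce to linear algebra in NC---starts in the same place as the paper, but the decomposition you propose in the second and third steps has a genuine gap. You organise the computation around \emph{bottom} SCCs of the product and a ``probability of first entering an entry state of a bottom SCC'' that you then stitch together over the transient DAG. That is the correct skeleton for \emph{deterministic} automata, where the product is itself a Markov chain; it fails here. The matrix $B$ is not stochastic, a single finite word can have several product-paths reaching the same state (and, outside SCCs, with cancelling signs), so ``the reachability weight to an entry state'' is not a probability and does not decompose the value $\expect_{\mathcal{M}[s]}L_{\A[q]}$ by first-entry time. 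Moreover, the SCCs that carry the acceptance probability are the \emph{recurrent} ones (spectral radius $1$, equivalently those possessing a cut), and these need not be bottom SCCs: \Cref{lem:recursinks,lem:nonrecurreach} show only that what is reachable \emph{outside} a recurrent SCC has value $0$ and subunit spectral radius, not that recurrent SCCs are sinks. The paper therefore does not localise the computation at all: it writes one global fixed-point system $\vec{\zeta}=B\vec{\zeta}$ over all of $Q\times S$ (\Cref{lem:syseq}) and obtains uniqueness only after adjoining the normalisation constraints $\vec{\mu}_D^\top\vec{\zeta}_D=1$ for accepting recurrent SCCs (via cut vectors) and $\vec{\zeta}_D=\vec{0}$ for non-accepting ones (\Cref{lem:cavlem12}), with uniqueness resting on Perron--Frobenius theory for the one-dimensional eigenspace of $B_{D,D}$. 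Your proposal never supplies the normalisation, and without it the system $\vec{\zeta}=B\vec{\zeta}$ is underdetermined.

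A second, related omission is the part of the argument that is genuinely new for IBAs as opposed to UBAs. The probabilistic interpretation of $(B_{C,C})^n$ in \Cref{lem:powersinterpret} (hence $\rho(B_{C,C})\le 1$) needs that SCCs of $B$ contain no ``diamonds'', which is derived from the global bound $N$ on the number of final paths, not merely from ultimate stability; and the proofs of \Cref{lem:nonrecurreach} and \Cref{lem:recursinks} work by explicitly constructing words with more than $N$ final paths to reach a contradiction. Your remark that the bound $N$ ``makes the behaviour effectively unambiguous'' gestures at this but does not do the work; in particular the claim that one can condition on the run ``staying in the $S$-projection of $C$'' is not meaningful, since accepted words need not project into a single SCC. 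To repair the proposal you would need to replace the bottom-SCC/reachability stitching with the cut-normalised global linear system and prove the diamond-freeness and spectral-radius lemmas from the bound $N$.
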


In order to prove this, we will use the following properties of nonnegative matrices:

\begin{theorem}\label{lem:matrixprop}
Let $M \in \R^{S\times S}$ be a nonnegative matrix. Then the following all hold:
\begin{enumerate}
	\item The spectral radius $\rho(M)$ is an eigenvalue of $A$ and there is a nonnegative eigenvector $\vec{x}$ with $M\vec{x} = \rho(M)\vec{x}$. Such a vector $\vec{x}$ is called \emph{dominant}.
	\item If $0 \leq M'\leq M$ then $\rho(M') \leq \rho(M)$.
	\item There is $C \subseteq S$ such that $M_{C,C}$ is strongly connected and $\rho(M_{C,C}) = \rho(M)$.
\end{enumerate}
\end{theorem}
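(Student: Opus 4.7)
The plan is to obtain all three statements from the classical Perron--Frobenius machinery for nonnegative matrices, treating the theorem as a collection of standard facts gathered here for later reference. I would prove the parts in the order given, since (1) underpins the Collatz--Wielandt type manipulations used elsewhere, while (2) and (3) are essentially independent.

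For (1) I would reduce to the positive case via a perturbation. Let $J$ denote the all-ones matrix on $S$ and set $M_\varepsilon := M + \varepsilon J$ for $\varepsilon>0$. Since $M_\varepsilon$ is strictly positive, the strong form of Perron--Frobenius gives a strictly positive eigenvector $x_\varepsilon$ with $M_\varepsilon x_\varepsilon = \rho(M_\varepsilon) x_\varepsilon$. Normalize so that $\|x_\varepsilon\|_1 = 1$; then $x_\varepsilon$ lies in the compact simplex $\{x \ge 0 : \|x\|_1 = 1\}$, and $\rho(M_\varepsilon) \to \rho(M)$ as $\varepsilon \to 0$ because the roots of the characteristic polynomial depend continuously on its coefficients. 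Extracting a convergent subsequence $\varepsilon_k \to 0$ and passing to the limit in $M_{\varepsilon_k} x_{\varepsilon_k} = \rho(M_{\varepsilon_k}) x_{\varepsilon_k}$ produces a nonnegative unit vector $x$ with $Mx = \rho(M) x$, showing both that $\rho(M)$ is an eigenvalue and that it has a nonnegative eigenvector.

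For (2) I would invoke Gelfand's formula $\rho(A) = \lim_{n\to\infty} \|A^n\|^{1/n}$ for any matrix norm. Choose a matrix norm that is monotone on nonnegative matrices, for instance the entrywise maximum or the max row-sum. Since $0 \le M' \le M$ entrywise, an easy induction gives $0 \le (M')^n \le M^n$ entrywise, and monotonicity of the norm yields $\|(M')^n\| \le \|M^n\|$. Taking $n$th roots and letting $n\to\infty$ gives $\rho(M') \le \rho(M)$.

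For (3) I would use the SCC decomposition of the directed graph on $S$ whose edges are $\{(s,t) : M_{s,t} > 0\}$. Let $C_1, \dots, C_r$ be the strongly connected components; after reordering states according to a topological ordering of the SCC DAG, $M$ becomes block upper triangular with diagonal blocks $M_{C_i, C_i}$. The spectrum of a block triangular matrix is the union of the spectra of its diagonal blocks, so $\rho(M) = \max_i \rho(M_{C_i, C_i})$, and choosing $C = C_i$ realising the maximum gives the claim. The only mild obstacle is the convention for trivial SCCs, i.e.\ singletons $\{s\}$ with $M_{s,s} = 0$: such a block has spectral radius $0$, so when $\rho(M) > 0$ the witness $C$ is automatically a nontrivial SCC, and when $\rho(M) = 0$ any singleton $C$ trivially satisfies the statement.
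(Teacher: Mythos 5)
Your proposal is correct, but there is nothing in the paper to compare it against: the paper does not prove this theorem at all, it simply cites these facts as classical results from Berman and Plemmons (Chapter 2), together with the companion Perron--Frobenius statement for strongly connected matrices. Your three arguments are the standard textbook ones and they all go through. For (1), the perturbation $M_\varepsilon = M + \varepsilon J$, normalisation to the simplex, continuity of the spectral radius, and extraction of a convergent subsequence is exactly the usual reduction to the positive case; the limit vector is nonzero because it stays on the unit sphere of the $1$-norm. For (2), the entrywise inequality $0 \le (M')^n \le M^n$ plus Gelfand's formula with a monotone norm is clean; note that even a non-submultiplicative norm such as the entrywise maximum works here, since all norms on a finite-dimensional space are equivalent and hence give the same value of $\lim_n \|A^n\|^{1/n}$. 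For (3), the block-upper-triangular form under a topological ordering of the SCC DAG and the identity $\rho(M) = \max_i \rho(M_{C_i,C_i})$ is the standard argument, and your handling of the degenerate singleton SCCs (relevant only when $\rho(M)=0$) is the right way to dispose of the convention issue. In short, you have supplied a self-contained proof of what the authors treat as a black-box citation; the only thing the citation buys them is brevity.
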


\begin{theorem}\label{lem:perron}
Let $M \in \R^{S\times S}$ be a strongly connected nonnegative matrix. Then the following all hold:
\begin{enumerate}
	\item There is an eigenvector $\vec{x}$ with $M\vec{x} = \rho(M)\vec{x}$ such that $\vec{x}$ is strictly positive in all its components.
	\item The eigenspace associated with $\rho(M)$ is one-dimensional.
	\item If $0 \leq M' < M$ (i.e. $M'_{i,j} < M_{i,j}$ for some $i,j \in S$) then $\rho(M') < \rho(M)$.
	\item If $\vec{x} \geq 0$ and $M\vec{x} \leq \rho(M)\vec{x}$ then $M\vec{x} = \rho(M)\vec{x}$.
\end{enumerate}
\end{theorem}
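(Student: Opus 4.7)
The plan is to prove the four items in the order (1), (2), (4), (3): item~(1) anchors everything by producing a strictly positive right (and, applied to~$M^T$, left) eigenvector for~$\rho(M)$; (2) then drops out by a short subtraction argument; (4) follows immediately from the existence of the positive left eigenvector; and (3), the main obstacle, requires pairing the left eigenvector of~$M$ with a carefully constructed nonnegative vector coming from Theorem~\ref{lem:matrixprop}(3) applied to~$M'$.

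For (1), start from a nonnegative eigenvector $\vec{x}$ with $M\vec{x}=\rho(M)\vec{x}$ provided by Theorem~\ref{lem:matrixprop}(1) (the trivial case $\rho(M)=0$ can only occur when $|S|=1$, since strong connectivity forces a directed cycle). Let $Z=\{i:x_i=0\}$; for each $i\in Z$ the equation $(M\vec{x})_i=\rho(M)x_i=0$ forces $M_{i,j}x_j=0$ for every~$j$, so every predecessor of~$i$ lies in~$Z$. Thus $Z$ is closed under predecessors, and strong connectivity of~$M$ restricts such sets to $\emptyset$ or~$S$; since $\vec{x}\neq\vec{0}$, we get $Z=\emptyset$ and $\vec{x}>0$. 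For (2), given any other eigenvector~$\vec{y}$ for~$\rho(M)$ --- we may assume $\vec{y}$ real, treating real and imaginary parts separately since $\rho(M)$ is real --- set $\lambda=\min_i y_i/x_i$; then $\vec{z}:=\vec{y}-\lambda\vec{x}$ is a nonnegative eigenvector with at least one zero coordinate, so applying~(1) to~$\vec{z}$ forces $\vec{z}=\vec{0}$, i.e.\ $\vec{y}=\lambda\vec{x}$. For (4), apply~(1) to~$M^T$ (strongly connected with the same spectral radius) to obtain a strictly positive left eigenvector~$\vec{y}$ of~$M$. Then $\vec{y}^T M=\rho(M)\vec{y}^T$ yields $\vec{y}^T(\rho(M)\vec{x}-M\vec{x})=0$, while the hypothesis $M\vec{x}\le\rho(M)\vec{x}$ gives $\rho(M)\vec{x}-M\vec{x}\ge\vec{0}$; combined with $\vec{y}>0$, this forces $M\vec{x}=\rho(M)\vec{x}$ componentwise.

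The core difficulty lies in~(3), where $M'$ need not be strongly connected, so a dominant eigenvector of~$M'$ may vanish precisely where $M-M'$ is supported. To sidestep this, apply Theorem~\ref{lem:matrixprop}(3) to~$M'$ to obtain $C\subseteq S$ with $M'_{C,C}$ strongly connected and $\rho(M'_{C,C})=\rho(M')$; by~(1), $M'_{C,C}$ has a strictly positive eigenvector on~$C$, which I extend by zeros to $\vec{x}\in\R_{\ge 0}^{S}$. A direct check gives $(M'\vec{x})_i=\rho(M')x_i$ for $i\in C$ and $(M'\vec{x})_i\ge 0=\rho(M')x_i$ for $i\notin C$. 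Pairing with the strictly positive left eigenvector~$\vec{y}$ of~$M$ produced above then yields
\begin{equation*}
(\rho(M)-\rho(M'))\,\vec{y}^T\vec{x}\;=\;\vec{y}^T(M-M')\vec{x}\;+\;\sum_{i\notin C}y_i\,(M'\vec{x})_i,
\end{equation*}
with both right-hand terms nonnegative. Strict positivity follows by a case split: if $C=S$, then $\vec{x}>0$ everywhere and, since $M\neq M'$ and $\vec{y}>0$, the first term is positive; if $C\subsetneq S$, strong connectivity of~$M$ produces $j\in C$ and $i\notin C$ with $M_{i,j}>0$, and then either $M'_{i,j}>0$ makes the second sum positive (via $(M'\vec{x})_i\ge M'_{i,j}x_j>0$), or $M'_{i,j}=0<M_{i,j}$ makes the first term positive. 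Since $\vec{y}^T\vec{x}>0$, we conclude $\rho(M')<\rho(M)$. This case split --- where strong connectivity of~$M$ (not of~$M'$) enters in an essential way --- is the delicate point of the whole argument.
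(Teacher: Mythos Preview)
Your proof is correct and complete; the paper, by contrast, does not prove this theorem at all but simply cites \cite[Chapter~2]{book:BermanP} for it (together with Theorem~\ref{lem:matrixprop}). So rather than taking a different route, you have supplied a self-contained argument where the paper defers to the literature.

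A few remarks on the content. In item~(1) there is a harmless slip of terminology: from $M_{i,j}x_j=0$ you conclude that every \emph{successor} of~$i$ (i.e., every $j$ with $M_{i,j}>0$) lies in~$Z$, not every predecessor; the conclusion that $Z\in\{\emptyset,S\}$ is unaffected since strong connectivity is symmetric in this respect. Your handling of item~(3) is the genuinely nontrivial part and is done cleanly: the difficulty that a dominant eigenvector of~$M'$ may vanish on the support of $M-M'$ is exactly the obstruction, and your use of Theorem~\ref{lem:matrixprop}(3) to pass to a strongly connected block of~$M'$, followed by the case split on whether $C=S$, is the standard way around it. The pairing identity
\[
(\rho(M)-\rho(M'))\,\vec{y}^{\,T}\vec{x}=\vec{y}^{\,T}(M-M')\vec{x}+\sum_{i\notin C}y_i(M'\vec{x})_i
\]
is checked correctly, and the case $C\subsetneq S$ indeed requires the strong connectivity of~$M$ (not~$M'$) to produce the edge $i\to j$ with $i\notin C$, $j\in C$. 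What your approach buys over the bare citation is that the whole Perron--Frobenius package used in Section~\ref{sub-mc-iba} is now derived from the more elementary Theorem~\ref{lem:matrixprop}, making the paper essentially self-contained on this point.
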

These results can all be found in \cite[Chapter 2]{book:BermanP}.
We will also need the following well-known fact about Markov chains.
\begin{lemma}\label{lem:markovprefix}
Let $\mathcal{M} = (S,P)$ be a Markov chain, and $\mathcal{L} \subseteq S^\omega$ an $\omega$-regular language. Suppose $s_0 \in S$ such that $\prob_{s_0} > 0$. Then there exists $s_0\ldots s_n \in \paths_{s_0}(P)$ such that $\prob_{s_n}(\{w \in s_nS^\omega : s_0\ldots s_{n-1}w \in \mathcal{L}\}) = 1$.
\end{lemma}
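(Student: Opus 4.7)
The plan is to invoke the standard machinery of deterministic $\omega$-automata together with the BSCC analysis of product Markov chains. Since $\mathcal{L}$ is $\omega$-regular, I would represent it by a deterministic Rabin automaton $\mathcal{R} = (Q, S, \delta, q_0^{\mathcal{R}}, \Omega)$ and form the product Markov chain $\mathcal{M} \otimes \mathcal{R}$ on state space $S \times Q$, where $(s,q)$ moves to $(s', \delta(q, s'))$ with probability $P(s,s')$. The random word generated by $\mathcal{M}$ from $s_0$ is in $\mathcal{L}$ iff the unique $\mathcal{R}$-run on it (tracked by the $Q$-component of the product) satisfies the Rabin condition.

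Next, I would analyse the bottom SCC structure of the product. In a finite MC, almost every trajectory is eventually absorbed in some BSCC $B$ and visits every state of $B$ infinitely often. Call $B$ \emph{accepting} if its set of $Q$-components satisfies the Rabin condition; then from any state of $B$ the conditional probability of acceptance is $1$ if $B$ is accepting and $0$ otherwise. Consequently $\prob_{s_0}^{\mathcal{M}}(\mathcal{L})$ equals the probability, from $(s_0, q_0^{\mathcal{R}})$, of reaching an accepting BSCC.

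By the hypothesis $\prob_{s_0}^{\mathcal{M}}(\mathcal{L}) > 0$, this reachability probability is positive, so there exists a finite path in the product from $(s_0, q_0^{\mathcal{R}})$ to some state $(s_n, q_n)$ in an accepting BSCC $B$. Projecting to the first coordinate gives $s_0 s_1 \ldots s_n \in \paths_{s_0}(P)$. Because $\mathcal{R}$ is deterministic, $q_n = \delta^{*}(q_0^{\mathcal{R}}, s_0 s_1 \ldots s_{n-1})$, so for any continuation $w = s_n s_{n+1} \ldots$ the $\mathcal{R}$-run on $s_0 \ldots s_{n-1} w$ coincides from position $n$ onwards with the $\mathcal{R}$-run from $q_n$ on $w$. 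Since $(s_n, q_n)$ lies in the accepting BSCC $B$, $\prob_{s_n}^{\mathcal{M}}$-almost every continuation stays in $B$ and hence satisfies the Rabin condition, yielding $\prob_{s_n}^{\mathcal{M}}(\{w \in s_n S^{\omega} : s_0 \ldots s_{n-1} w \in \mathcal{L}\}) = 1$.

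The only substantive step is the BSCC dichotomy (acceptance inside an accepting BSCC has probability one, zero otherwise), which is classical for finite Markov chains. Everything else is bookkeeping around the product construction and the determinism of $\mathcal{R}$; this last point is what makes the ``prefix synchronisation'' between $s_0 \ldots s_{n-1}$ and the continuation from $s_n$ work cleanly, and is the main thing I would be careful about when writing the proof in full.
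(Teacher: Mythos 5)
Your argument is correct, but note that the paper does not actually prove this lemma at all: it simply cites it as a known fact (\cite[Lemma~16]{cav16full}), so there is no in-paper proof to match. What you have written is the classical argument underlying that citation: represent $\mathcal{L}$ by a deterministic Rabin (or parity) automaton, form the product Markov chain, and use the BSCC dichotomy --- almost every trajectory is absorbed into some BSCC and visits all of its states infinitely often, so acceptance has conditional probability $1$ from inside an accepting BSCC and $0$ otherwise; positivity of $\prob_{s_0}(\mathcal{L})$ then yields a finite path into an accepting BSCC, and determinism of the automaton gives the prefix synchronisation needed to transfer ``probability $1$ from $(s_n,q_n)$'' back to the event $\{w \in s_nS^\omega : s_0\ldots s_{n-1}w \in \mathcal{L}\}$. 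The one point to tidy when writing this out is the indexing of the product: the automaton state paired with $s_i$ should be the state reached \emph{after} reading $s_0\ldots s_i$ (or $s_0\ldots s_{i-1}$, as long as you are consistent), so the product should start from $(s_0,\delta(q_0^{\mathcal{R}},s_0))$ rather than $(s_0,q_0^{\mathcal{R}})$ under your transition rule; this is pure bookkeeping and does not affect the argument. Your self-contained proof is a reasonable substitute for the citation, at the cost of invoking determinisation, which the cited reference's statement packages away.
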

This can be found for instance in \cite[Lemma 16]{cav16full}.

We will now outline the model checking procedure. As in \cite{cav16full} the algorithm will consist of solving a system of linear equations, split up in two parts. The first part is the basic linear system, which we will describe below, while the second part will add normalising equations in the form of cuts (or pseudo-cuts, if \cite{kiefer2019} is followed). Since the weights of the edges within SCCs are all on loops, and therefore all have weight 1, this part can be copied verbatim from Baier et al\cite{cav16full}. The difference is in the basic system of equations.

Given an automaton $\mathcal{A} = (Q, \Sigma, M, \alpha, F)$ and Markov chain $\mathcal{M} = (S, P)$, we will write $\mathcal{A}[q]$ (resp. $\mathcal{M}[s]$) to denote the weighted automaton (resp. Markov chain) starting in $q \in Q$ (resp. $s \in S$). W.l.o.g.~we will assume that every $q \in Q$ is reachable from some $q'$ with $\alpha(q') \neq 0$ and every $q \in Q$ can reach a loop over a final state. Note that while $\mathcal{A}$ is an IBA, $\mathcal{A}[q]$ does not necessarily have to be. We will write $\expect_{\mathcal{M}} L_{\mathcal{A}}$ for the expected weight of a word in $\mathcal{A}$ for a random word generated by $\mathcal{M}$. Since the weight of a word in an IBA is  either 0 or 1, we see that $\expect_{\mathcal{M}} L_{\mathcal{A}} = \prob_\mathcal{M}( L_{\mathcal{A}})$ if $\mathcal{A}$ is an IBA.

\begin{lemma}\label{lem:factorexpect}
Let $\mathcal{A} = (Q, \Sigma, M, \alpha, F)$ be an IBA w.r.t. Markov chain $\mathcal{M} = (S, P)$ with initial distribution $\iota$. The following equations hold:
\begin{eqnarray}
\prob_{\mathcal{M}} (L_{\mathcal{A}}) & = & \sum_{s \in S} \iota(s) \sum_{q \in Q} \alpha(q)\expect_{\mathcal{M}[s]} L_{\mathcal{A}[q]} \\
\expect_{\mathcal{M}[s]} L_{\mathcal{A}[q]} & = & \sum_{s' \in S} \sum_{q' \in Q} P_{s,s'}M(s)_{q,q'}\expect_{\mathcal{M}[s']} L_{\mathcal{A}[q']}
\end{eqnarray}
\end{lemma}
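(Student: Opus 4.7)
The plan is to prove the two identities separately by elementary manipulations of the expectation, using the definition of $L_\A$ as a sum of weights of final paths.

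For~(1), I first observe that since $\A$ is an IBA, $L_\A$ takes values in $\{0,1\}$ and coincides with the characteristic function of $L(\A)$, so
\[
\prob_\mathcal{M}(L_\A) \;=\; \expect_\mathcal{M}[L_\A] \;=\; \sum_{s \in S} \iota(s)\, \expect_{\mathcal{M}[s]}[L_\A]\,,
\]
by conditioning on the first state of the Markov chain. Because every final path of $\A$ over a word $w$ begins in some state $q$ with $\alpha(q) \neq 0$, the family of final paths partitions according to the initial state, giving the pointwise identity $L_\A(w) = \sum_{q \in Q} \alpha(q)\, L_{\A[q]}(w)$, where $\A[q]$ denotes the sub-automaton with initial vector concentrated on~$q$. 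Substituting this identity and exchanging the finite sum over~$Q$ with $\expect_{\mathcal{M}[s]}$ by linearity delivers~(1).

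For~(2), I perform a one-step analysis of the Markov chain. A word sampled from $\mathcal{M}[s]$ has the form $w = s\cdot w'$ where, conditionally on the next MC state being~$s'$ (with probability $P_{s,s'}$), the tail $w'$ is distributed according to $\mathcal{M}[s']$. Every final path of $\A[q]$ over~$s\cdot w'$ factors uniquely as a single $s$-labelled transition $q \to q'$, carrying weight $M(s)_{q,q'}$, followed by a final path of $\A[q']$ over~$w'$; this yields the pointwise identity
\[
L_{\A[q]}(s\cdot w') \;=\; \sum_{q' \in Q} M(s)_{q,q'}\, L_{\A[q']}(w')\,.
\]
Applying $\expect_{\mathcal{M}[s]}$, swapping the finite sum over~$q'$ with the expectation by linearity, and marginalising over the next MC state~$s'$ produces the right-hand side of~(2).

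The one technical point to attend to is that $\A[q]$ need not itself be an IBA, so $L_{\A[q]}$ is not a priori bounded in $\{0,1\}$; I must therefore justify the interchange of finite sums with the infinite expectation. Here I would exploit ultimate stability: every final path has weight given by the product of transition weights over a transient prefix of length at most~$|Q|$, since all loop edges have weight~$1$. Combined with the standing assumption that each $q \in Q$ is reachable from an $\alpha$-initial state, the global bound~$N$ on $|\mathit{FinalPaths}_\A(w)|$ transfers to a uniform bound on $|\mathit{FinalPaths}_{\A[q]}(w)|$, and hence to a uniform a.s.\ bound on $|L_{\A[q]}(w)|$. This boundedness is more than enough to legitimise all sum--expectation exchanges by dominated convergence, after which the rest is purely algebraic.
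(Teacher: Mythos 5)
Your proposal is correct and follows essentially the same route as the paper: equation (1) via conditioning on the first Markov-chain state and decomposing $L_\A$ by initial automaton state, and equation (2) via the pointwise one-step identity $L_{\A[q]}(s\cdot w') = \sum_{q'} M(s)_{q,q'} L_{\A[q']}(w')$ obtained by peeling off the first transition of each final path, followed by linearity of expectation — which is exactly the chain of integral manipulations in the paper's proof. Your extra remark justifying the sum--expectation exchanges via the uniform bound on $|\mathit{FinalPaths}_{\A[q]}(w)|$ is a slightly more explicit version of the paper's one-line observation to the same effect.
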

\begin{proof}
The first equation holds because of the fact that $\mathcal{A}$ is an IBA. Note that for any $q \in Q$, $|\mathit{FinalPaths}_{\mathcal{A}[q]}(w)| \leq N$ for all $w \in S^\omega$. For the second equation we have the following:
\begin{eqnarray*}
&& \expect_{\mathcal{M}[s]} L_{\mathcal{A}[q]}\\
& = & \int_{w \in \paths(\mathcal{M}[s])} L_{\mathcal{A}[q]}(w)d\prob(w) \\
& = & \sum_{s' \in S} \int_{w \in \paths(\mathcal{M}[s]) \cap s'S^\omega} L_{\mathcal{A}[q]}(w)d\prob(w) \quad \textrm{since the cylinder sets $sS^\omega$ partition $S^\omega$} \\
& = & \sum_{s' \in S} P_{s,s'}\int_{w \in \paths(\mathcal{M}[s'])} L_{\mathcal{A}[q]}(sw)d\prob(w) \quad \textrm{by definition of $\prob(w)$} \\
& = & \sum_{s' \in S} P_{s,s'} \int_{w \in \paths(\mathcal{M}[s'])} \sum_{p \in \mathit{FinalPaths}_{\mathcal{A}[q]}(sw)} \mathit{weight}(p)d\prob(w) \quad \textrm{definition of $L_{\mathcal{A}[q]}$} \\
& = & \sum_{s' \in S} P_{s,s'} \int_{w \in \paths(\mathcal{M}[s'])} \sum_{q' \in Q} M(s)_{q,q'}\sum_{p \in \mathit{FinalPaths}_{\mathcal{A}[q']}(w)} \mathit{weight}(p)d\prob(w) \\ && \hspace{75mm} \textrm{definition of $\mathit{FinalPaths}$} \\
& = & \sum_{s' \in S} \sum_{q' \in Q}P_{s,s'} M(s)_{q,q'}  \int_{w \in \paths(\mathcal{M}[s'])}\sum_{p \in \mathit{FinalPaths}_{\mathcal{A}[q']}(w)} \mathit{weight}(p)d\prob(w) \\
& = & \sum_{s' \in S} \sum_{q' \in Q}P_{s,s'} M(s)_{q,q'}\int_{w \in \paths(\mathcal{M}[s'])} L_{\mathcal{A}[q']}(w)d\prob(w) \quad \textrm{by definition of $L$} \\
& = &  \sum_{s' \in S} \sum_{q' \in Q}P_{s,s'} M(s)_{q,q'} \expect_{\mathcal{M}[s']}L_{\mathcal{A}[q']}
\end{eqnarray*}
\qed
\end{proof}

Let $\mathcal{M} = (S, P)$ be a Markov chain with state set $S$, and let $\iota$ be an initial distribution on $S$. Let $B \in \R^{(Q \times S) \times (Q \times S)}$ be the following matrix:
\begin{equation}
B_{\langle q,s\rangle,\langle q',s'\rangle} = P_{s,s'}M(s)_{q,q'}.
\end{equation}
Define $\vec{z} \in \R^{Q \times S}$ by $\vec{z}_{q,s}= \expect_{\mathcal{M}[s]} L_{\mathcal{A}[q]}$, i.e., the expected weight of a word for the Markov chain starting in $s$ and the automaton starting in $q$. Similar to \cite[Lemma 4]{cav16full} we have the following:

\begin{lemma}\label{lem:syseq}
Let $B$ and $\vec{z}$ as defined above. We have that $\vec{z} = B\vec{z}$.
\end{lemma}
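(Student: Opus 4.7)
The plan is to verify the identity coordinate-wise, treating it as a direct matrix-level restatement of the second equation in \Cref{lem:factorexpect}.

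First, fix an arbitrary coordinate $\langle q,s\rangle \in Q \times S$. By the definition of $\vec{z}$, the left-hand side is $\vec{z}_{q,s} = \expect_{\mathcal{M}[s]} L_{\mathcal{A}[q]}$. Applying the second equation of \Cref{lem:factorexpect} (with the automaton $\mathcal{A}[q]$ and Markov chain $\mathcal{M}[s]$) rewrites this as
\[
\vec{z}_{q,s} \;=\; \sum_{s' \in S} \sum_{q' \in Q} P_{s,s'}\, M(s)_{q,q'}\, \expect_{\mathcal{M}[s']} L_{\mathcal{A}[q']}.
\]

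Second, I would match the right-hand side against $(B\vec{z})_{\langle q,s\rangle}$. By the defining formula $B_{\langle q,s\rangle,\langle q',s'\rangle} = P_{s,s'} M(s)_{q,q'}$ and $\vec{z}_{q',s'} = \expect_{\mathcal{M}[s']} L_{\mathcal{A}[q']}$, the double sum above is exactly
\[
\sum_{(q',s') \in Q \times S} B_{\langle q,s\rangle,\langle q',s'\rangle}\, \vec{z}_{q',s'} \;=\; (B\vec{z})_{\langle q,s\rangle}.
\]
Since $\langle q,s\rangle$ was arbitrary, this yields the vector identity $\vec{z} = B\vec{z}$.

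I do not expect a real obstacle: the lemma is essentially a bookkeeping step that packages \Cref{lem:factorexpect}(2) into matrix form. The only delicate issue — well-definedness and finiteness of the entries $\vec{z}_{q,s}$, which in turn justifies the manipulations that produced \Cref{lem:factorexpect}(2) — has already been discharged when \Cref{lem:factorexpect} was proved, using the fact that the number of final paths of $\mathcal{A}[q]$ on any word is uniformly bounded by some $N$ and hence $|L_{\mathcal{A}[q]}(w)| \le N$ for all $w$. Thus nothing further is needed here.
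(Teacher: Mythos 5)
Your proof is correct and follows exactly the paper's own argument: fix a coordinate $\langle q,s\rangle$, apply the second equation of \Cref{lem:factorexpect}, and recognise the resulting double sum as $(B\vec{z})_{\langle q,s\rangle}$ by the definition of $B$. The additional remark on well-definedness is consistent with the paper, which likewise discharges that concern in \Cref{lem:factorexpect}.
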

\begin{proof}[of Lemma~\ref{lem:syseq}]
Using Lemma \ref{lem:factorexpect}:
\begin{align*}
\vec{z}_{q,s} \ & =\  \expect_{\mathcal{M}[s]} L_{\mathcal{A}[q]} \\
& =\  \sum_{s' \in S} \sum_{q' \in Q} P_{s,s'}M(s)_{q,q'} \expect_{\mathcal{M}[s']} L_{\mathcal{A}[q']} \\
& =\  (B\vec{z})_{q,s} \tag*{\qed}
\end{align*}
\end{proof}
W.l.o.g.\ we can assume that for every $\langle qs \rangle$ in $B$ we can reach an accepting loop. We need to show that SCCs in $B$ have a spectral radius of at most 1. In fact, we can give a probabilistic interpretation to values in $(B_{C,C})^n$, where $C \subseteq Q\times S$ is an SCC in $B$. This mirrors Proposition 6 in \cite{cav16full}.
\begin{lemma}\label{lem:powersinterpret}
Let $C \subseteq Q \times S$ and $\langle qs \rangle, \langle rt \rangle \in C$. Let $n \in \N$. Define $A := B_{C,C}$.
Then $(A^n)_{\langle qs\rangle, \langle rt\rangle} = \expect_{\mathcal{M}[s]} \mathit{weight}^{C,n}_{\langle qs \rangle,\langle rt \rangle}(w)$, where
\begin{equation*}
\mathit{weight}^{C,n}_{\langle qs \rangle,\langle rt \rangle}(w) = \sum_{\langle q_0 s_0\rangle\ldots\langle q_ns_n\rangle \in \paths_{\langle qs \rangle,\langle rt \rangle}(A), w \in s_0\ldots s_nS^\omega}\prod_{0\leq i< n} M(s_i)_{q_i,q_{i+1}}.
\end{equation*}
(Intuitively, $\mathit{weight}^{C,n}_{\langle qs \rangle, \langle rt \rangle}(w)$ is the weight of the prefixes of length $n$ of paths over $w$ in $C$ starting in $\langle qs \rangle$ that reach $\langle rt \rangle$ at time $n$.)

Define
\begin{equation*}
E^{C, n}_{\langle qs \rangle,\langle rt \rangle} := \{ s_0s_1\ldots \in sS^\omega \mid \exists q_1\ldots q_n . \langle q_0 s_0\rangle\ldots\langle q_ns_n\rangle \in \paths_{\langle qs \rangle,\langle rt \rangle}(A)\}.
\end{equation*}

In particular, if $C$ is (a subset of) an SCC, then $(A^n)_{\langle qs\rangle, \langle rt\rangle} = \prob_{\mathcal{M}[s]}\left(E^{C, n}_{\langle qs \rangle,\langle rt \rangle}\right)$ and hence $\rho(A) \leq 1$.
\end{lemma}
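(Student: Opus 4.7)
I would first prove the matrix-power identity
\[
(A^n)_{\langle qs\rangle,\langle rt\rangle} \;=\; \expect_{\mathcal{M}[s]}\bigl[\mathit{weight}^{C,n}_{\langle qs\rangle,\langle rt\rangle}(w)\bigr]
\]
by induction on $n$, independently of any SCC assumption on $C$. The base case $n=0$ is immediate: $A^0$ is the identity, the empty product equals $1$, and the only ``path'' of length $0$ is the trivial one at $\langle qs\rangle$, which contributes to the sum iff $\langle qs\rangle = \langle rt\rangle$. For the inductive step I would expand $A^{n+1} = A \cdot A^n$, substitute $A_{\langle qs\rangle,\langle q's'\rangle} = P_{s,s'}M(s)_{q,q'}$, and factor length-$(n+1)$ paths into an initial edge followed by a length-$n$ suffix. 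The factor $P_{s,s'}$ is exactly the $\mathcal{M}[s]$-probability that the first letter of $w$ is $s'$, so conditioning on the first letter and using the inductive hypothesis gives the claim.

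Next, I would show that when $C$ lies in an SCC of $B$, every edge of $A$ has $M$-weight in $\{0,1\}$. Indeed, if $\langle q,s\rangle \to \langle q',s'\rangle$ is an edge of $A$ inside the SCC $C$, then $C$ provides a return path $\langle q',s'\rangle \leadsto \langle q,s\rangle$, which projects to a path $q' \leadsto q$ in $\mathcal{A}$; ultimate stability of the IBA $\mathcal{A}$ then forces $M(s)_{q,q'} \in \{0,1\}$. Consequently $\mathit{weight}^{C,n}_{\langle qs\rangle,\langle rt\rangle}(w)$ is a non-negative integer that simply counts the number of $A$-paths of length $n$ from $\langle qs\rangle$ to $\langle rt\rangle$ whose projection agrees with the prefix of $w$; in particular it vanishes off $E^{C,n}_{\langle qs\rangle,\langle rt\rangle}$.

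The core step is to upgrade this count to a $0/1$-valued indicator almost surely. I would argue by contradiction: suppose the integer equals $\ell \ge 2$ on a positive-measure set of words, and fix two distinct weight-$1$ $A$-paths $\pi_1 \neq \pi_2$ from $\langle qs\rangle$ to $\langle rt\rangle$ witnessing this on a common prefix. Using that $C$ is an SCC, pick a path $\sigma$ inside $C$ from $\langle rt\rangle$ back to $\langle qs\rangle$; then $\pi_1\sigma$ and $\pi_2\sigma$ are two distinct loops at $\langle qs\rangle$ with the same $\mathcal{M}$-projection, so freely concatenating $m$ of them yields $2^m$ pairwise distinct weight-$1$ $A$-paths from $\langle qs\rangle$ to itself, all projecting to one common finite word. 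By the trimness WLOG assumptions I would prepend a finite $\mathcal{A}$-path from some initial state $q_0$ (with $\alpha(q_0)\ne 0$) to $q$ and append an infinite tail driving $\mathcal{A}[q]$ into a loop visiting a final state infinitely often. All $2^m$ extensions share the common prefix and suffix but remain pairwise distinct, hence give $2^m$ distinct final paths of $\mathcal{A}$ over one fixed infinite word. Letting $m \to \infty$ contradicts the global ambiguity bound $N$ of the IBA.

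Combining the two steps gives $(A^n)_{\langle qs\rangle,\langle rt\rangle} = \expect_{\mathcal{M}[s]}[\mathbf{1}_{E^{C,n}}(w)] = \prob_{\mathcal{M}[s]}(E^{C,n}_{\langle qs\rangle,\langle rt\rangle}) \le 1$, so every entry of every power of $A$ is bounded by~$1$; since $A$ is finite-dimensional, Gelfand's formula $\rho(A) = \lim_n \|A^n\|^{1/n}$ yields $\rho(A) \le 1$. The main obstacle is the unambiguity-within-SCC step: turning a local multiplicity of weight-$1$ $A$-paths into a genuine violation of the global bound on $|\mathit{FinalPaths}_\mathcal{A}|$ requires simultaneously invoking ultimate stability, the SCC structure of $C$ inside $B$, and the trimness assumptions, and it is where the IBA hypothesis is really used.
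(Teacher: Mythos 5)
Your proposal is correct and follows essentially the same route as the paper: the expectation identity is obtained by unrolling the matrix product against the cylinder-set decomposition (you do it by induction, the paper by a direct sum over length-$n$ state sequences, which is the same computation), and the $0/1$ upgrade is exactly the paper's observation that SCC edges have weight $1$ by ultimate stability and that SCCs admit no ``diamonds'' because pumping one would exceed the global bound $N$ on final paths. Your contradiction argument is just a more detailed spelling-out of that pumping step.
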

\begin{proof}
\begin{eqnarray*}
&& \expect_{\mathcal{M}[s]} \mathit{weight}^{C,n}_{\langle qs \rangle,\langle rt \rangle}(w) \\
& = & \int_{w \in \paths(\mathcal{M}[s_0])}\mathit{weight}^{C,n}_{\langle qs \rangle, \langle rt \rangle}(w)d\prob(w) \\
& = & \sum_{s_0\ldots s_n \in S^n}\int_{w \in \paths(\mathcal{M}[s])\cap s_0\ldots s_nS^\omega}\mathit{weight}^{C,n}_{\langle qs \rangle, \langle rt \rangle}(w)d\prob(w) \\
& = & \sum_{s_0\ldots s_n \in S^n}\int_{w \in \paths(\mathcal{M}[s_n])}\prod_{0 \leq i < n}P_{s_i, s_{i+1}}\mathit{weight}^{C,n}_{\langle qs \rangle, \langle rt \rangle}(s_0\ldots s_nw)d\prob(w) \\
& = & \sum_{s_0\ldots s_n \in S^n}\int_{w \in \paths(\mathcal{M}[s_n])}\prod_{0 \leq i < n}P_{s_i, s_{i+1}} \\
&  &\sum_{\langle q_0 s_0\rangle\ldots\langle q_ns_n\rangle \in \paths_{\langle qs \rangle,\langle rt \rangle}(A), w \in s_0\ldots s_nS^\omega}\prod_{0\leq i< n} M(s_i)_{q_i,q_{i+1}}d\prob(w) \\
& = & \sum_{s_0\ldots s_n \in S^n}\int_{w \in \paths(\mathcal{M}[s_n])} \\
&  &\sum_{\langle q_0 s_0\rangle\ldots\langle q_ns_n\rangle \in \paths_{\langle qs \rangle,\langle rt \rangle}(A)), w \in s_0\ldots s_nS^\omega}\prod_{0\leq i< n}P_{s_i, s_{i+1}} M(s_i)_{q_i,q_{i+1}}d\prob(w) \\
& = & \sum_{s_0\ldots s_n \in S^n}\sum_{\langle q_0 s_0\rangle\ldots\langle q_ns_n\rangle \in \paths_{\langle qs \rangle,\langle rt \rangle}(A)}\prod_{0\leq i< n}P_{s_i, s_{i+1}} M(s_i)_{q_i,q_{i+1}} \\
& = & \sum_{s_0\ldots s_n \in S^n}\sum_{\langle q_0 s_0\rangle\ldots\langle q_ns_n\rangle \in \paths_{\langle qs \rangle,\langle rt \rangle}(A)}\prod_{0\leq i< n} A_{\langle q_is_i\rangle,\langle q_{i+1}s_{i+1}\rangle} \\
& = & (A^n)_{\langle qs \rangle, \langle rt \rangle}
\end{eqnarray*}
Note that since an accepting loop can be reached from every state in $B$, no SCC in $B$ can have diamonds: otherwise, one could traverse a loop over that diamond $\log N + 1$ times to get a word that has more than $N$ final paths. Since any path that stays in an SCC has weight 1, and there are no diamonds in SCCs, we see that if $C$ is (a subset of) an SCC, then $\expect_{\mathcal{M}[s]} \mathit{weight}^{C,n}_{\langle qs \rangle,\langle rt \rangle}(w) = \prob_{\mathcal{M}[s]}\left(E^{C, n}_{\langle qs \rangle,\langle rt \rangle}\right)$ and hence $(A^n)_{\langle qs\rangle, \langle rt\rangle} = \prob_{\mathcal{M}[s]}\left(E^{C, n}_{\langle qs \rangle,\langle rt \rangle}\right)$.
\qed
\end{proof}

The definitions of recurrent SCCs and cuts are completely equivalent to those in \cite{cav16full} and are copied here verbatim: a \emph{recurrent SCC} is an SCC with spectral radius 1. We call a recurrent SCC $D$ \emph{accepting} if for some $\langle q t\rangle \in D$ we have $q \in F$. Let $D \subseteq Q \times S$ be an SCC of $B$. A set $\alpha \subseteq D$ is called a \emph{fiber} if it can be written as $\alpha = A \times \{s\}$ for some $A \subseteq Q, s \in S$. Given such a fiber $\alpha$ and $t \in S$, if $M_{s,t} > 0$ we then define a fiber
\begin{equation*}
\alpha \triangleright t := \{\langle q't\rangle \in D \mid q' \in \delta(q, s)\}
\end{equation*}
If $M_{s,t}=0$ then $\alpha \triangleright t$ is left undefined. We extend this definition inductively by writing $\alpha \triangleright \epsilon = \alpha$ and $\alpha \triangleright wt = (\alpha \triangleright w) \triangleright t$ for $t \in S, w \in S^*$. If $\alpha$ is a singleton $\{d\}$ we may write $d \triangleright w$ for $\alpha \triangleright w$.

We call a fiber $\alpha \subseteq D$ a \emph{cut} of $D$ if (i) $\alpha = d \triangleright v$ for some $d \in D$ and $v \in S^*$, and (ii) $\alpha \triangleright w \neq \emptyset$ holds for all $w \in S^*$ such that $\alpha \triangleright w$ is defined. Clearly if $\alpha$ is a cut then so is $\alpha \triangleright w$ when the latter is defined. Given a cut $\alpha \subseteq D$, we call its characteristic vector $\vec{\mu} \in \{0, 1\}^D$ a \emph{cut vector}.

Let $D$ be a recurrent SCC. A vector $\vec{\mu} \in [0,1]^D$ is called a \emph{$D$-normaliser} if $\vec{\mu}^T\vec{z} = 1$.

We will need versions of~\cite[Lemma 8]{cav16full} and~\cite[Lemma 10]{cav16full} as well. Lemma 10.1 only relies on the probabilistic interpretation of submatrices in SCCs as shown in Lemma~\ref{lem:powersinterpret} and the nonnegativity of those matrices, and hence we can invoke its proof verbatim:
\begin{lemma}[Lemma~10.1 in~\cite{cav16full}]\label{lem:cavlem10.1}
Let $D \subseteq Q \times S$ be an SCC. Then $D$ is recurrent if and only if it has a cut.
\end{lemma}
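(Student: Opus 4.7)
My plan is to prove both directions, leveraging the probabilistic interpretation of $A := B_{D,D}$ from Lemma~\ref{lem:powersinterpret} together with the Perron--Frobenius facts in Theorem~\ref{lem:perron}.

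For the easy direction (cut implies recurrent), I would let $\vec{\mu} \in \{0,1\}^D$ be the characteristic vector of a cut $\alpha$ (writing $\alpha = A' \times \{s\}$) and show $\vec{\mu}^T A^n \vec{1} \ge 1$ for every~$n$. Since SCCs of $B$ are diamond-free (as already established in the proof of Lemma~\ref{lem:powersinterpret}), Lemma~\ref{lem:powersinterpret} identifies $(A^n \vec{1})_d$ with the probability that some length-$n$ path out of $d$ stays in $D$. The cut property guarantees that for $\mathcal{M}[s]$-almost every infinite word $\omega$ the fiber $\alpha \triangleright (\omega_0 \cdots \omega_{n-1})$ is nonempty, i.e.\ at least one $d \in \alpha$ contributes a surviving length-$n$ path. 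Summing indicator variables over $d \in \alpha$ dominates this maximum, so taking expectations yields $\vec{\mu}^T A^n \vec{1} \ge 1$; combined with $\rho(A) \le 1$ from Lemma~\ref{lem:powersinterpret}, this forces $\rho(A) = 1$.

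For the converse (recurrent implies cut), I would first apply Theorem~\ref{lem:perron}(1) to obtain a strictly positive eigenvector $\vec{x}$ with $A\vec{x} = \vec{x}$. Bounding the components of $\vec{x}$ between positive constants gives $(A^n \vec{1})_d \ge c > 0$ uniformly in $n$ for every $d \in D$, so by monotone convergence the survival probability
\[
\sigma(d) \ := \ \prob_{\mathcal{M}[s]}\bigl(d \triangleright (\omega_0 \cdots \omega_{n-1}) \neq \emptyset \text{ for all } n\bigr)
\]
is strictly positive. I then extend $\sigma$ to arbitrary reachable fibers $\beta$ (with $\sigma(\emptyset) := 0$); by finiteness of $D$ only finitely many fibers are reachable from a fixed $d$, so the supremum $p^* = \sup\{\sigma(\beta) : \beta = d \triangleright v,\ v \in S^*\}$ is attained at some $\beta^* = d \triangleright v^*$ and satisfies $p^* > 0$.

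The crux is the one-step identity $\sigma(\beta^*) = \sum_{t \in S} P_{s,t}\,\sigma(\beta^* \triangleright t)$ together with the bound $\sigma(\beta^* \triangleright t) \le p^*$: because $P_{s,\cdot}$ is a probability distribution and a convex combination of values bounded by $p^*$ equals $p^*$, each positive-weight term must itself equal $p^*$. Hence $\beta^* \triangleright t$ is nonempty whenever $P_{s,t} > 0$, and iterating yields $\beta^* \triangleright w \neq \emptyset$ for every $w$ on which it is defined, so $\beta^*$ is a cut. The main technical obstacle is this maximality argument: in particular, establishing the one-step recursion for $\sigma$ requires a short monotone-convergence argument together with the convention that the empty fiber is absorbing, and care is needed to ensure that $\beta^*$ is reachable from a single state $d$ via a single word $v^*$ so that clause~(i) of the cut definition is met.
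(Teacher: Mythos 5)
Your proof is correct and follows essentially the same route as the proof the paper imports verbatim from~\cite{cav16full}: the cut vector forces $\vec{\mu}^\top(B_{D,D})^n\vec{1}\ge 1$ via the probabilistic interpretation of Lemma~\ref{lem:powersinterpret} (hence $\rho=1$), and conversely a strictly positive Perron eigenvector yields positive survival probability, from which a fiber of maximal survival probability is extracted as a cut by the convexity argument. The only nit is that $(A^n\vec{1})_d$ is a \emph{sum} of the probabilities $\prob_{\mathcal{M}[s]}\bigl(E^{D,n}_{d,d'}\bigr)$ over $d'$ rather than the probability of surviving $n$ steps, so an extra union-bound factor of $1/|D|$ is needed before concluding $\sigma(d)>0$; this is harmless.
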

\begin{proof}
Verbatim from the proof in~\cite{cav16full}.
\qed
\end{proof}

For our proof of \cite[Lemma 8.1]{cav16full} we will first show something stronger:
\begin{lemma}\label{lem:nonrecurreach}
Let $D$ be a recurrent SCC, and let $C \subseteq Q \times S \setminus D$ be the set of states outside of $D$ reachable from $D$. Then $\rho(B_{C,C}) < 1$.
\end{lemma}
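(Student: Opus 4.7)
The strategy is adapted from the analogous UBA argument in~\cite{cav16full}, proceeding by contradiction and handling the possibly signed weights of $B$ on cross-SCC edges with care.

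First, I would reduce the claim to a statement about the spectral radii of SCCs inside $C$. Ordering the states of $C$ by a topological sort of the SCC-DAG of the subgraph induced by $B_{C,C}$ turns $B_{C,C}$ into a block upper triangular matrix, with one diagonal block per SCC of $B$ lying in $C$. (Any SCC of $B$ that meets $C$ is entirely contained in $C$: otherwise it would straddle $D$ and $C$, contradicting the maximality of the SCC~$D$.) Since the spectrum of a block upper triangular matrix is the union of the spectra of its diagonal blocks,
\begin{equation*}
\rho(B_{C,C}) \;=\; \max\bigl\{\rho(B_{C',C'}) \;\big|\; C' \subseteq C \text{ an SCC of } B\bigr\}.
\end{equation*}
For each such SCC $C'$, every edge inside $C'$ lies on a loop of $\mathcal{A}$ and so, by ultimate stability, carries weight in $\{0,1\}$; hence $B_{C',C'}$ is nonnegative and Lemma~\ref{lem:powersinterpret} gives $\rho(B_{C',C'}) \leq 1$. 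Assuming $\rho(B_{C,C}) \geq 1$ toward a contradiction, some SCC $C' \subseteq C$ of $B$ must satisfy $\rho(B_{C',C'}) = 1$, i.e.~$C'$ is itself recurrent.

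Second, I would rule out the coexistence of two distinct recurrent SCCs $D$ and $C'$ with $C'$ reachable from $D$. By Lemma~\ref{lem:cavlem10.1} both admit cuts $\alpha_D \subseteq D$ and $\alpha_{C'} \subseteq C'$. Pick $d \in D$ with a $B$-path to some $c \in C'$ and let $s$ be the MC-component of $d$. Following the UBA blueprint of~\cite{cav16full}, with positive MC-probability the random word $w$ generated from $s$ is such that, on the one hand, the cut $\alpha_D$ together with the probabilistic interpretation of Lemma~\ref{lem:powersinterpret} produces at least one final path of $\mathcal{A}$ surviving forever inside $D$; and on the other hand, the positive-probability transit from $d$ to $c$ together with the cut $\alpha_{C'}$ produces, independently, at least one further final path surviving forever inside $C'$. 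Iterating the cycling $\alpha_D \triangleright w'$ and $\alpha_{C'} \triangleright w''$ along longer and longer prefixes of $w$ pushes the number of coexisting final paths of $\mathcal{A}$ over $w$ above any fixed bound, and in particular above the global ambiguity bound $N$ of the IBA $\mathcal{A}$---the required contradiction.

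The main obstacle is making the pumping argument in the second step robust against the signed weights $B$ may carry on cross-SCC edges. The key saving observation is that all the counting happens \emph{inside} the SCCs $D$ and $C'$, where ultimate stability pins weights to $\{0,1\}$: each surviving final path contributes $+1$ with no possibility of sign cancellation, so a lower bound on the graph-theoretic number of final paths inside $D$ and inside $C'$ is a genuine lower bound on $|\mathit{FinalPaths}_{\mathcal{A}}(w)|$. Modulo this structural care---which is exactly where the IBA argument diverges from the UBA one---the probabilistic pumping transfers verbatim from~\cite{cav16full}, yielding $\rho(B_{C,C}) < 1$.
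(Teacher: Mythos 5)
Your overall skeleton matches the paper's: reduce to the existence of a second recurrent SCC $D'$ reachable from $D$, then contradict the global bound $N$ on $|\mathit{FinalPaths}_{\A}(w)|$ by pumping between $D$ and $D'$. Your first step is in fact more careful than the paper's, which invokes Theorem~\ref{lem:matrixprop}.3 directly on $B_{C,C}$ even though that matrix may carry negative entries on cross-SCC edges; your block-triangular decomposition, together with the observation that the diagonal blocks are nonnegative by ultimate stability and hence have spectral radius at most $1$ by Lemma~\ref{lem:powersinterpret}, closes that gap cleanly.

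The second step, however, has a genuine gap. You assert that each iteration of the cycle through $D$ spawns ``at least one further final path surviving forever inside $C'$'' and that these accumulate past $N$. Two things are missing. First, a branch that enters $C'$ and lands in a cut survives forever (the cut stays nonempty under $\triangleright$), but that does not make it a final path: it must visit $F$ infinitely often, which requires choosing the continuation of the word appropriately. Second, and more importantly, to exceed the bound $N$ you need $N+1$ paths that are simultaneously final over one and the same word $w$; producing a different accepting continuation for each branch does not suffice. The paper's device for this is a pigeonhole argument: the periodic word $(s_1\ldots s_l)^{|D'|(N+1)}$ is built so that $d \in d \triangleright s_1\ldots s_l$ (so the branching can be repeated forever) and each period plants a fresh cut in $D'$; after $|D'|(N+1)$ periods some single state $d' \in D'$ is reached by at least $N+1$ distinct finite paths at the same time, and then one accepting suffix from $d'$ (which exists by the trimness assumption) extends all of them to $N+1$ distinct final paths over the same word. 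Without this synchronisation step your argument does not close. Your worry about signed weights in this step is, incidentally, a non-issue: the contradiction is with the purely combinatorial count of final paths, which the IBA definition bounds by $N$ irrespective of any weight computation.
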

\begin{proof}
Towards a contradiction, assume $\rho(B_{C,C}) = 1$. Then by Theorem~\ref{lem:matrixprop}.3 there exists a recurrent SCC $D'$ such that $D'$ is reachable from $D$. We will create a word consisting of five parts such that $\A$ has more than $N$ final paths over that word. Since $N$ is the global bound on the number of final paths in $\A$, this leads to a contradiction.

\begin{enumerate}
	\item Firstly, pick $d = \langle qs_0 \rangle \in D$. Let $s_1\ldots s_i$ be such that $d \triangleright s_1\ldots s_i$ is a cut, such a word exists by Lemma~\ref{lem:cavlem10.1}.
	\item Let $s_{i+1}\ldots s_j$ be such that there exists a path from $d$ to some $d' \in D'$ over $s_1\ldots s_j$.
	\item Let $s_{j+1}\ldots s_k$ be such that $d' \triangleright s_{j+1}\ldots s_k$ is a cut. Again this exists by Lemma~\ref{lem:cavlem10.1}.
	\item Let $s_{k+1}\ldots s_l$ be such that $d \in d \triangleright s_1\ldots s_l$. This exists because $d \triangleright s_1\ldots s_i$ is a cut and for any cut $c$ and any $s \in S$ we have that $c \triangleright s$ is also a cut.
\end{enumerate}
Then consider the word $(s_1\ldots s_l)^{|D'|(N+1)}$. Because $d \in d \triangleright s_1\ldots s_l$ and we reach a cut in $D'$ from $d$ over $s_1\ldots s_l$, this means at least $|D'|(N+1)$ cuts in $D'$ are reached after reading $(s_1\ldots s_l)^{|D'|(N+1)}$ from $D$. By the pigeonhole principle, there exists a $d' \in D'$ such that there are at least $N+1$ paths from $d$ to $d'$ over  $(s_1\ldots s_l)^{|D'|(N+1)}$. Hence, if we pick $s_{l+1}\ldots$ such that there is a final path from $d'$ over $s_{l+1}\ldots$, we see that there are more than $N$ final paths from $d$ over $(s_1\ldots s_l)^{|D'|(N+1)}s_{l+1}\ldots$, contradicting the global bound on the number of final paths in $\A$. Thus, $\rho(D') < 1$ and therefore $\rho(B_{C,C}) < 1$.
\qed
\end{proof}

Now we can show Lemma~8.1:
\begin{lemma}[Lemma~8.1 in~\cite{cav16full}]\label{lem:cavlem8.1}
Let $D$ be a recurrent SCC, and let $\vec{x}$ be any vector satisfying $\vec{x}=B\vec{x}$. Then $\vec{x}_D = B_{D,D}\vec{x}_D$.
\end{lemma}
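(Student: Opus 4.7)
The plan is to establish $\vec{x}_C = \vec{0}$ first, where $C \subseteq (Q\times S)\setminus D$ denotes the set of states outside $D$ but reachable from $D$ as in Lemma~\ref{lem:nonrecurreach}, and then read off the claim on the $D$-block.

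The structural key will be the observation that no state in $C$ can reach any state in $D$: if some $c \in C$ reached some $d \in D$, then combined with the fact that $c$ is reachable from $D$ and $D$ is strongly connected, one would obtain paths $d \to c$ and $c \to d$, placing $c$ in the SCC of $d$ and hence in $D$, contradicting $c \in C$. Consequently, for every $\langle q,s\rangle \in C$ the only $\langle q',s'\rangle$ with $B_{\langle q,s\rangle,\langle q',s'\rangle} \neq 0$ lies in $C$, so restricting $\vec{x} = B\vec{x}$ to the $C$-coordinates will give $(I - B_{C,C})\vec{x}_C = \vec{0}$. Since Lemma~\ref{lem:nonrecurreach} guarantees $\rho(B_{C,C}) < 1$, the number $1$ is not an eigenvalue of $B_{C,C}$, so $I - B_{C,C}$ is invertible and $\vec{x}_C = \vec{0}$ follows.

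For the $D$-coordinates, I will use that every successor of a state in $D$ is either still in $D$ or else outside $D$ but reachable from $D$, hence in $C$. Splitting $\vec{x} = B\vec{x}$ accordingly then yields $\vec{x}_D = B_{D,D}\vec{x}_D + B_{D,C}\vec{x}_C$, which collapses to $\vec{x}_D = B_{D,D}\vec{x}_D$ upon substituting $\vec{x}_C = \vec{0}$.

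The only non-routine ingredient is the closure property ``$C$-successors stay in $C$''; the rest is routine linear algebra. I would flag one caveat to handle in the write-up: $B_{C,C}$ need not be nonnegative (weights of IBA edges off loops can be negative, and such edges can appear between distinct SCCs of the product graph), but the invertibility step relies only on the spectral radius bound, which Lemma~\ref{lem:nonrecurreach} supplies irrespective of sign.
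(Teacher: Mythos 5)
Your proof is correct, and it rests on the same two pillars as the paper's: the decomposition of $(Q\times S)\setminus D$ via the set $C$ of states reachable from $D$, and the spectral-radius bound $\rho(B_{C,C})<1$ from Lemma~\ref{lem:nonrecurreach}. The difference is in how the conclusion is extracted. The paper iterates the fixed-point equation, writes $\vec{x}_D$ in terms of $B^n$, and argues that the contribution routed through $\bar{D}$ is governed by powers of $B_{C,C}$ and hence vanishes as $n\to\infty$; you instead restrict $\vec{x}=B\vec{x}$ to the $C$-block, observe $(I-B_{C,C})\vec{x}_C=\vec{0}$, and conclude $\vec{x}_C=\vec{0}$ from invertibility of $I-B_{C,C}$. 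Your route buys two things: it avoids the limit argument entirely, and it yields the stronger intermediate fact $\vec{x}_C=\vec{0}$ for free. It also makes explicit the closure property (successors of states in $C$ remain in $C$, since a state both reachable from $D$ and able to reach $D$ would lie in the SCC $D$ itself) that the paper uses only implicitly when it replaces $B^{n-1}$ by $B_{C,C}^{n-1}$ on the second summand. Your caveat about possible negative entries in $B_{C,C}$ is well taken and applies equally to both arguments, since each ultimately leans only on the spectral radius bound supplied by Lemma~\ref{lem:nonrecurreach}.
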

\begin{proof}
Let $\bar{D} = Q \times S \setminus D$ and let $C \subseteq \bar{D}$ be the set of states reachable from $D$. For all $n$ we have:
\begin{eqnarray*}
\vec{x}_D & = & (B^n\vec{x})_D \\
& = & B^{n-1}B_{D,D}\vec{x}_D + B^{n-1}B_{D,\bar{D}}\vec{x}_{\bar{D}} \\
& = & B^{n-1}B_{D,D}\vec{x}_D + B_{C,C}^{n-1}B_{D,\bar{D}}\vec{x}_{\bar{D}}.
\end{eqnarray*}
Since $\rho(B_{C,C}) < 1$ by Lemma~\ref{lem:nonrecurreach}, this second term goes to 0 as $n$ goes to infinity. Hence, $\vec{x}_D = B_{D,D}\vec{x}_D$.
\qed
\end{proof}

The proof of Lemma~8.2 in~\cite{cav16full} relies on the following lemma:
\begin{lemma}\label{lem:recursinks}
Let $D$ be a recurrent SCC in $B$, and let $\langle q, s\rangle \in D$. Then almost surely all final paths of $\mathcal{A}[q]$ are contained in $D$, and for any $\langle q', s' \rangle \not \in D$ reachable from $\langle q, s \rangle$, $z_{\langle q', s'\rangle} = 0$.
\end{lemma}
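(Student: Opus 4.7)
The plan is to set $C \eqdef \{\langle q',s'\rangle \notin D : \langle q',s'\rangle \text{ reachable from } \langle q,s\rangle \text{ in } B\}$ and exploit Lemma~\ref{lem:nonrecurreach}, which tells us that $\rho(B_{C,C}) < 1$. The preliminary observation is that $C$ is closed under $B$-successors: if $v \in C$ and $v \to v'$ in $B$ then $v'$ is reachable from $D$, and $v' \in D$ would force $v$ into $D$'s SCC, a contradiction; hence $B$ has no edges from $C$ to its complement.

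The second assertion then falls out immediately. By Lemma~\ref{lem:syseq}, $\vec z = B\vec z$; restricting to $C$ and using $B_{C,\overline C} = 0$ yields $\vec z_C = B_{C,C}\vec z_C$, so $(I - B_{C,C})\vec z_C = 0$. Since $\rho(B_{C,C}) < 1$ the matrix $I - B_{C,C}$ is invertible, and $\vec z_C = \vec 0$.

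For the first assertion I will show that with probability one no final path of $\mathcal{A}[q]$ leaves $D$. Because $C$ is closed, such a path, upon entering $C$ at some time $t$ in some state $\langle q'',s''\rangle$, would produce an infinite $B$-trajectory starting at $\langle q'',s''\rangle$ that stays in $C$; by the Markov property and a countable union over $(t,\langle q'',s''\rangle)$ it suffices to prove that for every $\langle q'',s''\rangle \in C$ the probability of such an infinite trajectory under $\prob_{\mathcal{M}[s'']}$ is zero. Entries of $B$ can be negative across SCC boundaries, so to convert the spectral bound into a probabilistic one I pass from $B$ to the nonnegative counting matrix $A \in \R^{C\times C}$ with $A_{\langle r,t\rangle,\langle r',t'\rangle} = P_{t,t'}$ when $M(t)_{r,r'} \neq 0$ and $0$ otherwise. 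Ultimate stability forces every edge inside an SCC of $B$ to have automaton-weight $1$, so $A$ and $B$ coincide on every SCC block of $C$; block-triangularity of the SCC decomposition then gives $\rho(A) = \max_{C'} \rho(A_{C',C'}) = \max_{C'} \rho(B_{C',C'}) = \rho(B_{C,C}) < 1$.

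Finally, $(A^n\vec 1)_{\langle q'',s''\rangle}$ equals the $\prob_{\mathcal{M}[s'']}$-expected number of length-$n$ $B$-trajectories from $\langle q'',s''\rangle$ that stay in $C$, and $\rho(A) < 1$ forces this to tend to $0$. Markov's inequality, combined with the fact that the events ``a length-$n$ trajectory exists'' are decreasing in $n$, then yields almost sure non-existence of an infinite trajectory in $C$ starting at $\langle q'',s''\rangle$, completing the proof. The main technical subtlety I anticipate is the passage from the signed matrix $B$ (whose eigenvalues govern $\vec z$) to the nonnegative counting matrix $A$ (whose powers govern existence probabilities); ultimate stability is precisely what makes this switch work on SCC blocks at no cost.
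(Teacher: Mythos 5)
Your proof is correct, but it takes a genuinely different route from the paper's. The paper argues by contradiction with an explicit word construction: assuming a final path escapes $D$ with positive probability, it uses cuts (Lemma~\ref{lem:cavlem10.1}), Lemma~\ref{lem:markovprefix}, and the strong connectivity of $D$ to build a periodic word $(s_0\ldots s_n)^{N+1}$ over which the automaton can branch out of $D$ at each repetition into a region that almost surely carries a final path, producing more than $N$ final paths and contradicting image-binariness; the vanishing of $\vec z$ outside $D$ is then deduced from the first assertion. You instead leverage the already-established spectral bound $\rho(B_{C,C})<1$ of Lemma~\ref{lem:nonrecurreach} (whose proof is where the paper's pumping work really lives), getting the second assertion immediately and independently from $\vec z_C=B_{C,C}\vec z_C$ (via Lemma~\ref{lem:syseq} and the closedness of $C$) and invertibility of $I-B_{C,C}$, and the first assertion from the geometric decay of the expected number of surviving trajectories in $C$. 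The passage from the signed matrix $B_{C,C}$ to the nonnegative counting matrix $A$ is handled correctly: ultimate stability makes the two agree on SCC diagonal blocks, and block-triangularity transfers the spectral radius. Your route avoids duplicating the pumping construction, makes the ``hence'' in the paper's last sentence rigorous (the paper leaves the derivation of $z_{\langle q',s'\rangle}=0$ from the almost-sure statement implicit), and in fact yields the slightly stronger conclusion that almost surely \emph{no} path of $\A[q]$, final or not, ever leaves $D$; the paper's argument, by contrast, is self-contained given the cut machinery and does not need the counting-matrix detour. One small point worth making explicit in your write-up: a path of $\A[q]$ over a word in the support of $\prob_{\mathcal{M}[s]}$ is a $B$-trajectory only because every finite prefix of such a word has positive probability, so the factors $P_{s_i,s_{i+1}}$ are nonzero almost surely.
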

\begin{proof}
Suppose that with positive probability a word starting in $s$ is generated such that $\mathcal{A}[q]$ has a final path outside $D$. We will construct a word consisting of three repeating parts that admits more than $N$ final paths:
\begin{enumerate}
	\item Let $s_0\ldots s_l \in S^*$ be such that $s_0 = s$ and $\langle q,s\rangle \triangleright s_1\ldots s_l$ is a cut containing $\langle q,s\rangle$. Such a word exists by Lemma \ref{lem:cavlem10.1}.
	\item The language of words that have a final path outside $D$ is regular. The probability of generating a word with a final path outside $D$ is bounded from above by the sum of probabilities of reaching some $\langle q', s' \rangle$ outside $D$ times the probability of generating a word starting in $s'$ over which there exists a final path from $q'$, and the probability of generating a word with a final path outside $D$ is nonzero. Therefore, there must exist $\langle q', s' \rangle \not \in D$ reachable from $\langle q, s\rangle$ such that with nonzero probability, a word starting in $s'$ is generated over which $\A[q']$ has a final path. Let $s_l \ldots s_{m'}$ be such that $s_{m'} = s'$ and $M(s_l\ldots s_{m'})_{q,q'} > 0$. By Lemma \ref{lem:markovprefix} there exists a word $s_{m'} \ldots s_m$ such that $\prob_{s_m}(\{w \in s_mS^\omega \mid \textrm{there exists a final path from $q'$ over $s_{m'}\ldots s_mw$}\})=1$. Hence, $s_l\ldots s_m$ is such that $\prob_{s_l}(s_l\ldots s_mS^\omega) > 0$ and any word prefixed by $s_l\ldots s_m$ almost surely has a final path outside $D$.
	\item Let $\langle q'',s''\rangle \in \langle q,s\rangle \triangleright s_0\ldots s_m$ (which exists since $\langle q,s \rangle \triangleright s_0\ldots s_l$ is a cut and by extension so is $\langle q,s\rangle \triangleright s_0\ldots s_m$). By strongly connectedness of $D$ there exists $s_m\ldots s_n \in S^*$ such that $s_m = s''$ and $\langle q,s\rangle \in \langle q'',s''\rangle \triangleright s_{m+1}\ldots s_n$.
\end{enumerate}
Now consider the word $(s_0\ldots s_n)^{N+1}$. By construction, $\langle q,s\rangle \in \langle q,s \rangle \triangleright s_0\ldots s_n$ and there exists $\langle q', s'\rangle \not \in D$ such that $q$ reaches $q'$ over $s_0\ldots s_{m-1}$ and $\prob_{s_m}(\{w \in s_mS^\omega \mid \textrm{there exists a final path from $q'$ over $s_{m'}\ldots s_mw$}\})=1$. Hence, after every iteration of $s_0\ldots s_n$, the automaton can nondeterministically choose to stay in $D$ or move to a set of states out of $D$ after which it will almost surely have a final path. Thus, for almost any word prefixed with $(s_0\ldots s_n)^{N+1}$, there are at least $N+1$ final paths, which contradicts the fact that $\A$ is an IBA. This shows that almost surely any final path from $\langle q, s \rangle \in D$ stays in $D$, and hence for any $\langle q', s'\rangle \not \in D$ reachable from $\langle q, s\rangle$, $z_{\langle q', s'\rangle} = 0$.
\qed
\end{proof}

With this we can invoke the proof of Lemma~8.2 verbatim:
\begin{lemma}[Lemma~8.2 in~\cite{cav16full}]\label{lem:cavlem8.2}
Let $D$ be a recurrent SCC. For all $d \in D$, we have $\vec{z}_d > 0$ iff $D$ is accepting.
\end{lemma}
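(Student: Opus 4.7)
The plan is to exploit the structure of $\vec{z}_D$ as an eigenvector of $B_{D,D}$, together with Perron--Frobenius theory and our Lemma~\ref{lem:recursinks}. By Lemma~\ref{lem:cavlem8.1}, $\vec{z}_D$ satisfies $B_{D,D}\vec{z}_D = \vec{z}_D$, so $\vec{z}_D$ is a nonnegative eigenvector of $B_{D,D}$ for eigenvalue~$1$. Since $D$ is a recurrent SCC, $B_{D,D}$ is strongly connected, nonnegative, and has spectral radius~$1$. Hence by Theorem~\ref{lem:perron} parts~1 and~2 the eigenspace for eigenvalue~$1$ is one-dimensional and spanned by a strictly positive vector. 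Since $\vec{z}_D$ is nonnegative, this forces either $\vec{z}_D = \vec{0}$ or $\vec{z}_D$ strictly positive. The lemma therefore reduces to showing that $\vec{z}_d > 0$ for \emph{some} $d \in D$ iff $D$ is accepting.

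For the direction ``$D$ not accepting implies $\vec{z}_D = \vec{0}$'', fix $d = \langle q, s \rangle \in D$. Any path of $\mathcal{A}[q]$ that stays entirely inside $D$ never visits $F$, so it fails to be a final path. By Lemma~\ref{lem:recursinks}, for $\prob_{\mathcal{M}[s]}$-almost every $w$ every final path of $\mathcal{A}[q]$ over $w$ lies in $D$; combined with the previous observation this means almost surely there are no final paths, so $L_{\mathcal{A}[q]}(w) = 0$ a.s., yielding $\vec{z}_d = \expect_{\mathcal{M}[s]} L_{\mathcal{A}[q]} = 0$.

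For the converse, assume some $\langle q^*, s^* \rangle \in D$ has $q^* \in F$ and fix $d = \langle q, s \rangle \in D$. Using Lemma~\ref{lem:cavlem10.1}, $D$ has a cut, and by strong connectivity of $D$ one can, similarly to the construction in the proof of Lemma~\ref{lem:recursinks}, build a finite word $s_0 \ldots s_n$ with $s_0 = s$ that traverses a cut containing~$d$, visits $\langle q^*, s^* \rangle$, and returns to a fiber containing~$d$. Iterating this word yields a cylinder set of positive $\prob_{\mathcal{M}[s]}$-measure on which $\mathcal{A}[q]$ admits a path that stays entirely in~$D$ and visits $q^*$ infinitely often; such a path is a final path.

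The main obstacle I foresee is ruling out cancellation between positive and negative path weights in the converse direction. This is where ultimate stability and the absence of diamonds in SCCs of $B$ (as already exploited in the proof of Lemma~\ref{lem:powersinterpret}) become essential: jointly they imply that every edge within $D$ has weight~$1$ and that the in-$D$ path over a given Markov walk is uniquely determined. Consequently $L_{\mathcal{A}[q]}(w)$ is a nonnegative integer on the full-measure event, furnished by Lemma~\ref{lem:recursinks}, on which all final paths stay in~$D$; moreover it is strictly positive on the cut-based cylinder constructed above. Therefore $\vec{z}_d = \expect_{\mathcal{M}[s]} L_{\mathcal{A}[q]} > 0$, completing the proof.
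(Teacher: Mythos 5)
Your Perron--Frobenius reduction is sound and in the spirit of the argument this paper imports from \cite{cav16full}: inside an SCC all automaton edges lie on loops, so ultimate stability makes $B_{D,D}$ nonnegative and strongly connected; Lemma~\ref{lem:cavlem8.1} places $\vec{z}_D$ in the $\rho(B_{D,D})$-eigenspace, and Theorem~\ref{lem:perron} then forces $\vec{z}_D$ to be a scalar multiple of a strictly positive vector, so it is all-zero, all-positive, or all-negative; the all-negative case is correctly excluded because Lemma~\ref{lem:recursinks} plus the weight-$1$ property of edges inside $D$ make $L_{\mathcal{A}[q]}(w)$ almost surely a nonnegative integer. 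The non-accepting direction is also fine. Note, however, that you diagnose ``cancellation between positive and negative path weights'' as the main obstacle; it is not --- that issue is dispatched exactly as you say.

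The genuine gap is in the accepting direction, in the sentence ``Iterating this word yields a cylinder set of positive $\prob_{\mathcal{M}[s]}$-measure on which $\mathcal{A}[q]$ admits a path that stays entirely in~$D$ and visits $q^*$ infinitely often.'' A cylinder set is determined by a finite prefix, and no finite prefix can force any path to visit $q^*$ infinitely often --- that is a tail event; the only word obtained by iterating $s_0\ldots s_n$ forever is the single word $(s_0\ldots s_n)^\omega$, whose measure is in general $0$. What is actually required here (and what the verbatim proof of Lemma~8.2 in \cite{cav16full} supplies) is an infinite-horizon argument in three steps: (i) the survival event $\bigcap_n\{\exists\text{ a length-}n\text{ path from }d\text{ staying in }D\}$ has positive probability, which follows from $\rho(B_{D,D})=1$ together with the probabilistic interpretation of Lemma~\ref{lem:powersinterpret} (or from the cut of Lemma~\ref{lem:cavlem10.1}); (ii) conditioned on survival, almost surely surviving paths hit the accepting state infinitely often, via a Borel--Cantelli/recurrence argument exploiting that $\alpha \triangleright w$ remains a nonempty cut and that from any cut an accepting state is reached within a bounded number of steps with probability bounded below; (iii) a K\"onig-type extraction turns these infinitely many hits into a single infinite path visiting $F$ infinitely often, i.e.\ a final path. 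Your single-cylinder construction skips all three steps, which are the actual content of this half of the lemma.
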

\begin{proof}
Verbatim from~\cite{cav16full}, relying only on Lemma~\ref{lem:recursinks} and the probabilistic interpretation in Lemma~\ref{lem:powersinterpret}.
\qed
\end{proof}

The proof of Lemma~10.2 in~\cite{cav16full} only relies on Lemma~\ref{lem:cavlem8.2} and hence can be invoked verbatim. This shows that cut vectors are $D$-normalisers.
\begin{lemma}[Lemma~10.2 in~\cite{cav16full}]\label{lem:cavlem10.2}
Let $D$ be an accepting recurrent SCC and let $\vec{\mu}$ be a cut vector in $D$. Then $\vec{\mu}^\top\vec{z}_D = 1$.
\end{lemma}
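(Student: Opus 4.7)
Unfolding $\vec{\mu}^\top \vec{z}_D$ using that $\vec{\mu}$ is the characteristic vector of the cut $\alpha = A \times \{s\}$ gives
\[
\vec{\mu}^\top \vec{z}_D \ =\ \sum_{q \in A} z_{q,s} \ =\ \expect_{\mathcal{M}[s]}\!\Bigl[\sum_{q \in A} L_{\mathcal{A}[q]}(w)\Bigr].
\]
My plan is to show this expectation equals $1$. The strategy parallels the UBA proof from~\cite{cav16full}: first, use the cut property to establish that along $\prob^{\mathcal{M}[s]}$-almost every $w$, paths from the fiber $\alpha$ never die out and remain in $D$; second, use that $D$ is accepting together with \Cref{lem:cavlem8.2} to argue the expectation is strictly positive; third, pin the value to exactly~$1$ via the image-binary property of $\mathcal{A}$.

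More concretely, the cut condition gives $\alpha \triangleright w_0\ldots w_n \neq \emptyset$ for every finite prefix of $w$, and by \Cref{lem:recursinks} applied to each $q \in A$, almost every infinite word from $\mathcal{M}[s]$ has all its final paths in $\mathcal{A}[q]$ confined to $D$. By \Cref{lem:cavlem8.1}, $\vec{z}_D$ is a $1$-eigenvector of $B_{D,D}$, so $\vec{\mu}^\top \vec{z}_D = \vec{\mu}^\top B_{D,D}^n \vec{z}_D$ for every $n$. Using the probabilistic interpretation of $B_{D,D}^n$ from \Cref{lem:powersinterpret}, the row vector $\vec{\mu}^\top B_{D,D}^n$ captures the expected weight, under $\mathcal{M}[s]$, of $n$-step path-continuations from the fiber $\alpha$ that stay in $D$. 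Taking $n \to \infty$ and exploiting the stability of cuts together with $\vec{z}_D$ being the unique (up to scaling) positive eigenvector of $B_{D,D}$ for eigenvalue~$1$ (\Cref{lem:perron}), I would match this against the almost-sure acceptance behaviour to argue that the constant value of $\vec{\mu}^\top \vec{z}_D$ is~$1$.

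The main obstacle will be the third step: pinning the value at exactly $1$ rather than some other positive real. In the UBA case of~\cite{cav16full} the cut enforces that the events ``$q$ begins the unique accepting run'' for $q \in A$ form an almost-sure partition of accepted words, so exactly one active $q \in \alpha$ survives per sampled word. For IBAs, signed path weights can interact, so I must instead lift the image-binary property of $\mathcal{A}$ to the auxiliary automaton $\mathcal{A}'$ obtained by replacing $\mathcal{A}$'s initial vector by the characteristic vector of $A$ on $Q$, noting that $\sum_{q\in A} L_{\mathcal{A}[q]}(w) = L_{\mathcal{A}'}(w)$. Concretely, I expect to find a prefix $u$ along which the initial vector of $\mathcal{A}$ propagates to a scalar multiple of the characteristic vector of $A$; then $L_{\mathcal{A}}(u\cdot w) \in \{0,1\}$ pushes image-binariness through to $\mathcal{A}'$ for almost every $w$, and \Cref{lem:cavlem8.2} (which yields $\vec{z}_D > 0$ in the accepting case) rules out the value $0$, forcing the expectation to be~$1$.
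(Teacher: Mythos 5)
Your opening reduction is fine: $\vec{\mu}^\top \vec{z}_D = \sum_{q\in A} z_{q,s} = \expect_{\mathcal{M}[s]}\bigl[\sum_{q\in A} L_{\mathcal{A}[q]}(w)\bigr]$. But the decisive step --- showing this expectation is exactly $1$ --- is the entire content of the lemma, you flag it yourself as unresolved, and the mechanism you propose for it does not work. A cut $\alpha = d \triangleright v$ is defined by \emph{unweighted} reachability from a single product state $d$ inside $D$; there is no reason a prefix $u$ should exist for which the weighted propagation of $\mathcal{A}$'s initial vector is a scalar multiple of the characteristic vector of $A$: the initial vector generally spreads signed weight over many states, including states outside $D$ and states of $D$ not in the cut, and nothing forces those contributions to vanish, so image-binariness of $\mathcal{A}$ on words $u\cdot w$ cannot be pushed through to your auxiliary automaton $\mathcal{A}'$. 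Likewise, invoking the one-dimensionality of the $\rho$-eigenspace (Theorem~\ref{lem:perron}) cannot pin the value: it determines $\vec{z}_D$ only up to a scalar, and that scalar is precisely what the lemma asserts.

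What actually closes the gap --- and what the paper relies on when it invokes the proof of Lemma~10.2 of~\cite{cav16full} verbatim --- are the structural facts established in the surrounding lemmas that make the restriction of $B$ to a recurrent SCC behave like the UBA case: all edges inside $D$ lie on loops and hence have weight $1$ (ultimate stability), and $D$ contains no diamonds (otherwise pumping would violate the global bound $N$ on final paths, as in the proof of Lemma~\ref{lem:powersinterpret}). Together with Lemma~\ref{lem:recursinks}, which confines almost all final paths from states of $D$ to $D$, this means that for almost every $w$ the summands $L_{\mathcal{A}[q]}(w)$, $q\in A$, are $\{0,1\}$-valued indicators of pairwise disjoint events; the cut property (the fiber never dies out) together with Lemma~\ref{lem:cavlem8.2} makes their union almost sure, whence the sum has expectation $1$. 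Your proposal never uses ultimate stability or the no-diamond property, which are exactly the ingredients that substitute for unambiguity and let the argument of~\cite{cav16full} transfer; without them the concern you correctly raise about interacting signed weights cannot be dismissed.
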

\begin{proof}
Verbatim.
\qed
\end{proof}

Finally we can prove our version of Lemma~12:
\begin{lemma}[Lemma~12 in~\cite{cav16full}]\label{lem:cavlem12}
Let $\mathcal{D}_+$ be the set of accepting recurrent SCCs, and $\mathcal{D}_0$ the set of non-accepting recurrent SCCs. For each $D \in \mathcal{D}_+$, let $\vec{\mu}_D \in [0,1]^D$ be a $D$-normalizer (which exists by Lemma~\ref{lem:cavlem10.2}). Then $\vec{z}$ is the unique solution of the following linear system:
\begin{eqnarray*}
& \vec{\zeta} & = B\vec{\zeta} \\
\textrm{for all $D \in \mathcal{D}_+$:} & \vec{\mu}^\top_D\vec{\zeta}_D & =  1 \\
\textrm{for all $D \in \mathcal{D}_0$:} & \vec{\zeta}_D & =  \vec{0}
\end{eqnarray*}
\end{lemma}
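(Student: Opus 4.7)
The plan is to establish existence and uniqueness separately. Existence is almost immediate from the preceding lemmas: the equation $\vec{\zeta} = B\vec{\zeta}$ is Lemma~\ref{lem:syseq}, the normaliser equations hold by the defining property of the $\vec{\mu}_D$, and $\vec{z}_D = \vec{0}$ for every non-accepting recurrent SCC is Lemma~\ref{lem:cavlem8.2}. The bulk of the work lies in uniqueness.

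For uniqueness I would take an arbitrary solution $\vec{\zeta}$, set $\vec{\xi} := \vec{\zeta} - \vec{z}$, and show $\vec{\xi} = \vec{0}$ in two stages: first on the recurrent SCCs, then on the rest. On a recurrent SCC $D$, since $\vec{\xi} = B\vec{\xi}$, Lemma~\ref{lem:cavlem8.1} restricts $\vec{\xi}_D$ to an eigenvector of the strongly connected nonnegative matrix $B_{D,D}$ at its Perron eigenvalue $1$, and by Theorem~\ref{lem:perron}(2) the corresponding eigenspace is one-dimensional. For $D \in \mathcal{D}_0$ both $\vec{\zeta}_D$ and $\vec{z}_D$ vanish (by the system and Lemma~\ref{lem:cavlem8.2}), so $\vec{\xi}_D = \vec{0}$. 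For $D \in \mathcal{D}_+$, Lemma~\ref{lem:cavlem8.2} says $\vec{z}_D > 0$ componentwise, which makes $\vec{z}_D$ a Perron eigenvector spanning the eigenspace; thus $\vec{\xi}_D = \lambda \vec{z}_D$ for some $\lambda \in \R$, and subtracting the two normaliser equations $\vec{\mu}_D^\top \vec{\zeta}_D = 1 = \vec{\mu}_D^\top \vec{z}_D$ gives $\vec{\mu}_D^\top \vec{\xi}_D = 0$, whence $\lambda = 0$.

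To propagate agreement to the remaining set $C$ of states not in any recurrent SCC, I would expand $\vec{\xi}_C = B_{C,C}\vec{\xi}_C + B_{C,R}\vec{\xi}_R$ where $R = (Q \times S) \setminus C$; since $\vec{\xi}_R = \vec{0}$ from the previous step, this simplifies to $(I - B_{C,C})\vec{\xi}_C = \vec{0}$. Every SCC of $B_{C,C}$ is by construction a non-recurrent SCC of $B$ and therefore has spectral radius strictly less than $1$; Theorem~\ref{lem:matrixprop}(3) then yields $\rho(B_{C,C}) < 1$, making $I - B_{C,C}$ invertible and forcing $\vec{\xi}_C = \vec{0}$.

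The main obstacle is the recurrent-SCC step: everything hinges on applying Perron--Frobenius to each $B_{D,D}$, and in particular on producing a strictly positive dominant eigenvector for $D \in \mathcal{D}_+$ so that the scalar $\lambda$ is genuinely well-defined. Without the strict positivity $\vec{z}_D > 0$ from Lemma~\ref{lem:cavlem8.2}, one would only know that some nonnegative eigenvector exists but not that $\vec{z}_D$ spans the one-dimensional eigenspace. The remaining ingredients, including the spectral bound on $B_{C,C}$, are routine consequences of Theorem~\ref{lem:matrixprop} and the fact that non-recurrent SCCs of $B$ have spectral radius below $1$ by definition.
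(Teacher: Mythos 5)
Your proposal is correct and follows essentially the same route as the paper: existence from Lemmas~\ref{lem:syseq} and~\ref{lem:cavlem8.2} plus the normaliser property, then uniqueness by handling the recurrent SCCs via Lemma~\ref{lem:cavlem8.1} and the one-dimensional Perron eigenspace of Theorem~\ref{lem:perron}, and finally propagating to the remaining states using $\rho(B_{\bar D,\bar D})<1$. The only cosmetic difference is that you derive the spectral bound on the non-recurrent block from Theorem~\ref{lem:matrixprop}(3) and the definition of recurrence, whereas the paper cites Lemma~\ref{lem:nonrecurreach}; both are valid.
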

\begin{proof}
This proof is equivalent to the one in~\cite{cav16full} and is included for completeness.
The vector $\vec{z}$ solves the system of equations by Lemma~\ref{lem:syseq}, Lemma~\ref{lem:cavlem8.2}, and the definition of a $D$-normalizer.
To show uniqueness, let $\vec{x}$ solve the system of equations. We show that $\vec{x}=\vec{z}$.
\begin{itemize}
	\item Let $D \in \mathcal{D}_0$. Then $\vec{x}_D = \vec{0} = \vec{z}_D$.
	\item Let $D \in \mathcal{D}_+$. By Lemma~\ref{lem:cavlem8.1}, we have $\vec{x}_D = B_{D,D}\vec{x}_D$ and $\vec{z}_D = B_{D,D}\vec{z}_D$. By Theorem~\ref{lem:perron}, the eigenspace of $B_{D,D}$ is one-dimensional, implying that $\vec{x}_D$ is a scalar multiple of $\vec{z}_D$. We have $\vec{\mu}_D^\top\vec{x}_D = 1 = \vec{\mu}_D^\top\vec{z}_D$, hence $\vec{x}_D = \vec{z}_D$.
	\item Let $D := \bigcup \mathcal{D}_+ \cup \bigcup \mathcal{D}_0$ be the union of all recurrent SCCs, and let $\bar{D} := (Q \times S) \setminus D$. We have $\vec{x}-\vec{z} = B(\vec{x}-\vec{z})$. It follows
\begin{eqnarray*}
\vec{x}_{\bar{D}}-\vec{z}_{\bar{D}} & = & B_{\bar{D},\bar{D}}(\vec{x}_{\bar{D}}-\vec{z}_{\bar{D}}) + B_{\bar{D},D}(\vec{x}_D-\vec{z}_D) \\
& = & B_{\bar{D},\bar{D}}(\vec{x}_{\bar{D}}-\vec{z}_{\bar{D}}) + B_{\bar{D},D}(\vec{0}) \\
& = & B_{\bar{D},\bar{D}}(\vec{x}_{\bar{D}}-\vec{z}_{\bar{D}})
\end{eqnarray*}
		by the previous two items. By Lemma~\ref{lem:nonrecurreach}, $\rho(B_{\bar{D},\bar{D}}) < 1$. Thus, $\vec{x}_{\bar{D}} = \vec{z}_{\bar{D}}$. \qed
\end{itemize}
\end{proof}

This enables us to prove Theorem~\ref{thm:mciba}:
\begin{proof}[of Theorem~\ref{thm:mciba}]
It suffices to calculate $D$-normalisers and solve the system of equations described in Lemma~\ref{lem:cavlem12}. Solving a system of equations can be done in NC, and we can find a $D$-normaliser using~\cite[Lemma 22]{cav16full} in NC, proving the theorem.
\qed
\end{proof}

\begin{corollary}\label{cor:polylogspacekaba}
Let $\mathcal{M}$ be an MC and let $\kA$ be a $k$-ABA with $n$ states, where $k = O(\log n)$. The probability that a random word sampled from $\mathcal{M}$ is in $L_\A$ can be computed in POLYLOGSPACE.
\end{corollary}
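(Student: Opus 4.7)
The plan is to compose the translation from Theorem~\ref{thm:polylogspacetrans} with the model-checking procedure from Theorem~\ref{thm:mciba}. First I would invoke Theorem~\ref{thm:polylogspacetrans} to obtain an IBA $\kdis(\kA)$ equivalent to $\kA$, which has at most $(2n)^{O(\log n)} = 2^{O((\log n)^2)}$ states and which is computable by a POLYLOGSPACE transducer. Next I would feed $\kdis(\kA)$ and $\mathcal{M}$ into the NC model-checking procedure of Theorem~\ref{thm:mciba} to obtain the desired probability.

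The only thing that needs to be checked is that this composition stays within POLYLOGSPACE in the size of the original input $(\mathcal{M},\kA)$. Write $N$ for the (explicit) size of the product of $\kdis(\kA)$ and $\mathcal{M}$; we have $\log N = O((\log n)^2)$, which is polylogarithmic in $n$. Since $\mathit{NC} \subseteq \mathit{DSPACE}((\log N)^{O(1)})$, the procedure of Theorem~\ref{thm:mciba} can be simulated in $O((\log N)^{O(1)}) = (\log n)^{O(1)}$ space, provided the simulator has access to the entries of the input.

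The delicate point — and the only substantive obstacle — is that $\kdis(\kA)$ cannot be materialised on the work tape, since it has super-polynomially many states. I would handle this in the standard way: the simulator never stores $\kdis(\kA)$ explicitly, and instead, whenever the NC-simulation needs a bit describing a state of $\kdis(\kA)$, an initial weight, a transition weight $\Delta''(a)_{\vec r,\vec r'}$, or an acceptance flag, it invokes the POLYLOGSPACE transducer of Theorem~\ref{thm:polylogspacetrans} on demand to recompute that bit. Because the transducer uses only $(\log n)^{O(1)}$ additional work space per query, and because the outer NC simulation already runs in $(\log n)^{O(1)}$ space, the combined procedure still runs in POLYLOGSPACE in the size of $(\mathcal{M},\kA)$. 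This yields the claimed bound.
\qed
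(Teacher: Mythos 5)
Your proposal is correct and follows the paper's own route: the paper's proof is simply ``combine Theorem~\ref{thm:polylogspacetrans} with Theorem~\ref{thm:mciba}, using that NC is contained in POLYLOGSPACE.'' You additionally spell out the on-demand recomputation needed to compose the POLYLOGSPACE transducer with the space-bounded simulation of the NC procedure (since $\kdis(\kA)$ has superpolynomially many states and cannot be materialised), a standard but genuine detail that the paper's two-line proof leaves implicit.
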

\begin{proof}
This combines~\Cref{thm:polylogspacetrans,thm:mciba}. Since NC is contained in POLYLOGSPACE~\cite{Pap94}, this proves the corollary.
\qed
\end{proof}

PSPACE-hardness of model checking $k$-ABAs against Markov chains can be shown easily from the \emph{finite intersection problem} introduced by D.~Kozen~\cite{kozen1977}.
This problem asks, given a set of deterministic automata, whether the intersection of the languages is empty. By considering the union of the complements of the automata as a $k$-ABA we see that the probabilistic universality problem for $k$-ABAs is PSPACE-hard.
Membership in PSPACE is shown by translating the $k$-ABA into an equivalent IBA and model checking this IBA against the Markov chain.
\begin{theorem}
The model checking problem for $k$-ABAs is PSPACE-complete.
\end{theorem}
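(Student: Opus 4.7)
The plan is to establish both membership in PSPACE and PSPACE-hardness; the hardness is essentially handed to us by the paper, whereas the upper bound requires combining the earlier transducer construction with the model-checking procedure in a space-efficient way.

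For the upper bound, I would compose \Cref{thm:pspacetrans} with \Cref{thm:mciba}. Given a $k$-ABA $\kA$ with $n$ states, \Cref{thm:pspacetrans} gives a PSPACE transducer computing the equivalent IBA $\kdis(\kA)$, which has at most $k^{2n} = 2^{O(n\log n)}$ states (and by \Cref{thm:kaequiv} accepts the same language as $\kA$). I would then feed $\kdis(\kA)$ together with the Markov chain~$\mathcal{M}$ into the NC model-checking procedure of \Cref{thm:mciba}. Since log-space uniform NC is contained in $\mathrm{DSPACE}((\log N)^{O(1)})$, the procedure uses polylog space in the size $N$ of $\kdis(\kA)$, i.e.\ $(\log 2^{O(n\log n)})^{O(1)} = \mathrm{poly}(n)$ space. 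The IBA is never materialised in full: whenever the NC simulation queries a transition weight or an initial weight of $\kdis(\kA)$, I recompute the relevant entry using the PSPACE transducer, which itself only needs polynomial work space in~$n$. The two polynomial budgets interleave on the same work tape, yielding a PSPACE algorithm.

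For PSPACE-hardness, I would reduce from Kozen's finite intersection problem. Given DFAs $A_1,\ldots,A_m$ over~$\Sigma$, adjoin a fresh separator $\#$ and let $\mathcal{M}$ be any Markov chain over $\Sigma\cup\{\#\}$ that assigns positive probability to every letter. For each $i$, build a deterministic Büchi automaton $B_i$ that simulates $A_i$ on the prefix before the first~$\#$, and upon reading $\#$ moves to an accepting sink iff $A_i$ was in a rejecting state (otherwise to a rejecting sink). Take $\kA := \bigsqcup_{i=1}^m B_i$ with initial state-set the disjoint union of the initial states of the $B_i$; each word has at most $m$ accepting runs, so $\kA$ is an $m$-ABA. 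Then
\[
L(\kA) \ = \ \bigl\{\,w\#v \;:\; w\in\Sigma^*\setminus\textstyle\bigcap_i L(A_i),\ v\in(\Sigma\cup\{\#\})^\omega\,\bigr\}
\]
together with some words containing no~$\#$. I would finish by observing that $\mathcal{M}$-almost every word contains a~$\#$, so $\Pr_\mathcal{M}(L(\kA))=1$ exactly when $\bigcap_i L(A_i)=\emptyset$: if the intersection is empty every word with some~$\#$ is accepted, while if $u\in\bigcap_i L(A_i)$ the cylinder $u\#(\Sigma\cup\{\#\})^\omega$ has positive measure and is disjoint from $L(\kA)$.

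I expect the main obstacle to be the upper-bound composition, not the hardness: because $\kdis(\kA)$ is exponentially larger than~$\kA$, a naive composition of \Cref{thm:pspacetrans,thm:mciba} would give only EXPSPACE. The point that makes PSPACE work is the inclusion $\mathrm{NC}\subseteq\mathrm{POLYLOGSPACE}$ applied to an \emph{implicitly represented} input, together with the observation that $\mathrm{polylog}(2^{O(n\log n)})=\mathrm{poly}(n)$. Making the on-the-fly simulation precise (e.g.\ verifying that every oracle call the NC circuit makes to $\kdis(\kA)$ can indeed be answered by the PSPACE transducer using the same polynomial-size work tape, reused across calls) is the one piece that requires care.
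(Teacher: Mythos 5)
Your proposal is correct and follows essentially the same route as the paper: PSPACE-hardness via Kozen's finite intersection problem by taking the union of the complements of the given deterministic automata as a $k$-ABA, and membership by translating the $k$-ABA into an equivalent IBA with the PSPACE transducer of \Cref{thm:pspacetrans} and running the model-checking procedure of \Cref{thm:mciba} on it. You spell out the two points the paper leaves implicit --- the $\#$-separator gadget for the reduction and the on-the-fly composition using $\mathrm{NC}\subseteq\mathrm{POLYLOGSPACE}$ on the implicitly represented, singly-exponential-size IBA --- and both are filled in correctly.
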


\begin{corollary}
Let $\mathcal{M}$ be an MC and let $\A$ be an NBA with $n$ states. The probability that a word in $L_\A$ is accepted by $\mathcal{M}$ can be computed in PSPACE.
\end{corollary}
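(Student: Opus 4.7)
The plan is to compose three ingredients already in hand: the partial disambiguation of Löding~\cite{loding18}, our $k$-ABA-to-IBA translation (\Cref{thm:polylogspacetrans}), and the NC model-checker for IBAs (\Cref{thm:mciba}). Each is PSPACE-bounded when measured in the original NBA size $n$; the delicate point will be the composition, since the intermediate automata are exponentially large and cannot be materialised on the work tape.

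First I would invoke~\cite{loding18} to translate $\A$ into an equivalent NBA $\A'$ with at most $3^n$ states and ambiguity at most $n$. A state of $\A'$ together with its outgoing transitions admits a representation of polynomial size in $n$, so the transition table of $\A'$ can be queried on demand by a PSPACE transducer. Since $n = O(\log |\A'|)$ for this output, I would then apply \Cref{thm:polylogspacetrans} to $\A'$ to obtain an IBA $\kdis(\A')$ of size at most $(2 \cdot 3^n)^{O(n)} = 2^{O(n^2)}$. The theorem provides a transducer computing transitions of $\kdis(\A')$ using polylogarithmic space in $|\A'|$, i.e., polynomial space in $n$. Composing with the on-demand description of $\A'$, each transition of $\kdis(\A')$ can be recomputed in $\mathrm{poly}(n)$ space.

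Finally I would feed the implicitly-defined $\kdis(\A')$ together with $\mathcal{M}$ into \Cref{thm:mciba}. That algorithm is in NC and therefore runs in polylogarithmic space in the size of its input IBA; on an IBA of size $2^{O(n^2)}$ this is polynomial in $n$. Whenever the model-checker queries a state or a transition of $\kdis(\A')$, the composed PSPACE transducer above recomputes it, again in $\mathrm{poly}(n)$ space. The whole procedure therefore uses polynomial space in $n + |\mathcal{M}|$, as required.

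The main obstacle is the space accounting for these compositions: both $\A'$ and $\kdis(\A')$ are exponentially large in $n$, so neither may be stored explicitly, and one has to argue that the NC model-checker, viewed as a polylogarithmic-space algorithm, composes cleanly with the PSPACE on-the-fly constructor of $\kdis(\A')$. This is a standard argument relying on NC $\subseteq$ POLYLOGSPACE and closure of PSPACE transducers under composition relative to the original input size, but it is the only non-routine step; once it is in place, the bound follows immediately from the theorems cited.
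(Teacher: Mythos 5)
Your proposal follows essentially the same route as the paper: apply the partial disambiguation of~\cite{loding18} to get a $k$-ABA with at most $3^n$ states and ambiguity $k=n=O(\log 3^n)$, then invoke the combination of \Cref{thm:polylogspacetrans} and \Cref{thm:mciba} (packaged in the paper as \Cref{cor:polylogspacekaba}) to get a POLYLOG$(2^n)=$ POLY$(n)$ space bound. The only difference is that you spell out the on-the-fly composition of the space-bounded transducers explicitly, which the paper leaves implicit; this is a welcome clarification but not a different argument.
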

\begin{proof}
By Löding et al.~\cite{loding18}, $\A$ can be converted in a $k$-ABA $\kA$ with at most $3^n$ states where $k=n$. Hence, using Corollary~\ref{cor:polylogspacekaba}, we can calculate $\Pr(L_{\kA}) = \Pr(L_\A)$ in POLYLOG($2^n$) = POLY($n$) space.
\qed
\end{proof} 

\subsubsection*{Acknowledgements}
Mathieu Kubik, Tomasz Ponitka, and Joao Paulo Costalonga contributed ideas to the proofs of Lemmas \ref{lem-Z2-R}, \ref{lem-R-Q}, and \ref{lem:wrarcombi}, respectively.

\bibliographystyle{splncs04}
\bibliography{lit}

\begin{thebibliography}{10}
\providecommand{\url}[1]{\texttt{#1}}
\providecommand{\urlprefix}{URL }
\providecommand{\doi}[1]{https://doi.org/#1}

\bibitem{almagor20}
Almagor, S., Boker, U., Kupferman, O.: What's decidable about weighted
  automata? Information and Computation  (2020)

\bibitem{angluin2020}
Angluin, D., Antonopoulos, T., Fisman, D.: {Strongly Unambiguous B{\"u}chi
  Automata Are Polynomially Predictable With Membership Queries}. In: 28th
  EACSL Annual Conference on Computer Science Logic (CSL 2020) (2020)

\bibitem{cav16full}
Baier, C., Kiefer, S., Klein, J., Kl\"{u}ppelholz, S., M\"{u}ller, D., Worrell,
  J.: Markov chains and unambiguous {B\"{u}chi} automata (extended version of a
  {CAV}'16 paper). arXiv:1605.00950 (2016),
  \url{http://arxiv.org/abs/1605.00950}

\bibitem{book:BermanP}
Berman, A., Plemmons, R.J.: Nonnegative matrices in the mathematical sciences.
  Academic Press (1979)

\bibitem{BerstelReutenauer88}
Berstel, J., Reutenauer, C.: Rational Series and Their Languages. Springer
  (1988)

\bibitem{carlyle1971realizations}
Carlyle, J.W., Paz, A.: Realizations by stochastic finite automata. Journal of
  Computer and System Sciences  \textbf{5}(1),  26--40 (1971)

\bibitem{Colcombet15}
Colcombet, T.: Unambiguity in automata theory. In: 17th International Workshop
  on Descriptional Complexity of Formal Systems (DCFS) (2015)

\bibitem{droste09}
Droste, M., Kuich, W., Vogler, H.: Handbook of Weighted Automata. Springer
  Publishing Company, Incorporated, 1st edn. (2009)

\bibitem{droste11}
Droste, M., Meinecke, I.: Weighted automata and regular expressions over
  valuation monoids. International Journal of Foundations of Computer Science
  (2011)

\bibitem{fijalkow17}
Fijalkow, N.: Undecidability results for probabilistic automata. ACM SIGLOG
  News  \textbf{4}(4),  10--17 (2017)

\bibitem{filiot14}
Filiot, E., Gentilini, R., Raskin, J.F.: {Finite-Valued Weighted Automata}. In:
  34th International Conference on Foundation of Software Technology and
  Theoretical Computer Science (FSTTCS 2014) (2014)

\bibitem{fliess}
Fliess, M.: Matrices de {H}ankel. Journal de Math{\' e}matiques Pures et
  Appliqu{\' e}es  \textbf{53},  197--222 (1974)

\bibitem{golomb1967}
Golomb, S.W., Goldstein, R.M., Hales, A.W., Welch, L.R.: Shift register
  sequences. Holden-Day series in information systems, Holden-Day, San
  Francisco (1967)

\bibitem{HolzerK03}
Holzer, M., Kutrib, M.: Nondeterministic descriptional complexity of regular
  languages. International Journal of Foundations of Computer Science  (2003)

\bibitem{indz2021}
Indzhev, E., Kiefer, S.: On complementing unambiguous automata and graphs with
  many cliques and cocliques. arXiv preprint arXiv:2105.07470  (2021)

\bibitem{JirasekJS18}
Jir{\'{a}}sek, J., Jir{\'{a}}skov{\'{a}}, G., Sebej, J.: Operations on
  unambiguous finite automata. International Journal of Foundations of Computer
  Science  \textbf{29}(5),  861--876 (2018)

\bibitem{KarmarkarJC13}
Karmarkar, H., Joglekar, M., Chakraborty, S.: Improved upper and lower bounds
  for {B}{\"{u}}chi disambiguation. In: Automated Technology for Verification
  and Analysis - 11th International Symposium, {ATVA} 2013, Proceedings (2013)

\bibitem{kiefer2019}
Kiefer, S., Widdershoven, C.: Efficient analysis of unambiguous automata using
  matrix semigroup techniques. arXiv preprint arXiv:1906.10093  (2019)

\bibitem{kozen1977}
Kozen, D.: Lower bounds for natural proof systems. In: 18th Annual Symposium on
  Foundations of Computer Science (sfcs 1977). pp. 254--266. IEEE (1977)

\bibitem{Leung98}
Leung, H.: Separating exponentially ambiguous finite automata from polynomially
  ambiguous finite automata. {SIAM} Journal of Computing  \textbf{27}(4),
  1073--1082 (1998)

\bibitem{Leung05}
Leung, H.: Descriptional complexity of {NFA} of different ambiguity.
  International Journal of Foundations of Computer Science  \textbf{16}(5),
  975--984 (2005)

\bibitem{loding18}
L{\"o}ding, C., Pirogov, A.: On finitely ambiguous {B{\"u}chi} automata. In:
  International Conference on Developments in Language Theory. pp. 503--515.
  Springer (2018)

\bibitem{Pap94}
Papadimitriou, C.M.: {Computational complexity}. Addison-Wesley, Reading,
  Massachusetts (1994)

\bibitem{paz14}
Paz, A.: Introduction to probabilistic automata. Academic Press (2014)

\bibitem{rabin63}
Rabin, M.O.: Probabilistic automata. Information and control  \textbf{6}(3),
  230--245 (1963)

\bibitem{Raskin18}
Raskin, M.: A superpolynomial lower bound for the size of non-deterministic
  complement of an unambiguous automaton. In: 45th International Colloquium on
  Automata, Languages, and Programming (ICALP 2018) (2018)

\bibitem{safra1988}
Safra, S.: On the complexity of omega-automata. In: Proc. 29th IEEE Symp.
  Found. of Comp. Sci. pp. 319--327 (1988)

\bibitem{Schmidt78}
Schmidt, E.: Succinctness of descriptions of context-free, regular and finite
  languages. Ph.D. thesis, Cornell University, Ithaca, NY (1978)

\bibitem{schut61}
Sch{\"u}tzenberger, M.: On the definition of a family of automata. Information
  and Control  \textbf{4}(2),  245--270 (1961)

\bibitem{Tzeng96}
Tzeng, W.G.: On path equivalence of nondeterministic finite automata.
  Information Processing Letters  \textbf{58}(1),  43--46 (1996)

\bibitem{weber1991}
Weber, A., Seidl, H.: On the degree of ambiguity of finite automata.
  Theoretical Computer Science  \textbf{88}(2),  325--349 (1991)

\end{thebibliography}


\end{document}